\documentclass[11pt]{article}
\usepackage[T1]{fontenc}
\usepackage{lmodern,microtype}
\usepackage[left=1.15in,right=1.15in,top=1.2in,bottom=1.2in]{geometry}
\usepackage{amsmath,amssymb,amsthm,mathtools}
\usepackage{xcolor,tikz}
\usepackage{booktabs,tabularx,needspace}
\usetikzlibrary{decorations.pathreplacing}
\usepackage[section]{placeins}
\usepackage[most,breakable]{tcolorbox}
\usepackage{authblk}

\usepackage[colorlinks=true,linkcolor=blue!45!black,citecolor=blue!45!black,urlcolor=blue!45!black]{hyperref}
\newtheorem{fact}{Fact}
\newtheorem{lemma}{Lemma}
\newtheorem{theorem}{Theorem}
\newtheorem{corollary}{Corollary}
\theoremstyle{remark}

\theoremstyle{plain}

\definecolor{mainresultaccent}{HTML}{176B87}
\definecolor{technicalaccent}{HTML}{657A45}
\tcbset{resultbox/.style={
 enhanced, unbreakable, boxrule=0.5pt, arc=1.5pt, outer arc=1.5pt,
 left=5.75pt, right=5.75pt, top=4pt, bottom=4pt,
 grow sidewards by=10pt, before skip=8pt, after skip=8pt
}}
\tcolorboxenvironment{theorem}{resultbox,
 colback=mainresultaccent!9!white, colframe=mainresultaccent!38!white}
\tcolorboxenvironment{lemma}{resultbox,
 colback=technicalaccent!6!white, colframe=technicalaccent!32!white}

\newcommand{\Tr}{\operatorname{tr}}
\newcommand{\id}{\operatorname{id}}
\newcommand{\ran}{\operatorname{ran}}
\newcommand{\N}{\mathcal N}
\newcommand{\Ic}{I_{\mathrm c}}
\newcommand{\Enc}{\mathcal E}
\newcommand{\Dec}{\mathcal D}
\newcommand{\I}{\mathbf{1}}
\newcommand{\Hmin}{H_{\min}}
\newcommand{\ket}[1]{\lvert #1\rangle}
\newcommand{\bra}[1]{\langle #1\rvert}
\newcommand{\ketbra}[2]{|#1\rangle\!\langle#2|}
\newcommand{\braket}[2]{\langle#1|#2\rangle }
\newcommand{\op}[1]{\lVert #1\rVert_\infty}
\newcommand{\pn}[1]{\lVert #1\rVert_\pi}
\newcommand{\norm}[1]{\lVert #1\rVert}

\allowdisplaybreaks[1]

\hypersetup{pdftitle={Strong Converses for Quantum Channel Capacities from Blowing-Up Lemmata},pdfauthor={Salman Beigi and Marco Tomamichel}}

\title{Strong Converses for Quantum Channel Capacities from Blowing-Up Lemmata}
\author[1,2]{Salman Beigi}
\author[2,3]{Marco Tomamichel}
\affil[1]{\emph{School of Mathematics, Institute for Research in Fundamental Sciences (IPM), P.O. Box 19395-5746, Tehran, Iran}}
\affil[2]{\emph{Centre for Quantum Technologies, National University of Singapore, Singapore 117543, Singapore}}
\affil[3]{\emph{Department of Electrical and Computer Engineering, National University of Singapore, Singapore 117583, Singapore}}
\date{29 September 2026}

\begin{document}
\maketitle

\begin{abstract}
A quantum channel can transmit quantum states, classical messages, or classical messages that remain secret from its environment. We prove exponential strong converses for all three tasks over arbitrary finite-dimensional memoryless quantum channels, at their respective regularized capacities and allowing arbitrary entangled inputs across channel uses and joint decoding. Above quantum or classical capacity, entanglement fidelity or decoding success probability, respectively, decays exponentially with the number of channel uses. Above private capacity, fidelity to an ideal shared secret key decays exponentially; in particular, vanishing secrecy error forces decoding success to vanish. The proofs combine one-shot quantum blowing-up lemmas, which convert codes with low success into reliable codes with a controlled loss in code size, with a low-degree polynomial approximation of the channel's Stinespring image projector. The loss is governed by the projective tensor norm of the approximation across the receiver--environment bipartition. For memoryless channels, the polynomial construction gives exponential accuracy while keeping the rate loss arbitrarily small, turning the corresponding weak converses into strong converses. 
\end{abstract}

\section{Introduction}\label{sec:introduction}
A quantum channel can transmit quantum states, ordinary classical messages, or classical messages that remain secret from its environment. Each task has a capacity: the largest rate achievable with errors tending to zero. A weak converse excludes asymptotically perfect communication above capacity, but leaves open whether a nonzero fidelity or probability of success can survive. A strong converse makes capacity a sharp threshold; an exponential strong converse quantifies how rapidly the relevant measure of success vanishes.

In this work, we establish exponential strong converses for all three tasks over arbitrary finite-dimensional memoryless quantum channels. The results hold at their regularized capacities and allow arbitrary entanglement across channel uses and joint decoding. Our common approach combines one-shot quantum blowing-up lemmas, which convert codes with low success into reliable codes with a controlled loss in size, with a low-degree approximation of the channel's Stinespring image projector. For private communication, the conversion controls reliability and secrecy simultaneously, and the exponentially decaying quantity is the fidelity to an ideal shared secret key.

For quantum communication, the sender encodes an unknown quantum state across many uses of the channel, and the receiver applies a joint decoding operation. Successful transmission must preserve not only the state itself but also its entanglement with an inaccessible reference. For an $M_n$-dimensional system transmitted through $n$ independent uses, the rate is $(\log M_n)/n$ qubits per channel use. The quantum capacity $Q(\N)$ is the supremum of rates achievable with entanglement-transmission fidelity tending to one.

The Lloyd--Shor--Devetak theorem identifies this capacity with regularized coherent
information~\cite{Lloyd,Shor,Devetak}:
\begin{equation}\label{eq:capacityformula}
 Q(\N)=\sup_{n\ge1}\frac1n Q^{(n)}(\N),\qquad
 Q^{(n)}(\N)=\max_{\rho_{A^n}}\Ic(\rho_{A^n},\N^{\otimes n}),
\end{equation}
where $\Ic(\rho,\N)=H(\N(\rho))-H(\N^c(\rho))$, $H$ denotes the von Neumann entropy, and $\N^c$ is a
complementary channel. Regularization permits input states entangled across
arbitrarily many channel uses. The corresponding weak converse~\cite{Devetak} excludes fidelity tending to one above $Q(\N)$; our exponential strong converse forces it to vanish exponentially.

For classical discrete memoryless channels, strong converses make capacity
a sharp threshold for reliable communication. Arimoto's converse gives
exponential decay of the probability of correct decoding whenever the rate
exceeds capacity~\cite{Arimoto}. Ogawa and Nagaoka~\cite{OgawaNagaoka} and Winter~\cite{WinterCoding} also established the strong converse for the transmission of classical information through classical--quantum channels. These results, however, cover only
product-state coding through quantum channels. Allowing arbitrary entangled
codewords is a distinct problem; strong converses in that setting have been
proved for some classes of channels, including the covariant examples
treated by K\"onig and Wehner~\cite{KoenigWehner}.
With unlimited shared entanglement, Gupta and Wilde proved an exponential
strong converse for the entanglement-assisted classical capacity of every
finite-dimensional quantum channel~\cite{GuptaWilde}.

For unassisted quantum communication, the strong-converse problem at the
regularized coherent-information capacity has remained open in general.
A major step was the \emph{pretty strong} converse of Morgan and
Winter~\cite{MW} for degradable channels. They showed that, above
capacity, the asymptotic fidelity is bounded away from one, without proving
that it vanishes. They also reduced the full strong converse for degradable
channels to a corresponding statement for symmetric zero-capacity channels.
Tomamichel, Wilde, and Winter proved an exponential decay rate above the channel's
Rains information, giving an exponential strong converse at capacity for
generalized dephasing channels~\cite{TWW}. More recently, Kondra, Brinster, Kampermann, Bru{\ss}, and
Wyderka~\cite{Kondra} proved exponential strong converses for every
finite-dimensional degradable and antidegradable channel. Their
antidegradable-channel bound, together with the Morgan--Winter reduction,
establishes the degradable case. These results settle the problem for two
central classes of channels, although a general channel need have neither property.

Here we establish an exponential strong converse for arbitrary
finite-dimensional memoryless quantum channels. We formulate the result
for entanglement-generation codes: an arbitrary encoded state $\rho_{RA^n}$, with $\dim R=M_n$, is sent
through $\N^{\otimes n}$ on $A^n$, followed by a decoder to an
$M_n$-dimensional target space $\widehat L$. The fidelity $F_n$ is the overlap with
the normalized maximally entangled state on $R\widehat L$, in the
squared-fidelity convention. This includes entanglement-transmission codes,
and allows mixed encoded states and arbitrary joint decoders.

\begin{theorem}[Exponential strong converse]\label{thm:strongconverse}
For every finite-dimensional channel $\N$ and every rate gap $\gamma>0$,
there is $\alpha_\gamma>0$ such that every sequence of $(n,M_n)$ codes with
\begin{equation}\label{eq:main-rate}
 \frac1n\log M_n\ge Q(\N)+\gamma
\end{equation}
satisfies $F_n<2^{-\alpha_\gamma n}$ for all sufficiently large $n$.
\end{theorem}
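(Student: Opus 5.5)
The proof is by contradiction, amplifying a code with small fidelity into a reliable code of only slightly smaller rate, to which the weak converse then applies. Suppose that along some subsequence there are $(n,M_n)$ codes satisfying \eqref{eq:main-rate} with $F_n\ge 2^{-\alpha n}$, where $\alpha>0$ is a small constant fixed later in terms of $\gamma$. The goal is to produce, from each such code, an entanglement-generation code for $\N^{\otimes n}$ (or for a slightly larger block) with fidelity at least $1-\delta$ for some fixed $\delta$. Its rate should be at least $\frac1n\log M_n-\gamma/2\ge Q(\N)+\gamma/2$. Since the quantity $\frac1n Q^{(n)}$ in \eqref{eq:capacityformula} is bounded by $Q(\N)$, a one-shot continuity (Fannes-type) version of Devetak's weak converse then gives $\frac1n\log M_n'\le Q(\N)+O(\delta)+o(1)$. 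Choosing $\delta$ small relative to $\gamma$ yields the contradiction.

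The conversion itself is the quantum blowing-up step. Let $V:A^n\to B^nE^n$ be the Stinespring isometry of $\N^{\otimes n}$, let $\Pi=VV^\dagger$ be its image projector, and let $\Phi$ be the maximally entangled target on $R\widehat L$. Through Uhlmann's theorem, success of the code is an overlap $\langle \psi|P_{R B^n}\otimes \I_{E^n}|\psi\rangle\ge 2^{-\alpha n}$. Here $\psi$ purifies the encoded state through $V$, and $P$ is the decoder's "success projector", obtained as the Stinespring pullback of $\Phi$. The blowing-up lemma would enlarge $P$ to a projector $P'$ that is Hamming-type thickened on $B^n$ and captures $\psi$ with probability $1-\delta$. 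The thickening should cost only $2^{n\eta}$ in dimension, with $\eta\to0$ as $\alpha\to0$; this is the quantum analogue of Ahlswede--G\'acs--K\"orner. Thickening on the receiver alone, however, ignores the constraint that the state lies in $\ran\Pi$, i.e.\ that the environment is correlated. To keep the reliable code a valid code for $\N^{\otimes n}$, I would replace $\Pi$ by a low-degree polynomial $p(\cdot)$ in local operators. Concretely, I would approximate $\Pi=\lim_k (\tfrac{1}{n}\sum_i \Pi_i)^k$ by a Chebyshev-type polynomial $T$ of degree $\approx\sqrt{n\log(1/\epsilon)}$ or $O(n\eta')$ in the average of single-site projectors $\Pi_i$. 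This achieves accuracy $2^{-cn}$ in operator norm. Then I would bound the projective tensor norm $\pn{T}$ of $T$ across $B^n:E^n$ by $2^{n\eta'}$. The blowing-up loss is paid once per product term, so the rate loss is $\eta+\eta'$, which can be made below $\gamma/2$. The exponential accuracy is what lets the original $2^{-\alpha n}$ success survive the approximation.

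The ordering is: (i) the one-shot blowing-up lemma, relating success $\ge 2^{-\alpha n}$ to a reliable code of size $M_n 2^{-n\eta(\alpha)}$ for a channel whose isometry projector is replaced by an operator $T$ with a controlled loss factor $\pn{T}$; (ii) the polynomial construction with its norm and accuracy estimates; (iii) recombination and the weak converse. I expect step (ii) to be the main obstacle, namely keeping the projective tensor norm subexponential while the operator-norm error is exponentially small. A naive expansion of a degree-$d$ polynomial in $\sum_i\Pi_i$ produces $\binom{n}{\le d}$ terms with large coefficients. I would first try degree $d=\epsilon n$ with Chebyshev coefficients bounded by $2^{O(d)}$, and use that each $\Pi_i$ has $\pn{\Pi_i}\le d_A$. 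This gives $\pn{T}\le 2^{O(\epsilon n\log d_A)}$, while a spectral gap of $1/n$ in the average yields accuracy $e^{-\Omega(d/\sqrt n)}$. That accuracy is not exponential, so instead I would exploit the exact identity $\Pi=\prod_i\Pi_i$ with typical-subspace truncation to obtain exponential accuracy at linear degree.
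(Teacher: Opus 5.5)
Your outer argument matches the paper: convert a code with $F_n\ge 2^{-\alpha n}$ into a reliable code of nearly the same rate, then invoke the weak converse. However, the step you flag as the main obstacle is left unresolved, and it is where the proof lives.

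A Chebyshev polynomial in $\frac1n\sum_i\Pi_i$ uses only the spectral gap $1/n$. Its error is of order $e^{-2d/\sqrt n}$, i.e.\ $e^{-\Theta(\sqrt n)}$ at degree $d=\theta n$. The conversion needs $\op{\widetilde\Pi-\Pi}\lesssim\sqrt F$, which is $2^{-\Omega(n)}$. (Its values at the integers $0,\dots,n$ are bounded by $1$, so its projective norm is in fact controlled. It would therefore yield $F_n\le 2^{-\Omega(\sqrt n)}$: a strong converse, but not an exponential one.)

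Your fallback does not repair this. Write $\Delta_i=\I-\Pi_i$ and truncate $\prod_i(\I-\Delta_i)=\sum_S(-1)^{|S|}\prod_{i\in S}\Delta_i$ to $|S|\le D$. On a joint eigenvector with $k>D$ active defects, the truncation acts as $\sum_{s\le D}(-1)^s\binom ks=(-1)^D\binom{k-1}{D}$, which is exponentially large. The approximation is applied to $Q\ket\psi$, with $Q$ the decoder test pulled back to $RB^n$. That vector may carry its weight outside the channel image at any $k$, so there is no typical subspace to truncate to.

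The missing idea is that $L=\sum_i\Delta_i$ has spectrum exactly $\{0,\dots,n\}$, with $\Pi^{\otimes n}$ its kernel projector. One therefore only needs a polynomial equal to $1$ at $0$ and small at the \emph{integers} $1,\dots,n$, i.e.\ an approximation of NOR on the Boolean cube. Sherstov's polynomial of degree $D$ has error $2^{-cD^2/n}$, hence $2^{-c\theta^2 n}$ at $D=\theta n$. Your guess of degree $\sqrt{n\log(1/\epsilon)}$ is exactly this discrete scaling, not Chebyshev's. The polynomial also satisfies $|p(k)|\le 1$ on $\{0,\dots,n\}$, which bounds the Newton coefficients by $2^s$. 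The identity $\binom Ls=\sum_{|S|=s}\prod_{i\in S}\Delta_i$ then gives $\pn{\widetilde\Pi_D}\le(2\kappa)^D2^{nh(D/n)}$ with $\kappa=\min\{(\dim B)^2,(\dim E)^2\}$.

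The one-shot conversion is also not actually constructed. Thickening the decoder's success projector on $B^n$ gives at best a test that accepts $\psi$ with high probability. It does not produce a decoder with high entanglement fidelity, which requires decoupling $R$ from $E^n$. Likewise, a ``reliable code for a channel whose isometry projector is replaced by $T$'' is not a code for $\N^{\otimes n}$.

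In the paper, $\widetilde\Pi$ never defines a channel; it only produces a smoothing state. Let $Q=\Dec^\dagger(\Phi)$, which satisfies $0\le Q\le\I$ and $\Tr_R Q=\I/M$. Take $\ket\psi$ a top eigenvector of $\Pi Q\Pi$: its eigenvalue is $\lambda\ge F$, and $\psi$ is a genuine channel output. Set $\ket{\widetilde\psi}\propto\widetilde\Pi Q\ket\psi$.
\begin{itemize}
\item Because $\Pi\widetilde\Pi=\widetilde\Pi\Pi=\Pi$, the deviation of $\widetilde\Pi Q\ket\psi$ from $\lambda\ket\psi$ is orthogonal to $\ket\psi$. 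Hence $P(\psi,\widetilde\psi)\le\op{\widetilde\Pi-\Pi}/\sqrt\lambda$.
\item Write $\widetilde\Pi=\Gamma\sum_i\beta_iX_i\otimes Y_i$ with contractions $X_i,Y_i$. Together with $\Tr_RQ=\I/M$, the marginal bounds give $\Hmin(R|E^n)_{\widetilde\psi}\ge\log M+\log\lambda-2\log\Gamma$.
\end{itemize}
Apply one-shot decoupling achievability in terms of smooth min-entropy to the input obtained by pulling $\ket\psi$ back through the Stinespring isometry. This yields a code of fidelity $1-\varepsilon^2$ with $\log M'\ge\log M+\log F-2\log\Gamma-O(\log(1/\varepsilon))$. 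That is the mechanism by which the projective norm, rather than a Hamming neighbourhood, governs the rate loss. Your proposal asserts this loss but gives no way to realize it.
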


The proof has two main ingredients: a fully quantum blowing-up lemma and a
polynomial approximation of the orthogonal projection on the image of the Stinespring dilation of the channel. The first turns
an approximation with a controlled projective tensor norm into a conversion
from low-fidelity to high-fidelity codes. The second constructs an approximation
whose accuracy and norm make this conversion useful at rates above the capacity.

Classical blowing up provides a route from weak to strong converses. The lemma
of Ahlswede, G\'acs, and K\"orner~\cite{AGK}, with an information-theoretic proof
due to Marton~\cite{Marton}, enlarges events of nonnegligible probability under a product distribution
to a neighbourhood that has a high probability. 
Our fully quantum blowing-up lemma, stated as a one-shot code-conversion theorem in
Section~\ref{sec:blowingup}, constructs a high-fidelity code from a low-fidelity
one, with a controlled loss in the dimension of the code. Curiously, the loss is governed by the
projective tensor norm of an approximation to the orthogonal projection on the image of the channel's Stinespring dilation across the receiver--environment bipartition. This norm measures the cost of
expressing the operator as a sum of products of local operators. 

Earlier quantum blowing-up proposals do not directly give this code conversion.
Winter bounded the dimension of subspace neighbourhoods generated by local
operators, while posing probability amplification for classical--quantum outputs
as a conjecture~\cite[Lemma~II.24 and Conjecture~II.25]{WinterThesis}. The bipartite
lemma of Sreekumar, Hirche, Cheng, and Berta~\cite[Lemma~15, corrected version]{SHCB}
concerns product tests and tensor-power states; its alternative state must
commute with rank-one product projectors from suitable eigenbases of the null
marginals. General code outputs and decoder tests need not satisfy these
assumptions. Beigi, Datta, and Rouz\'e smooth tests relative to product states
via reverse hypercontractivity~\cite{BDR}, obtaining strong-converse bounds for
hypothesis testing and classical--quantum coding, but no entanglement-transmission
code conversion. The quantum Marton inequality of De Palma, Marvian, Trevisan,
and Lloyd~\cite[Theorem~2]{QuantumWasserstein} controls Wasserstein distance to a
product state, which alone does not guarantee high entanglement fidelity.
The stabilizer-code argument for Pauli channels~\cite{StabilizerBlowingUp} uses
a classical product distribution of errors, a representation unavailable for
general channels and codes.

The second ingredient supplies the operator approximation required by the
one-shot theorem. For a Stinespring isometry $U:A\to B\otimes E$ of the channel, its image
projector $\Pi=UU^\dagger$ specifies the allowed joint channel outputs.
Section~\ref{sec:approximation} provides an approximation of $\Pi^{\otimes n}$ by a low-degree
polynomial applied to the sum of the local complementary projectors $\I-\Pi_i$. This polynomial itself is constructed using
an approximation of the $\mathrm{NOR}$ function on the Boolean cube~\cite{Sherstov}.
Its degree controls both the error in operator norm and the projective tensor norm across $B^n:E^n$.
Low degree limits how many output pairs each term acts on, which keeps the
latter norm small. Both controls are needed: accuracy permits amplification
from low fidelity, while the projective norm bounds the loss in the number of transmitted
qubits.

Polynomial approximation has also been used in other parts of quantum
information theory. Kondra et al.~\cite{Kondra} approximate the Boolean
$\mathrm{NOR}$ function and use its Fourier coefficients to construct a
weight matrix controlling overlaps between different decoder placements.
Our use is closer to polynomial approximations of spectral projections.
Eldar and Harrow~\cite{EldarHarrow} use Chebyshev polynomials to establish
expansion properties of distributions generated by shallow quantum circuits,
which enter their results on local Hamiltonians with hard-to-approximate
ground states. Arad, Kitaev, Landau, and Vazirani~\cite{Arad}
construct approximate ground-space projectors from Chebyshev polynomials
of local Hamiltonians, balancing spectral suppression against entanglement
growth. The common principle is that low degree controls how an operator
acts across registers while allowing a sharp spectral approximation.
In the present work, it controls the projective tensor norm across the receiver and environment subsystems
and hence the rate loss in the fully quantum blowing-up argument.

Choosing the degree to be a sufficiently small linear fraction of $n$ makes
the approximation error exponentially small and keeps the rate loss below the
gap to capacity. A code whose fidelity decayed too slowly would then yield a
high-fidelity code above capacity, contradicting the existing weak converse.
Section~\ref{sec:strongconverse} gives the proof of
Theorem~\ref{thm:strongconverse}.

The same method also applies to the transmission of classical
information. Section~\ref{sec:classical-capacity} proves a quantum
blowing-up lemma for classical communication: a code with low success probability is converted into
a reliable code with a controlled loss in logarithmic message size.
The one-shot classical coding theorem of Renes and Renner~\cite{RenesRenner}
turns a smooth-entropy bound into this conversion. Combining it with the
polynomial approximation and the weak converse gives an exponential strong
converse at the regularized classical capacity~\cite{Holevo,SW}. In Section~\ref{sec:classical-capacity} we also comment on a geometric
quantum blowing-up statement, originating in the work of Osborne and
Winter~\cite{OW}, and make the connection with classical blowing up explicit.

Explicitly, writing $C(\N)$ for the classical capacity, and $p_{{\rm succ},n}$ for the average decoding success probability for uniformly distributed messages, we prove the following in Section~\ref{sec:classical-capacity}. 

\begin{theorem}[Exponential strong converse for classical capacity]
\label{thm:classical-exponential}
For every finite-dimensional memoryless channel $\N$ and every
$\gamma>0$, there are $\alpha_\gamma>0$ and $n_0$ such that every
$(n,M_n)$ classical code with $n\ge n_0$ and
$\log M_n/n\ge C(\N)+\gamma$ has average success probability
$p_{{\rm succ},n}<2^{-\alpha_\gamma n}$. 
\end{theorem}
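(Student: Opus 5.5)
The proof follows the same template as Theorem~\ref{thm:strongconverse}: a one-shot conversion from low-success codes to reliable codes, instantiated with the polynomial approximation of $\Pi^{\otimes n}$ from Section~\ref{sec:approximation}. The argument is by contradiction. Suppose that for some $\gamma>0$ and every $\alpha>0$ there are infinitely many $n$ with $(n,M_n)$ codes satisfying $\log M_n/n\ge C(\N)+\gamma$ and $p_{{\rm succ},n}\ge 2^{-\alpha n}$. We will convert these into codes of rate at least $C(\N)+\gamma/2$ whose success probability tends to one. This contradicts the weak converse for the classical capacity (Holevo--Schumacher--Westmoreland with Fano), which holds at the regularized capacity for arbitrary entangled inputs.

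\emph{Step 1: one-shot classical blowing-up lemma.} Take a code with codewords $\rho_m$ on $A^n$ and a decoding POVM $\{\Lambda_m\}$, with average success $p$. Purify the outputs through the Stinespring isometry $U^{\otimes n}$, so that each $\rho_m$ yields a state on $B^nE^n$ supported in the range of $\Pi^{\otimes n}$. Let $P$ be an approximation of $\Pi^{\otimes n}$ with $\|P-\Pi^{\otimes n}\|_\infty\le\epsilon$ and projective norm $\pn{P}\le K$ across $B^n\!:\!E^n$. Write $P=\sum_k X_k\otimes Y_k$ with $\sum_k\|X_k\|_\infty\|Y_k\|_\infty\le K$. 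The decoder test $\sum_m \ketbra{m}{m}\otimes\Lambda_m$ is then blown up by conjugating with the local operators $X_k$ on $B^n$. The environment part $Y_k$ is absorbed into a norm bound, and this is exactly how $K$ enters the rate loss. The conclusion I aim for is a smooth-min-entropy statement: the classical--quantum state $\frac1{M}\sum_m\ketbra{m}{m}\otimes\N^{\otimes n}(\rho_m)$ satisfies
\[
H_{\min}^{\delta}(M|B^n)\le \log(K/p)+O(1),
\]
with $\delta$ depending on $\epsilon/p$. I would then feed this into the Renes--Renner one-shot coding theorem. The intended output is a reliable code for messages of size about $\log M-\log(K/p)-O(\log(1/\delta))$ whose error is controlled by $\delta$ and $\epsilon/p$. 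Equivalently, one can phrase this as a guessing probability bound $\Pr[\text{guess}]\ge p/K$ and amplify by hashing or expurgation. I expect this step to be the main obstacle. It requires showing that when $p$ is small, blowing up the decoder projectors via $P$ yields a test of success at least $1-O(\epsilon/p)$ while increasing the effective list size by at most a factor $K$. That in turn needs a careful operator inequality of the form $\Tr[\Pi^{\otimes n}\sigma]\approx\Tr[P\sigma]$ combined with Cauchy--Schwarz over the decomposition of $P$.

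\emph{Step 2: parameters.} Use the degree-$d$ approximation of Section~\ref{sec:approximation}, with $d=\beta n$. For fixed small $\beta$ this gives $\epsilon\le 2^{-c(\beta)n}$ and $\log K\le h(\beta)n+O(\beta n\log\dim)$, which is $o_\beta(1)\cdot n$. Choose $\beta$ small enough that $\log K\le\gamma n/4$. Then choose $\alpha<\min\{c(\beta)/2,\gamma/4\}$, so that $\epsilon/p\le 2^{-(c(\beta)-\alpha)n}\to0$ and $\log(1/p)\le\gamma n/4$.

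\emph{Step 3: conclusion.} The converted codes have rate at least $C(\N)+\gamma-\gamma/4-\gamma/4-o(1)>C(\N)$ and success probability tending to one. This contradicts the weak converse, so every sufficiently large $n$ satisfies $p_{{\rm succ},n}<2^{-\alpha_\gamma n}$ with $\alpha_\gamma=\alpha$. The uniform $n_0$ follows because the contradiction argument applies to any sequence.
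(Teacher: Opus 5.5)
The gap is in Step~1, which carries the whole argument. As formulated, it targets the wrong quantity.

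The Renes--Renner theorem (Fact~\ref{fact:classical-oneshot}) needs two inputs: a lower bound on $H_{\min}(X)$ of the prior, and an \emph{upper} bound on the smooth \emph{max}-entropy $H_{\max}^{\delta}(X|B^n)$. An upper bound on $H_{\min}^{\delta}(M|B^n)$, or its guessing-probability form $p_{\rm guess}\ge p/K$, concerns the opposite tail and carries no blowing-up information. Let $\omega_{MB^n}=\frac1M\sum_m\ketbra{m}{m}\otimes\N^{\otimes n}(\rho_m)$ be the code state. For every $\sigma$, the original decoder already gives $2^{-H_{\min}(M|B^n)_\sigma}\ge\Tr[\sigma\sum_m\ketbra{m}{m}\otimes\Lambda_m]\ge p-\norm{\sigma-\omega}_1$. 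So your bound holds for every code once $\delta\lesssim p$, without using the channel at all.

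Hashing or expurgation cannot amplify such information to reliability with sublinear rate loss. If it could, every weak converse would become a strong converse. That fails for the non-ergodic channel $\tfrac12W_1^{\otimes n}+\tfrac12W_2^{\otimes n}$ at rates between the capacities of $W_1$ and $W_2$.

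Reading your target as a max-entropy bound does not help, because it is false for the original uniform ensemble $\omega$. Take codewords $p\,\rho_m^{\rm good}+(1-p)\rho_0$, where $\{\rho_m^{\rm good}\}$ is a reliable code below capacity and $\rho_0$ is fixed. Every message succeeds with probability about $p$, so expurgation removes nothing. But with weight $1-p$ the output is independent of $X$, so $H_{\max}^{\delta}(X|B^n)\ge\log M-O(1)$ for small $\delta$. This can exceed $\log(K/p)$ by an amount linear in $n$.

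The paper's one-shot lemma (Theorem~\ref{thm:oneshot}) avoids both problems by changing the ensemble:
\begin{itemize}
\item After making the decoder projective via an isometry, each codeword is replaced by a top eigenvector $\ket{\psi_m}$ of $\Pi Q_m\Pi$ with eigenvalue $\lambda_m$. The prior is reweighted to $q_m\propto\lambda_m$, which costs only $H_{\min}(X)_q\ge\log M+\log p$.
\item The vector $\ket{\theta}=\sum_m\frac{\sqrt{q_m}}{\lambda_m}\ket{m}\ket{m}Q_m\ket{\psi_m}$ satisfies $\Pi\ket{\theta}=\ket{\psi}$ exactly and $\norm{\ket{\theta}}^2\le1/p$. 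By orthogonality of the $Q_m$, it also satisfies $\theta_{XX'E}\le\I_X\otimes\theta_{X'E}$.
\item The smoothed state is the normalization of $\widetilde\Pi\ket{\theta}$, within purified distance $\varepsilon/\sqrt p$ of $\ket{\psi}$. This uses $\widetilde\Pi\Pi=\Pi\widetilde\Pi=\Pi$, which your hypothesis on $P$ omits.
\item In the decomposition of $\widetilde\Pi$, the receiver factors are discarded under $\Tr_B$ via Lemma~\ref{lem:test-marginal}(iii),(i). The environment factors survive in a dominating operator of trace at most $\Gamma^2/p$.
\end{itemize}
This gives $H_{\min}(X|X'E)\ge-2\log\Gamma-\log(1/p)$. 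Duality for the pure state then yields $H_{\max}(X|B)\le2\log\Gamma+\log(1/p)$, and hence $\log M'\ge\log M+2\log p-2\log\Gamma-4\log(1/\delta)-16$.

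Your heuristic of conjugating the decoder by the $X_k$ and counting list size supplies none of these steps. Once this one-shot bound is available, your Steps~2--3 go through, provided you halve your choices of $\log K/n$ and $\alpha$ to absorb the doubled losses.
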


Section~\ref{sec:private-capacity} treats private classical communication
at the regularized private-information capacity~\cite{Devetak}.
Using the privacy-test formulation of Wilde, Tomamichel, and
Berta~\cite{WTB}, we control smooth secrecy and reliability entropies.
One-shot private achievability~\cite{RenesRenner} converts these bounds
into a quantum blowing-up lemma for private communication. The resulting
secret-key fidelity decays exponentially above capacity; if the secrecy
error tends to zero, the decoding success probability also tends to zero.

Writing $P(\N)$ for the private capacity, we obtain the following result. The secret-key fidelity $F_{{\rm key},n}$ measures fidelity to a uniform shared key independent of the environment, as defined in~\eqref{eq:private-fidelity}. Privacy-test projectors and their acceptance probabilities are defined in Section~\ref{sec:choose-test}.

\begin{theorem}[Exponential strong converse for private capacity]
\label{thm:private-exponential}
For every finite-dimensional memoryless channel $\N$ and every
$\gamma>0$, there are $\alpha_\gamma>0$ and $n_0$ such that every
$(n,M_n)$ private code with $n\ge n_0$ and
$\log M_n/n\ge P(\N)+\gamma$ has
$F_{{\rm key},n}<2^{-\alpha_\gamma n}$.
More generally, every $M_n$-key privacy test on its coherified output
has acceptance $f_n<2^{-\alpha_\gamma n}$.
\end{theorem}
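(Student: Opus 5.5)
The plan follows the template of Theorem~\ref{thm:strongconverse}: amplify a low-acceptance private code into a reliable and secret one at a slightly smaller rate, then contradict the weak converse for private capacity~\cite{Devetak}. It suffices to prove the more general claim about privacy tests, since $F_{{\rm key},n}$ is dominated by the acceptance of a suitable $M_n$-key privacy test on the coherified output (Section~\ref{sec:choose-test}).

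\emph{Step 1 (smooth-entropy bounds from acceptance).} Suppose a code of size $M_n$ has privacy-test acceptance $f_n\ge 2^{-\alpha n}$. Write the Stinespring output of the coherified code as a pure state on $K\,K'\,B^n E^n$, where $K'$ purifies the key. Following~\cite{WTB}, I will use the test to bound two quantities. The reliability side gives a lower bound on the conditional min-entropy of the key purification given the receiver, which by duality is an upper bound on $H_{\max}$ of the key given $B^n$. The secrecy side gives a lower bound on $\Hmin$ of the key given $E^n$. Both bounds are of order $\log M_n$, but they hold only with smoothing parameter close to $1$, because $f_n$ is tiny.

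\emph{Step 2 (blowing up the smoothing parameter).} This is the step I expect to be the main obstacle. I will project the joint output with the polynomial approximation $P_n$ of $\Pi^{\otimes n}$ from Section~\ref{sec:approximation}. The degree is a small linear fraction $\delta n$, so the operator-norm error is $2^{-cn}$ and the projective tensor norm across $B^n:E^n$ is $2^{\eta(\delta)n}$ with $\eta(\delta)\to0$. As in the one-shot quantum blowing-up lemma of Section~\ref{sec:blowingup}, the normalized state $P_n\ket{\psi}$ should be close to a genuine channel output of an encoded state. Decomposing $P_n=\sum_j X_j\otimes Y_j$ with $\sum_j\|X_j\|\|Y_j\|\le 2^{\eta n}$ lets me trade the low overlap $f_n$ for smooth-entropy bounds at a constant smoothing parameter, paying only $O((\eta+\alpha)n)$ in entropy on each side.

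The difficulty is that both sides must improve simultaneously. Reliability requires that acting locally on $B^n$ does not reduce the decoder's information. Secrecy requires that local factors on $E^n$ do not increase Eve's information by more than $\log$ of the norm. To handle this, I will first try the mixed-state form of the lemma, treating the weights $X_j$ as acting on $B^n$ and absorbing the $Y_j$ through operator inequalities of the form $Y\rho Y^\dagger\le\|Y\|^2\rho$ applied under $\Hmin$.

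\emph{Step 3 (one-shot achievability).} I will apply the Renes--Renner one-shot private coding theorem~\cite{RenesRenner} to the resulting encoded state. It yields a private code with constant error and
\[
\log M'_n \ge \Hmin^{\epsilon}(\cdot|E^n)-H_{\max}^{\epsilon}(\cdot|B^n)-O(\log(1/\epsilon)) \ge \log M_n - O\bigl((\eta(\delta)+\alpha)n\bigr).
\]
Choosing $\delta$ and $\alpha=\alpha_\gamma$ small enough makes the loss at most $\gamma n/2$, so the new codes have rate at least $P(\N)+\gamma/2$.

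\emph{Step 4 (contradiction).} Repeating the construction with a vanishing error parameter, or applying a standard error-reduction step, produces codes above $P(\N)$ whose reliability and secrecy errors tend to zero. This contradicts the weak converse at the regularized private information~\cite{Devetak}. Hence $f_n<2^{-\alpha_\gamma n}$ for all $n\ge n_0$.
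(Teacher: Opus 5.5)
Your overall architecture (privacy-test acceptance, then the polynomial approximation with projective-norm control, then Renes--Renner, then the weak converse) matches the paper. The gap is in Step~2, which carries the whole argument: it is not specified, and the mechanisms you name would fail.

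\textbf{Where the approximation must act.} Applying $P_n$ to the coherified channel output does nothing, because that output lies in $\ran\Pi^{\otimes n}$ and $P_n\Pi^{\otimes n}=\Pi^{\otimes n}$. The smooth entropies of Step~1 also live on marginals such as $KE^n$, whereas the approximation acts jointly on $B^nE^n$ and so needs a \emph{global} vector. The missing idea is to apply it to a vector in the range of the privacy test. The paper takes a top eigenvector $\ket\psi$ of $\Pi Q\Pi$ (eigenvalue $\lambda\ge f$) and sets $\ket\theta=Q\ket\psi$. Then $\widetilde\Pi\ket\theta=\lambda\ket\psi+\ket v$ with $\Pi\ket v=0$ and $\norm{v}\le\varepsilon\sqrt\lambda$. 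Hence the normalized $\widetilde\Pi\ket\theta$ is within purified distance $\varepsilon/\sqrt f$ of the genuine output $\ket\psi$. Because $Q=LL^\dagger$, the vector $\ket\theta$ has exactly ideal structure: $\theta_{K_AE}=\frac{\I_{K_A}}{M}\otimes\sigma_E$ and $\theta_{K_AS_AE}\le\I_{K_A}\otimes\theta_{S_AE}$, both of trace $\lambda$. The second relation uses the projective decoder isometry $J$ and a copy of the key in Alice's shield $S_A$. This gives secrecy and reliability bounds for one and the same state, which Renes--Renner requires. Two separately smoothed states with smoothing near $1$, as in your Step~1, would not combine.

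\textbf{How both sides improve at once.} The ``simultaneity'' obstacle is resolved by duality, not by operator inequalities on the receiver side. For the pure $\widetilde\psi$ one has $H_{\max}(K_A|\widehat B)=-H_{\min}(K_A|S_AE)$. So both required bounds are min-entropies of $K_A$ conditioned on systems that exclude the receiver. Write $\widetilde\Pi=\Gamma\sum_i\beta_iX_i\otimes Y_i$. The $X_i$ disappear under the partial trace over $\widehat B$ (Lemma~\ref{lem:test-marginal}(iii), then (i)). The $Y_i$ merely conjugate the dominating operator, whose trace grows by at most $\op{Y_i}^2$. Each bound therefore loses $\log(\Gamma^2/\lambda)$, giving $\log M'\ge\log M+2\log f-4\log\Gamma-O(\log\frac1\delta)$.

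Your own versions of these two steps are incorrect:
\begin{itemize}
\item Your Step~1 duality pairs the wrong systems. The correct relation is $H_{\max}(K|B^n)=-H_{\min}(K|K'E^n)$: the key conditioned on its purification and the environment, not a min-entropy given the receiver.
\item The inequality $Y\rho Y^\dagger\le\op{Y}^2\rho$ is false; take $\rho=\ketbra00$ and $Y=\ketbra10$. What holds, and suffices, is that $\rho_{KE}\le\I_K\otimes\sigma_E$ implies $(\I_K\otimes Y)\rho_{KE}(\I_K\otimes Y)^\dagger\le\I_K\otimes Y\sigma_EY^\dagger$, with $\Tr(Y\sigma_EY^\dagger)\le\op{Y}^2\Tr\sigma_E$.
\end{itemize}

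Step~4 is fine. The paper uses a fixed $\delta$ from a per-$n$ weak converse, and a slowly vanishing $\delta_n$ would cost only $O(\log(1/\delta_n))$.
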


Concurrent work by Cheng and Tomamichel~\cite{ChengTomamichel} establishes exponential strong converses for quantum and classical communication over arbitrary finite-dimensional memoryless channels using an Arimoto approach based on R\'enyi information measures, orthogonal to our method.

\section{Preliminaries}\label{sec:setup}

All Hilbert spaces considered here are finite dimensional, and logarithms are in base $2$. To simplify the expressions we assume that operators act as identities on omitted subsystems, and tensor factors are reordered when necessary. For states $\rho,\sigma$, we use the squared fidelity $F(\rho,\sigma)=\norm{\sqrt\rho\sqrt\sigma}_1^2.$
For a unit vector $\ket\phi$, this reduces to $F(\ketbra{\phi}{\phi},\sigma)=\bra\phi\sigma\ket\phi$. Let $h(t)=-t\log t -(1-t)\log(1-t)$ denote the binary entropy function.

Fix a quantum channel $\N:A\to B$ with Stinespring isometry $U:A\to B\otimes E$, where $E$ is the environment, so $U^\dagger U=\I_A$ and 
\begin{align}
\N(X)=\Tr_E(UXU^\dagger).
\end{align}
An entanglement-transmission code of dimension $M$ consists of an encoding channel $\Enc:L\to A$ and a decoding channel $\Dec:B\to\widehat L$, where $L, \widehat L$ have dimension $M$. The encoder and decoder are arbitrary completely positive trace-preserving maps; in particular, the encoded state may be mixed. With respect to fixed orthonormal bases, write $\ket\Phi_{RL}=M^{-1/2}\sum_{j=1}^M\ket j_R\otimes\ket j_L$ where $R$ is an $M$-dimensional reference register, and define $\ket\Phi_{R\widehat L}$ in the same way. The code's entanglement fidelity (Figure~\ref{fig:code}) is
\begin{equation}\label{eq:codeF}
 F=F\!\left(\ketbra{\Phi}{\Phi}_{R\widehat L},\,
 (\id_R\otimes\Dec\circ\N\circ\Enc)(\ketbra{\Phi}{\Phi}_{RL})\right).
\end{equation}

\begin{figure}[t]
\centering
\begin{tikzpicture}[x=1cm,y=0.8cm,>=stealth,thick]
 \draw[->] (0,1.6)--(10,1.6);
 \node[above] at (0.55,1.6) {$R$};
 \draw[->] (0,0)--(1.25,0);
 \node[above] at (0.55,0) {$L$};
 \draw (1.25,-0.45) rectangle (2.45,0.45);
 \node at (1.85,0) {$\Enc$};
 \draw[->] (2.45,0)--(4,0);
 \node[above] at (3.2,0) {$A$};
 \draw (4,-1.45) rectangle (5.3,0.45);
 \node at (4.65,-0.5) {$U$};
 \draw[->] (5.3,0)--(7,0);
 \node[above] at (6.15,0) {$B$};
 \draw (7,-0.45) rectangle (8.2,0.45);
 \node at (7.6,0) {$\Dec$};
 \draw[->] (8.2,0)--(10,0);
 \node[above] at (9.1,0) {$\widehat L$};
 \draw[->] (5.3,-1.05)--(10,-1.05);
 \node[above] at (9.1,-1.05) {$E$};
 \draw[decorate,decoration={brace,mirror,amplitude=5pt}]
 (-0.15,1.6)--(-0.15,0);
 \node[left] at (-0.35,0.8) {$\ket\Phi_{RL}$};
 \draw[decorate,decoration={brace,amplitude=5pt}]
 (10.15,1.6)--(10.15,0);
 \node[right,align=center] at (10.4,0.8)
 {$\ketbra{\Phi}{\Phi}_{R\widehat L}$\\[-1pt]\scriptsize fidelity test};
\end{tikzpicture}
\caption{Entanglement transmission through $\N$, represented by its Stinespring isometry $U$. The encoding and decoding channels are denoted by $\Enc$ and $\Dec$ respectively. The overlap with the maximally entangled target gives the fidelity in \eqref{eq:codeF}.}
\label{fig:code}
\end{figure}

The fidelity formula also applies to sub-normalized states when at least one of the states is normalized. For these states, the purified distance is $P(\rho,\sigma)=\sqrt{1-F(\rho,\sigma)}$.
For a sub-normalized state $\rho_{ER}$, we use the conditional min-entropy
\begin{equation}
 \Hmin(R|E)_\rho
 =-\log\inf\{\Tr\sigma_E:\sigma_E\ge0,\ \rho_{ER}\le\sigma_E\otimes\I_R\}.
\end{equation}
For a normalized state $\rho_{ER}$ and $0\le\delta<1$, the smooth min-entropy is~\cite[Definition~12]{TCR}
\begin{equation}\label{eq:smoothing}
 H_{\min}^{\delta}(R|E)_\rho
 =\sup\bigl\{\Hmin(R|E)_\omega:\omega_{ER}\ge0,\ \Tr\omega_{ER}\le1,\ P(\omega,\rho)\le\delta\bigr\},
\end{equation}
where $\omega_{ER}$ ranges over operators on $E\otimes R$.

For a state $\omega_{XB}$, the conditional max-entropy can be defined
by duality, $H_{\max}(X|B)_\omega=-H_{\min}(X|C)_\omega$, where
$\omega_{XBC}$ is any purification~\cite{TCR}. Its smooth version is
\begin{align}
 H_{\max}^{\eta}(X|B)_\omega
 =\inf_{\substack{\sigma_{XB}\ge0,\ \Tr\sigma_{XB}\le1\\
                   P(\sigma_{XB},\omega_{XB})\le\eta}}
       H_{\max}(X|B)_\sigma.
\end{align}
For a classical register with distribution $q$, $H_{\min}(X)=-\log\max_xq_x$.

\begin{fact}[One-shot quantum achievability {\cite[Proposition~20]{MW}; \cite[Eqs.~(32)--(33)]{TBR}}]\label{fact:oneshot}
Let $\ket\xi_{RA}$ be any unit vector and let $\ket\psi_{RBE}=(\I_R\otimes U)\ket\xi_{RA}$. For $0<\eta<\delta<1$, there exists an entanglement-transmission code with fidelity $F$ and dimension $M$ satisfying\footnote{The restriction $\eta\in(0,\epsilon]$ accompanying
Eqs.~(32)--(33) of \cite{TBR} should read
$\eta\in(0,\sqrt{\epsilon})$, as follows from
\cite[Proposition~20]{MW}, where the smoothing parameter and the
additional error parameter can be chosen independently and their
sum bounds the purified distance. Here $\epsilon=\delta^2$,
so the corresponding range is $0<\eta<\delta$.}
\begin{equation}\label{eq:oneshot}
 F\ge1-\delta^2,\qquad
 \log M\ge H_{\min}^{\delta-\eta}(R|E)_{\psi_{ER}}-4\log\frac1\eta-1.
\end{equation}
\end{fact}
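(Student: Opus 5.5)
The plan is the standard decoupling argument combined with Uhlmann's theorem; essentially this is the route of \cite[Proposition~20]{MW}. First fix the smoothing. By the definition \eqref{eq:smoothing}, choose a subnormalized $\omega_{ER}$ with $P(\omega,\psi_{ER})\le\delta-\eta$ and $\Hmin(R|E)_\omega\ge H_{\min}^{\delta-\eta}(R|E)_{\psi}-o(1)$. Then purify $\omega$ to $\ket{\tilde\psi}_{RBE}$. By Uhlmann's theorem we can choose the purification so that its purified distance to $\ket\psi_{RBE}$ is still at most $\delta-\eta$, with the purifying system taken to be $B$ (enlarging $B$ by an ancilla if needed, which the decoder may discard).

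Next, apply a random compression to the reference. Draw a Haar-random unitary $V$ on $R$ and project onto a fixed $M$-dimensional subspace $R'$, rescaling by $|R|/M$. The one-shot decoupling theorem \cite{TCR} bounds the expected trace distance between the resulting $R'E$ operator and $\tau_{R'}\otimes\omega_E$, where $\tau_{R'}$ is maximally mixed. The bound is
\[
\mathbb E_V\,\bigl\lVert\omega'_{R'E}-\tau_{R'}\otimes\omega_E\bigr\rVert_1\le 2^{-\frac12(\Hmin(R|E)_\omega-\log M)}.
\]
Choosing $\log M=\Hmin(R|E)_\omega-4\log\frac1\eta-1$ makes this at most of order $\eta^2$. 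A fixed $V$ then achieves at most the average. After converting trace distance to purified distance, using the Fuchs--van de Graaf inequality together with the normalization bookkeeping for the projected state, the decoupling error in purified distance becomes at most $\eta$. The accounting in this conversion is what produces the constants $4\log\frac1\eta+1$.

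Finally, decode and combine errors. The operator $\tau_{R'}\otimes\omega_E$ is purified by $\ket\Phi_{R'\widehat L}\otimes\ket{\omega}_{E E'}$. The projected code state is purified on $B$. By Uhlmann's theorem there is an isometry $B\to\widehat L E'$ mapping one purification to within purified distance $\eta$ of the other, and tracing out $E'$ gives the decoder $\Dec$. The encoder is the state $(V^\dagger P V\otimes\I)\ket\xi$, renormalized. This defines an entanglement-generation code with reference $R'$, which converts to an entanglement-transmission code of the same dimension in the standard way. The triangle inequality for purified distance then gives total error at most $(\delta-\eta)+\eta=\delta$, that is, $F\ge1-\delta^2$.

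The main obstacle I expect is this last combination step. The projection $P$ succeeds only with probability about $M/|R|$, so the normalization of the projected state and the subnormalization of $\omega$ must be tracked carefully. The smoothing error also has to be added to the decoupling error, not compounded by the projection. I would handle this by applying the decoupling bound to the rescaled map $\frac{|R|}{M}P(\cdot)P$. I would then check that the smoothing distance is preserved in purified distance under this normalized trace-nonincreasing operation, exactly as done in \cite{MW,TBR}.
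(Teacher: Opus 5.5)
\textbf{There is a genuine gap at the normalization step.} Your architecture (smooth, compress the reference by a random projection, decouple, decode by Uhlmann) is the Morgan--Winter route; the paper gives no proof of this Fact beyond citing it. But the step you flag as the main obstacle is where your argument breaks, and it is exactly where the additive budget $(\delta-\eta)+\eta$ is earned.

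You fix $V$ from the decoupling average for $\omega$ and then, for that $V$, add the smoothing error by the triangle inequality. This needs $P(\hat\psi^V_{R'E},\hat\omega^V_{R'E})\le\delta-\eta$, where $\hat\psi^V$ is the renormalized projection of $\psi$ and $\hat\omega^V=\frac{|R|}{M}PV\omega V^\dagger P$. Nothing gives you this.
\begin{itemize}
\item For fixed $V$, the map $X\mapsto\frac{|R|}{M}PVXV^\dagger P$ is not trace-nonincreasing: it multiplies the trace of anything supported in $\ran(V^\dagger PV)$ by $|R|/M$.
\item The renormalized map $X\mapsto PXP/\Tr(PXP)$ is nonlinear.
\end{itemize}
So purified distance is monotone under neither. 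Suppose the piece of weight about $(\delta-\eta)^2$ that smoothing removes from $\psi$ sits largely inside $\ran(V^\dagger PV)$. Then its relative weight is amplified by up to $|R|/M$, and choosing $V$ to decouple $\omega$ does not exclude this. Averaging over $V$ in trace norm does not rescue the constants either. It controls the discrepancy only to order $\delta-\eta$ in trace distance, which is order $\sqrt{\delta-\eta}$ in purified distance.

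\textbf{What is missing: stay trace-preserving until both errors are combined, and select the branch only at the end.}
\begin{itemize}
\item Pad $R$ so that $M$ divides $|R|$. After $V$, measure $R$ in a complete family of $M$-dimensional block projectors $P_k$ and record the outcome: $\mathcal M_V(X)=\sum_k\ketbra{k}{k}_K\otimes P_kVXV^\dagger P_k$.
\item Since $\mathcal M_V$ is CPTP, $P(\mathcal M_V(\psi_{RE}),\mathcal M_V(\omega_{RE}))\le\delta-\eta$ for \emph{every} $V$.
\item Its Choi state has conditional min-entropy $-\log M$. Take $\log M\le\Hmin(R|E)_\omega-4\log\frac1\eta$; rounding $M$ down to an integer accounts for the $-1$. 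The decoupling theorem then yields a $V$ with $\norm{\mathcal M_V(\omega_{RE})-T}_1\le\eta^2$, where $T=\sum_k\frac{M}{|R|}\ketbra{k}{k}\otimes\tau_{R'}\otimes\omega_E$.
\item The two operators have equal trace, so $P(\mathcal M_V(\omega_{RE}),T)\le\eta$, and hence $P(\mathcal M_V(\psi_{RE}),T)\le\delta$.
\item Both are block diagonal in $k$, and the weights $M/|R|$ of $T$ sum to one. Cauchy--Schwarz then gives $\sum_kp_k\,F(\hat\psi^{V,k}_{R'E},\tau_{R'}\otimes\omega_E)\ge1-\delta^2$.
\item Alice prepares the normalized post-measurement state for a good $k$, and Uhlmann gives the decoder. (Equivalently, keep your single projection but carry $V$ in a classical register weighted by $\frac{|R|}{M}\Tr(PV\psi_RV^\dagger P)$.)
\end{itemize}

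\textbf{A second, smaller gap: generation versus transmission.} What this builds is an entanglement-generation code, since $\hat\psi^{V,k}_{R'}$ is only approximately maximally mixed. Converting it to an entanglement-transmission code of the same dimension is not a free ``standard'' step. The obvious replacement of the encoded state by the nearest $(\I\otimes W)\ket{\Phi}$ costs up to a further $\delta$ in purified distance, leaving only $F\ge1-4\delta^2$. This step needs its own argument.
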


We record two partial-trace bounds and a bound for convex combinations of vectors for later use.

\begin{lemma}[Marginal bounds]\label{lem:test-marginal}
\textup{(i)} Let $S_{BE}\ge0$ and let $K:B\to B'$ satisfy
$K^\dagger K\le\I_B$. Then
\begin{equation}\label{eq:receiver-contraction}
 \Tr_{B'}\!\left[(K\otimes\I_E)S_{BE}(K^\dagger\otimes\I_E)\right]
 \le\Tr_B S_{BE}.
\end{equation}
\textup{(ii)} Let $Q_{BR}\ge0$ and $\rho_{BER}\ge0$, with no normalization assumed. Set
\begin{align}
 \tau_E=\Tr_{BR}(Q^{1/2}\rho Q^{1/2}),\qquad
 c=\op{\Tr_RQ}.
\end{align}
Then
\begin{equation}\label{eq:test-marginal}
 \Tr_B(Q\rho Q)\le c\,\tau_E\otimes\I_R.
\end{equation}

\textup{(iii)} For any vectors $\ket{v_1},\ldots,\ket{v_m}$ and any probability distribution $p_1, \dots, p_m$ we have
\begin{equation}\label{eq:coherent-sum}
 \left(\sum_{j=1}^m p_j \ket{v_j}\right)
 \left(\sum_{j=1}^m p_j\bra{v_j}\right)
 \leq  \sum_{j=1}^m p_j  \ketbra{v_j}{v_j}.
\end{equation}

\end{lemma}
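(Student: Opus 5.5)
For part~\textup{(i)} I would use duality against positive test operators on $E$. Fix any $X_E\ge0$. By cyclicity of the trace,
\begin{equation*}
 \Tr\!\left[X_E\,\Tr_{B'}\!\big[(K\otimes\I_E)S_{BE}(K^\dagger\otimes\I_E)\big]\right]
 =\Tr\!\left[S_{BE}\,(K^\dagger K\otimes X_E)\right].
\end{equation*}
Since $X_E\ge0$ and $K^\dagger K\le\I_B$, we have $K^\dagger K\otimes X_E\le\I_B\otimes X_E$. Because $S_{BE}\ge0$, the right-hand side is at most $\Tr[S_{BE}(\I_B\otimes X_E)]=\Tr[X_E\Tr_BS_{BE}]$. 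This holds for every $X_E\ge0$, which gives the operator inequality \eqref{eq:receiver-contraction}.

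For part~\textup{(ii)} I would reduce the claim to a complete-positivity statement for two maps from operators on $BR$ to operators on $R$. Set $\sigma_{BER}=Q^{1/2}\rho Q^{1/2}\ge0$, so that $Q\rho Q=Q^{1/2}\sigma Q^{1/2}$ and $\tau_E=\Tr_{BR}\sigma$. Define
\begin{equation*}
 \Phi(Y_{BR})=\Tr_B\!\big(Q^{1/2}YQ^{1/2}\big),\qquad
 \Psi(Y_{BR})=c\,\Tr(Y)\,\I_R .
\end{equation*}
The claim is then $(\id_E\otimes\Phi)(\sigma)\le(\id_E\otimes\Psi)(\sigma)=c\,\tau_E\otimes\I_R$. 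It therefore suffices to show that $\Psi-\Phi$ is completely positive, i.e.\ that the Choi operators satisfy $J_\Phi\le J_\Psi=c\,\I_{B'R'}\otimes\I_R$. This is equivalent to $\op{J_\Phi}\le c$.

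The main step, and the one needing care, is computing $J_\Phi$. Let $\ket\Gamma$ be the unnormalized maximally entangled vector on $B'R'\otimes BR$. Using the transpose trick $(\I\otimes Q^{1/2})\ket\Gamma=((Q^{1/2})^T\otimes\I)\ket\Gamma$ and tracing out $B$, I expect
\begin{equation*}
 J_\Phi=\big((Q^{1/2})^T\otimes\I_R\big)\big(\I_{B'}\otimes\ketbra{\Gamma}{\Gamma}_{R'R}\big)\big((Q^{1/2})^T\otimes\I_R\big)=AA^\dagger ,
\end{equation*}
where $A=((Q^{1/2})^T\otimes\I_R)(\I_{B'}\otimes\ket\Gamma_{R'R})$ maps $B'$ to $B'R'R$. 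Then
\begin{equation*}
 \op{J_\Phi}=\op{A^\dagger A}=\op{(\I_{B'}\otimes\bra\Gamma)(Q^T\otimes\I_R)(\I_{B'}\otimes\ket\Gamma)}=\op{\Tr_{R'}Q^T},
\end{equation*}
which is $\op{(\Tr_RQ)^T}=c$. The points to verify are the transpose bookkeeping and the identity $(\I\otimes\bra\Gamma_{R'R})(X_{B'R'}\otimes\I_R)(\I\otimes\ket\Gamma_{R'R})=\Tr_{R'}X$. Complete positivity of $\Psi-\Phi$ then yields \eqref{eq:test-marginal}.

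For part~\textup{(iii)} it suffices to test against an arbitrary vector $\ket w$. Writing $a_j=\braket{w}{v_j}$, convexity of $|\cdot|^2$ (Jensen's inequality, or Cauchy--Schwarz with weights $\sqrt{p_j}\cdot\sqrt{p_j}$) gives $|\sum_jp_ja_j|^2\le\sum_jp_j|a_j|^2$. This is exactly $\bra w\,(\text{LHS})\,\ket w\le\bra w\,(\text{RHS})\,\ket w$, and since $\ket w$ is arbitrary, \eqref{eq:coherent-sum} follows.
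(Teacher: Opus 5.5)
Your proof is correct. Parts (i) and (iii) are exactly the paper's arguments: testing against a positive operator on $E$ with cyclicity of the trace, and Cauchy--Schwarz (convexity) against an arbitrary $\ket w$.

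For part (ii) you take a genuinely different route.

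\emph{The paper's route.} It first reduces by linearity to a pure $\rho=\ketbra{\psi}{\psi}$. It normalizes $Q^{1/2}\ket\psi$ by $\tau_E^{-1/2}$ (inverse on the support) to get a vector $\ket\chi$ with $\chi_E\le\I_E$. It transfers this to $\chi_{BR}\le\I_{BR}$, using that complementary marginals of a pure vector share their nonzero spectrum. It then bounds $\Tr_R(Q^{1/2}\chi_{BR}Q^{1/2})\le\Tr_RQ\le c\,\I_B$, transfers the bound back to the $ER$ marginal of $Q^{1/2}\ket\chi$, and finally undoes the normalization via $Q\ket\psi=(\tau_E^{1/2}\otimes\I)Q^{1/2}\ket\chi$.

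\emph{Your route.} You linearize from the start. Writing both sides as $(\id_E\otimes\Phi)(\sigma)$ and $(\id_E\otimes\Psi)(\sigma)$ with $\sigma=Q^{1/2}\rho Q^{1/2}$, the claim becomes complete positivity of $\Psi-\Phi$, that is, $J_\Phi\le c\,\I$. Your computation $\op{J_\Phi}=\op{A^\dagger A}=\op{\Tr_{R'}Q^T}=\op{(\Tr_RQ)^T}=c$ is right. The transpose bookkeeping you flagged goes through: $(Q^{1/2})^T$ is Hermitian and acts only on the Choi copy $B'R'$, so it commutes with $\Tr_B$. The contraction identity also holds, since $(\I\otimes\bra\Gamma_{R'R})(X_{B'R'}\otimes\I_R)(\I\otimes\ket\Gamma_{R'R})=\sum_i(\I\otimes\bra i)X(\I\otimes\ket i)=\Tr_{R'}X$.

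\emph{What each buys.} Your approach needs no pure-state reduction and no pseudo-inverse. It also avoids the implicit check that $Q\ket\psi$ lies in $\operatorname{supp}(\tau_E)\otimes BR$. It identifies $c$ as precisely the operator norm of the Choi operator of $\Phi$, which also makes transparent that the constant is sharp once $E$ is allowed to be large. The paper's approach avoids Choi operators and transposes entirely. It stays at the level of explicit vectors and uses only the elementary fact about spectra of complementary marginals of a pure state.
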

\begin{proof}
For \textup{(i)}, test the difference between the right- and left-hand
sides against any $Z_E\ge0$. By cyclicity of the trace,~\eqref{eq:receiver-contraction} is equivalent to
\begin{align}
 \Tr\!\left[\bigl((\I_B-K^\dagger K)\otimes Z_E\bigr)S_{BE}\right]\ge0,
\end{align}
which holds since both operators in the product are positive. 

For \textup{(ii)}, by linearity in $\rho$, it suffices to prove the claim for
$\rho=\ketbra{\psi}{\psi}$. Define 
\begin{align}
\ket{\chi} = (\tau_E^{-1/2}\otimes \I_{BR}) Q^{1/2}\ket\psi,
\end{align}
where the inverse is taken on the support of \(\tau_E\). Then $\chi_E = \Tr_{BR} \ketbra\chi\chi = \tau_E^{-1/2}\tau_E  \tau_E^{-1/2}\leq \I_E.$
Since the two marginals of a pure operator have the same nonzero eigenvalues, we also have $\chi_{BR}=\Tr_E \ketbra\chi\chi\le \I_{BR}$.
We put \(|v\rangle=Q^{1/2}|\chi\rangle\). Consequently,
\begin{align} 
v_B= \Tr_{RE} \ketbra vv =\operatorname{tr}_R\!\left(Q^{1/2}\chi_{BR}Q^{1/2}\right) \le \operatorname{tr}_RQ \le c\I_B. 
\end{align}
Applying the same observation about pure-state marginals gives $v_{ER} = \Tr_B \ketbra vv\le c\I_{ER}$. Finally,
$$ Q|\psi\rangle=(\tau_E^{1/2}\otimes \I_{BR})|v\rangle, $$
and therefore
\begin{align}
\Tr_B(Q\ketbra\psi\psi Q) = \big(\tau_E^{1/2}\otimes \I_R\big) v_{ER} \big(\tau_E^{1/2}\otimes \I_R\big)\leq c\tau_E\otimes \I_R.
\end{align}

For \textup{(iii)} it suffices to show that for any vector $\ket w$,
\begin{equation}
 \left|\sum_{j=1}^m p_j \langle w|v_j\rangle\right|^2
 \le \sum_{j=1}^mp_j|\langle w|v_j\rangle|^2,
\end{equation}
which holds by the Cauchy--Schwarz inequality.
\end{proof}

Our quantum blowing-up lemma is expressed in terms of the projective tensor norm. For an operator $Z$ on $B\otimes E$, the projective tensor norm induced by the local operator norms is
\begin{equation}\label{eq:projective}
 \pn{Z}=\inf \sum_j\op{X_j}\op{Y_j}.
\end{equation}
Here the infimum is taken over all expressions $Z=\sum_jX_j\otimes Y_j$
where $X_j$ acts on $B$ and $Y_j$ acts on $E$. In finite dimensions, a compactness argument shows that this infimum is attained. 
The projective tensor norm satisfies the triangle inequality and is the largest reasonable cross norm
on the tensor product of the operator spaces equipped with the operator norm~\cite{Ryan2002}.

\begin{lemma}[Dimension bound for the projective tensor norm]\label{lem:projective-dimension}
For every operator $Z$ on $B\otimes E$,
\begin{equation}\label{eq:projective-dimension}
 \pn{Z}\le\min\big\{(\dim B)^2,(\dim E)^2\big\}\,\op{Z}.
\end{equation}
\end{lemma}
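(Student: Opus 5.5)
By symmetry between the two factors, it suffices to prove $\pn{Z}\le(\dim B)^2\op{Z}$. The bound with $(\dim E)^2$ follows by exchanging the roles of $B$ and $E$, since the definition~\eqref{eq:projective} is symmetric in the two tensor factors.

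Let $d=\dim B$ and fix an orthonormal basis $\{\ket i\}$ of $B$. Expand $Z$ in the matrix units $\ketbra ij$ on $B$:
\begin{align}
 Z=\sum_{i,j=1}^{d}\ketbra ij\otimes Y_{ij},\qquad
 Y_{ij}=(\bra i\otimes\I_E)\,Z\,(\ket j\otimes\I_E).
\end{align}
This is a decomposition of the admissible form $Z=\sum_k X_k\otimes Y_k$ with $d^2$ terms. Each term has $\op{\ketbra ij}=1$, so the definition~\eqref{eq:projective} gives
\begin{align}
 \pn{Z}\le\sum_{i,j}\op{Y_{ij}}.
\end{align}

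It remains to bound each block. The operator $\ket j\otimes\I_E$ is an isometry from $E$ into $B\otimes E$, and its adjoint $\bra i\otimes\I_E$ is a contraction. Hence $Y_{ij}$ is a compression of $Z$, and submultiplicativity of the operator norm gives $\op{Y_{ij}}\le\op Z$. Summing over the $d^2$ pairs $(i,j)$ yields $\pn Z\le d^2\op Z$. Taking the better of the two orderings gives~\eqref{eq:projective-dimension}.

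None of these steps is delicate. The only point to verify is that the compression bound $\op{Y_{ij}}\le\op Z$ holds for each block individually, which is immediate from the contraction property above. The argument does not need to exploit any finer structure, such as cancellations among the blocks; the crude term-by-term estimate already produces the stated constant.
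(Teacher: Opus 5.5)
Your proposal is correct and matches the paper's proof: the same expansion $Z=\sum_{i,j}\ketbra ij\otimes Z_{ij}$ in matrix units on $B$, the block bound $\op{Z_{ij}}\le\op Z$, and the $(\dim B)^2$ count. The $(\dim E)^2$ bound then follows by exchanging the two tensor factors.
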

\begin{proof}
Fix an orthonormal basis $\{\ket i:\, i=1, \dots, \dim B\}$ of $B$, and expand
\begin{align}
 Z=\sum_{i,j=1}^{\dim B}\ketbra{i}{j}\otimes Z_{ij},\qquad
 Z_{ij}=(\bra i\otimes\I_E)Z(\ket j\otimes\I_E).
\end{align}
We note that $\|(\bra i\otimes\I_E)\|_{\infty} =\|(\ket j\otimes\I_E)\|_{\infty}=1$, so
$\op{Z_{ij}}\le\op Z$. We also have $\op{\ketbra{i}{j}}=1$, so the definition
of the projective tensor norm gives
\begin{align}
 \pn Z\le\sum_{i,j=1}^{\dim B}\op{Z_{ij}}\le (\dim B)^2\op Z.
\end{align}
We similarly have 
$\pn Z\le(\dim E)^2\op Z$, which proves \eqref{eq:projective-dimension}.
\end{proof}

\section{Fully quantum blowing-up lemma}\label{sec:blowingup}
\begin{theorem}[Quantum blowing-up lemma]\label{thm:upgrade} 
Consider an $M$-dimensional code $(\Enc,\Dec)$ with entanglement fidelity $F>0$, and fix $0<\varepsilon<1$. Let $\Pi=UU^\dagger$ be the orthogonal projector onto the image of the channel's Stinespring isometry $U:A\to B\otimes E$. Let ${\widetilde\Pi}$ be an operator approximation of $\Pi$ satisfying, for some $\Gamma>0$,
\begin{equation}\label{eq:threshold}
 {\widetilde\Pi}\,\Pi=\Pi\,{\widetilde\Pi}=\Pi,\qquad
 \pn {\widetilde\Pi}\le\Gamma,\qquad
 \op{{\widetilde\Pi}-\Pi}\le\frac{\varepsilon\sqrt F}{2}.
\end{equation}
Then there is a code with entanglement fidelity at least $1-\varepsilon^2$ and dimension $M'$ satisfying
\begin{equation}\label{eq:upgrade}
 \boxed{\quad\log M'\ge\log M+\log F-2\log\Gamma-4\log\frac2\varepsilon-1.\quad}
\end{equation}
\end{theorem}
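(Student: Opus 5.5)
The plan is to feed a suitably chosen pure input into Fact~\ref{fact:oneshot} with $\delta=\varepsilon$ and $\eta=\varepsilon/2$. That choice gives fidelity $1-\varepsilon^2$ and accounts exactly for the term $-4\log\frac2\varepsilon-1$ in~\eqref{eq:upgrade}. What remains is the bound
\[
H_{\min}^{\varepsilon/2}(R|E)\ \ge\ \log M+\log F-2\log\Gamma
\]
for the new input.

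\emph{Reductions.} First I reduce to a pure encoding. The encoded state $\rho_{RA}=(\id_R\otimes\Enc)(\ketbra\Phi\Phi_{RL})$ decomposes as $\sum_kp_k\ketbra{\xi_k}{\xi_k}$, and the fidelity~\eqref{eq:codeF} is linear in $\rho_{RA}$. Hence some pure $\ket{\xi}_{RA}$ with $M$-dimensional $R$ already achieves fidelity at least $F$. Set $\ket\psi_{RBE}=(\I_R\otimes U)\ket\xi$.

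Next I write the decoder as a test. Let $Q_{BR}=(\Dec^\dagger\otimes\id_R)(\ketbra\Phi\Phi_{R\widehat L})$. Then $0\le Q\le\I$ and $\bra\psi Q\ket\psi\ge F$. Moreover $\Tr_RQ=\Dec^\dagger(\I/M)=\I_B/M$, so $c=1/M$ in Lemma~\ref{lem:test-marginal}(ii). With $\tau_E=\Tr_{BR}(Q^{1/2}\psi Q^{1/2})$, of trace $\bra\psi Q\ket\psi$, this gives $\Tr_B(Q\psi Q)\le\frac1M\tau_E\otimes\I_R$.

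\emph{New input and the min-entropy bound.} The new channel input is the normalized vector
\[
\ket{\psi'}=\Pi Q\ket\psi/N,\qquad N=\|\Pi Q\ket\psi\|\ge\bra\psi\Pi Q\ket\psi=\bra\psi Q\ket\psi\ge F .
\]
Since $\ket{\psi'}$ lies in $\ran(\I_R\otimes\Pi)$, it equals $(\I_R\otimes U)\ket{\xi'}$ for a unit vector $\xi'_{RA}$, so Fact~\ref{fact:oneshot} applies to it. For the smoothing candidate I use $\ket\omega=\widetilde\Pi Q\ket\psi/N$.

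Take an optimal decomposition $\widetilde\Pi=\sum_jX_j\otimes Y_j$ with $\sum_j\op{X_j}\op{Y_j}\le\Gamma$. Put $p_j=\op{X_j}\op{Y_j}/\Gamma'$, where $\Gamma'=\sum_j\op{X_j}\op{Y_j}$, and let $\hat X_j,\hat Y_j$ be the normalized operators. Lemma~\ref{lem:test-marginal}(iii) gives
\[
\ketbra{\widetilde\Pi Q\psi}{\widetilde\Pi Q\psi}\le\Gamma^2\sum_jp_j\,(\hat X_j\otimes\hat Y_j)Q\psi Q(\hat X_j\otimes\hat Y_j)^\dagger .
\]
Taking $\Tr_B$, part (i) removes $\hat X_j$, and part (ii) then yields
\[
\Tr_B\ketbra{\widetilde\Pi Q\psi}{\widetilde\Pi Q\psi}\le\frac{\Gamma^2}{M}\Bigl(\sum_jp_j\hat Y_j\tau_E\hat Y_j^\dagger\Bigr)\otimes\I_R ,
\]
where the bracketed operator has trace at most $\Tr\tau_E\le 1$, indeed at most $\bra\psi Q\ket\psi$. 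Dividing by $N^2\ge(\bra\psi Q\ket\psi)^2$ gives
\[
\Hmin(R|E)_\omega\ge\log M-2\log\Gamma+\log\bra\psi Q\ket\psi\ge\log M+\log F-2\log\Gamma .
\]

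\emph{Closeness.} Using $\Pi\widetilde\Pi=\widetilde\Pi\Pi=\Pi$, we have $\widetilde\Pi-\Pi=(\widetilde\Pi-\Pi)(\I-\Pi)$. Together with $\|Q\psi\|^2\le\bra\psi Q\ket\psi$, this gives
\[
\|\ket\omega-\ket{\psi'}\|\le\frac{\varepsilon\sqrt F}{2}\cdot\frac{\sqrt{\bra\psi Q\ket\psi}}{\bra\psi Q\ket\psi}\le\frac\varepsilon2 .
\]

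\emph{Expected obstacle.} The hardest step is turning this into the smoothing condition required by~\eqref{eq:smoothing}. Two things are needed: $\Tr\omega\le1$, and purified distance at most $\delta-\eta=\varepsilon/2$, with no constant lost. I would first try to exploit the orthogonal decomposition $\ket\omega=\ket{\psi'}+\ket{\Delta}$, where $\ket\Delta=(\widetilde\Pi-\Pi)(\I-\Pi)Q\ket\psi/N$. Since $\Pi$ is Hermitian, this $\ket\Delta$ lies in $\ran(\I-\Pi)$ and is orthogonal to $\ket{\psi'}$. Hence $\bra{\psi'}\omega\rangle=1$ and $\|\omega\|^2=1+\|\Delta\|^2$.

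Rescaling $\omega$ to unit norm costs only a factor $(1+\|\Delta\|^2)^{-1}\ge1-\|\Delta\|^2$ in the min-entropy bound. The purified distance of the rescaled vector to $\psi'$ is then $\sqrt{1-1/(1+\|\Delta\|^2)}\le\|\Delta\|\le\varepsilon/2$. If the normalization factor still costs a small additive term, I would absorb it by slightly tightening the choice of $\eta$ relative to the slack in the $-1$ of~\eqref{eq:upgrade}.
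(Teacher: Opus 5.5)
Your proof is correct and is essentially the paper's argument: pull the fidelity test back to $Q$, use $\widetilde\Pi Q\ket\psi$ (normalized) as the smoothing candidate, dominate its $ER$ marginal via Lemma~\ref{lem:test-marginal}, and apply Fact~\ref{fact:oneshot} with $\delta=\varepsilon$, $\eta=\varepsilon/2$. The only difference is that you take a good pure component of the encoded state with new input $\Pi Q\ket\psi/N$, whereas the paper takes the top eigenvector of $\Pi Q\Pi$, so that $\Pi Q\ket\psi=\lambda\ket\psi$ and the new input is $\ket\psi$ itself. Your ``expected obstacle'' disappears once the sign is corrected. Dividing $\omega$ by $1+\norm{\Delta}^2\ge1$ shrinks the dominating operator, so normalizing can only raise the min-entropy bound. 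The normalized vector is then exactly $\widetilde\Pi Q\ket\psi/\norm{\widetilde\Pi Q\ket\psi}$, at purified distance at most $\norm{\Delta}\le\varepsilon/2$ from $\ket{\psi'}$, and no adjustment of $\eta$ is needed.
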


\begin{proof}
\textit{Step 1: pull the fidelity test back through the decoder.}
Pull the maximally entangled projector back through the decoder, and let $\rho_{BER}$ be the encoded state after the Stinespring isometry:
\begin{align}\label{eq:codetest}
 Q_{BR}&=(\Dec^\dagger\otimes\id_R)(\ketbra{\Phi}{\Phi}_{\widehat L R}),\\
 \rho_{BER}&= (U\otimes \I_R)\big((\Enc\otimes \id_R)(\ketbra{\Phi}{\Phi}_{LR})\big) (U^\dagger\otimes\I_R).
\end{align}
The adjoint $\Dec^\dagger$ is completely positive and unital. Hence
\begin{equation}\label{eq:codetestprops}
 0\le Q\le\I_{BR},\qquad
 \Tr_RQ=\Dec^\dagger(\I_{\widehat L}/M)=\I_B/M,\qquad
 \Pi\rho\Pi=\rho,\qquad\Tr(\rho Q)=F.
\end{equation}

\medskip

\noindent
\textit{Step 2: choose a vector in the channel image.}
Let $\lambda=\op{\Pi Q\Pi}$. Since $\rho$ is obtained by applying the channel isometry $U$ we have $\Pi\rho\Pi=\rho$ and
\begin{equation}\label{eq:lambdaF}
0< F= \Tr(\rho Q) =\Tr(\Pi \rho \Pi Q)=\Tr(\rho\Pi Q\Pi)\le\lambda.
\end{equation}
Let $\ket\psi$ be a normalized eigenvector of $\Pi Q\Pi$ with maximal eigenvalue $\lambda$. Since $\lambda>0$, we have $\Pi\ket\psi=\ket\psi$ and $\Pi Q\ket\psi=\lambda\ket\psi$. Set
\begin{equation}\label{eq:smoothedvector}
 \ket{\widetilde\psi}=\frac{{\widetilde\Pi}Q\ket\psi}{\norm{{\widetilde\Pi}Q\ket\psi}}.
\end{equation}

\medskip

\noindent
\textit{Step 3: control the distance to the smoothed vector.}
Since $Q^2\le Q$, we have $\norm{Q\ket\psi}^2 = \bra\psi Q^2\ket \psi \leq \bra \psi Q\ket \psi = \bra \psi \Pi Q\Pi\ket \psi = \lambda$. The identities ${\widetilde\Pi}\,\Pi=\Pi\,{\widetilde\Pi}=\Pi$ give $\Pi(\widetilde \Pi-\Pi)=0$, implying that 
\begin{equation}\label{eq:orthog}
 \norm{{\widetilde\Pi}Q\ket\psi}^2=\norm{\Pi Q\ket\psi}^2+\norm{({\widetilde\Pi}-\Pi)Q\ket\psi}^2=\lambda^2+\norm{({\widetilde\Pi}-\Pi)Q\ket\psi}^2
 \le\lambda^2+\lambda\op{{\widetilde\Pi}-\Pi}^2.
\end{equation}
In particular, $\norm{{\widetilde\Pi}Q\ket\psi}^2\ge\lambda^2>0$, so $\ket{\widetilde\psi}$ is well defined. Moreover, we have
\begin{align}
\braket{ \psi}{\widetilde\psi} = \frac{\bra \psi \widetilde \Pi Q\ket\psi}{\norm{{\widetilde\Pi}Q\ket\psi}} = \frac{\bra \psi  \Pi \widetilde \Pi Q\ket\psi}{\norm{{\widetilde\Pi}Q\ket\psi}} = \frac{\bra \psi  \Pi  Q\ket\psi}{\norm{{\widetilde\Pi}Q\ket\psi}} = \frac{\lambda}{\norm{{\widetilde\Pi}Q\ket\psi}},
\end{align}
 and
\begin{align} \label{eq:phiP}
 P(\ket{\widetilde\psi},\ket\psi)^2 =1-\frac{\lambda^2}{\norm{{\widetilde\Pi}Q\ket\psi}^2}
 \le 1-\frac{\lambda^2}{\lambda^2+\lambda\op{{\widetilde\Pi}-\Pi}^2}  =\frac{\op{{\widetilde\Pi}-\Pi}^2}{\lambda+\op{{\widetilde\Pi}-\Pi}^2}\le\frac{\varepsilon^2}{4},
\end{align}
where the last inequality uses \eqref{eq:threshold} and $F\le\lambda$.

\medskip

\noindent
\textit{Step 4: dominate the complementary marginal.}
Set
\begin{equation}\label{eq:tau}
 \tau_E=\Tr_{BR}\!\left(Q^{1/2}\ketbra{\psi}{\psi} Q^{1/2}\right),
 \qquad\Tr\tau_E=\lambda.
\end{equation}
Applying Lemma~\ref{lem:test-marginal}\textup{(ii)} to $\ketbra{\psi}{\psi}$ and using
$\Tr_RQ=\I_B/M$ yield
\begin{equation}\label{eq:Qdom}
 \Tr_B(Q\ketbra{\psi}{\psi} Q)\le\frac1M\,\tau_E\otimes\I_R.
\end{equation}

By the definition of the projective tensor norm and compactness in finite dimensions, we can write
\begin{equation}\label{eq:Kdecomposition}
 {\widetilde\Pi}=\Gamma\sum_i \beta_iX_i\otimes Y_i,\qquad
 \beta_i\ge0,\quad\sum_i\beta_i=1,\quad\op{X_i},\op{Y_i}\le1,
\end{equation}
where $X_i$ acts on $B$ and $Y_i$ acts on $E$. Apply Lemma~\ref{lem:test-marginal}\textup{(iii)} with
$\ket{v_i}=(X_i\otimes Y_i\otimes\I_R)Q\ket\psi$, and then use
Lemma~\ref{lem:test-marginal}\textup{(i)} with $K=X_i$ to remove the contraction $X_i$ under the partial trace to obtain
 \begin{align}
 \widetilde\psi_{ER} = \Tr_B\ketbra{\widetilde\psi}{\widetilde\psi} &= \frac{\Gamma^2}{\norm{{\widetilde\Pi}Q\ket\psi}^2}
\Tr_{B}\Bigg(\Big(  \sum_i \beta_i\ket{v_i}\Big)\Big(  \sum_i \beta_i\bra{v_i}\Big)\Bigg)\\
 &\le\frac{\Gamma^2}{\norm{{\widetilde\Pi}Q\ket\psi}^2}
 \sum_i\beta_i(Y_i\otimes\I_R)\Tr_B(Q\ketbra{\psi}{\psi} Q)(Y_i^\dagger\otimes\I_R)\\
 &\le\frac{\Gamma^2}{M\norm{{\widetilde\Pi}Q\ket\psi}^2}
 \Big(\sum_i\beta_iY_i\tau_EY_i^\dagger\Big)\otimes\I_R, \label{eq:phidom}
 \end{align}
where the last line follows from~\eqref{eq:Qdom}.
The operator in parentheses has trace at most $\lambda$, because $\sum_i\beta_i=1$, $\Tr\tau_E=\lambda$, and $Y_i^\dagger Y_i\le\I_E$. Thus
\begin{equation}
 \Hmin(R|E)_{\widetilde\psi}\ge\log\frac{M\norm{{\widetilde\Pi}Q\ket\psi}^2}{\Gamma^2\lambda}\geq \log\frac{M\lambda^2}{\Gamma^2\lambda}=\log M+\log\lambda-2\log\Gamma.
\end{equation}
\medskip
\noindent
\textit{Step 5: apply one-shot quantum achievability.}
Monotonicity of fidelity under partial trace and \eqref{eq:phiP} give $P(\widetilde\psi_{ER},\psi_{ER})\le\varepsilon/2$. Hence $\widetilde\psi_{ER}$ is allowed in the smoothing in \eqref{eq:smoothing}, and
\begin{equation}\label{eq:entropyupgrade}
 H_{\min}^{\varepsilon/2}(R|E)_{\psi}
 \ge\log M+\log\lambda-2\log\Gamma\geq \log M+\log F-2\log\Gamma.
\end{equation}
Because $\Pi\ket\psi=\ket\psi$, the input vector
\begin{equation}\label{eq:inputpullback}
 \ket\xi_{AR}=(\I_R\otimes U^\dagger)\ket\psi
\end{equation}
is normalized and produces $\ket\psi$ after applying the channel isometry $U$. Thus $\ket{\xi}_{AR}$ can be used in Fact~\ref{fact:oneshot} with $\delta=\varepsilon$ and $\eta=\varepsilon/2$ to establish~\eqref{eq:upgrade}. If the stated lower bound is negative, the dimension-one code suffices.
\end{proof}

\section{Approximating the product image projector}\label{sec:approximation}

We now construct the approximating operator $\widetilde \Pi$ required by the quantum blowing-up lemma for $n$ uses of $\N$.  The Stinespring image projector of $\N^{\otimes n}$ is $\Pi^{\otimes n}$, acting on $B^nE^n$. More generally, the same construction approximates a product of arbitrary local projections on copies of $B\otimes E$.

To construct $\widetilde \Pi$, we use a low-degree polynomial approximating the NOR function on the Boolean cube~\cite{Sherstov}. This polynomial is applied to the sum of the local complementary projectors $\I-\Pi_i$ corresponding to the $i$-th use of the channel. Its degree controls both the approximation error and the projective tensor norm. 

\begin{fact}[NOR approximation {\cite[Theorem~4.5]{Sherstov}}]\label{fact:polynomial}
There is a universal constant $c>0$ such that, for any integers $n$ and $D$ satisfying $3\le D\le n$, there is a real polynomial $p_{n,D}$ of degree at most $D$ satisfying
\begin{align}\label{eq:polyinput}
 p_{n,D}(0)=1,\qquad
 \max_{k\in\{1,\ldots,n\}}|p_{n,D}(k)|
 \le2^{-cD^2/n}, \qquad \max_{k\in\{0,\ldots,n\}}|p_{n,D}(k)|\le1.
\end{align}
\end{fact}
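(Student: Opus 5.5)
Since this is quoted from~\cite[Theorem~4.5]{Sherstov}, the plan is to reconstruct it by combining a Chebyshev polynomial, which is uniformly small on a long interval, with an explicit product that vanishes exactly at the integers near $0$. The product is what lets discreteness beat the continuous bound. A pure Chebyshev construction only gives $2^{-O(D/\sqrt n)}$. To see this, take $t(x)=(n+1-2x)/(n-1)$, which maps $[1,n]$ onto $[-1,1]$ and $0$ to $1+2/(n-1)$. Then $p(x)=T_D(t(x))/T_D(t(0))$ satisfies $p(0)=1$ and $|p(k)|\le 1/T_D(1+\tfrac{2}{n-1})\approx e^{-2D/\sqrt n}$. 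This is weaker than $2^{-cD^2/n}$ once $D\gg\sqrt n$, so the integrality of the points must be used.

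\textbf{Step 1: kill the first $m$ integers.} Fix $m\le n$, to be chosen below, and a Chebyshev degree $d$ with $m+d\le D$. Set
\[
p(x)=\prod_{j=1}^{m}\Bigl(1-\frac xj\Bigr)\cdot\frac{T_d(s(x))}{T_d(s(0))},\qquad s(x)=\frac{n+m+1-2x}{n-m-1}.
\]
The affine map $s$ sends $[m+1,n]$ onto $[-1,1]$, and $s(0)=1+\Theta(m/n)$. Clearly $p(0)=1$ and $p(k)=0$ for $1\le k\le m$.

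\textbf{Step 2: balance gain against loss.} For $m<k\le n$, the product factor has absolute value $\binom{k-1}{m}\le(en/m)^m$. The Chebyshev factor is at most $1/T_d(s(0))\le 2e^{-c'd\sqrt{m/n}}$. Taking $d\approx D/2$ and $m\approx D^2/(Cn)$ yields a gain of $e^{-c'D^2/(\sqrt C n)}$ against a loss of $e^{(D^2/Cn)\log(en/m)}$.

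\textbf{Main obstacle: the logarithm.} The loss carries a factor $\log(en/m)\approx\log(n^2/D^2)$, so this naive balance only works when $D$ is a constant fraction of $n$. For smaller $D$ the logarithm ruins the $D^2/n$ exponent. This is precisely the issue Sherstov resolves. My first attempt would replace the consecutive nodes $1,\dots,m$ by a sparse, roughly geometrically spaced set of nodes, and use several Chebyshev pieces adapted to dyadic ranges $[2^i,2^{i+1}]$ of $k$. The aim is that each piece's growth on the other ranges is bounded by a constant factor rather than $\binom{n}{m}$, making the total loss $O(m)$ and leaving a net exponent $-\Omega(D^2/n)$.

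\textbf{Step 3: boundedness by $1$ on $\{0,\dots,n\}$.} Once $|p(k)|\le 2^{-2cD^2/n}$ holds for $k\ge1$ with $p(0)=1$, the only remaining issue is the size of $p$ at the nodes. If the construction above controls $|p|$ only up to a constant factor there, I would rescale the degree budget by a constant, which absorbs that factor into $c$. Because $p(0)=1$ is fixed exactly and the other values are exponentially small, this gives $\max_{0\le k\le n}|p(k)|\le1$ for a suitable universal $c$.
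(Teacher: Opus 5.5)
The paper does not prove this statement; it imports it from Sherstov's Theorem~4.5. So the question is whether your reconstruction stands on its own, and it does not yet: the gap sits exactly where the content of the cited theorem lies.

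As you note, Steps~1--2 pay $\binom{k-1}{m}\le\binom{n-1}{m}\approx(en/m)^m$ near $k=n$, against a Chebyshev gain that is uniform in $k$. Optimizing $m$ therefore gives only $\max_{1\le k\le n}|p(k)|\le 2^{-\Omega(D^2/(n\log(n/D)))}$. This matches the claim only when $D\ge\theta n$ for a fixed $\theta$, and then with a constant degrading like $1/\log(1/\theta)$. That would suffice for Theorems~\ref{thm:strongconverse}--\ref{thm:private-exponential}, which only use $D=\lfloor\theta n\rfloor$. It does not give the Fact with a universal $c$, which Theorem~\ref{thm:geometric} needs for all $3\le D\le n$.

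Your paragraph on removing the logarithm is a plan, not an argument, and its concrete suggestion points the wrong way. With sparse geometric nodes, the non-node integers $k\lesssim D^2/n$ are left to the Chebyshev pieces. A Chebyshev factor of degree $d$ normalized at $0$ only guarantees $e^{-O(d\sqrt{k/n})}$ at $k$, which is too weak there. Nor is the point to make the loss of the zero factor $O(m)$: with consecutive zeros it genuinely is about $m\log(ek/m)$ at $k$. What has to change is that the suppression grows with the scale of $k$.

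A repair along your lines works as follows. Keep all zeros $1,\dots,m$ with $m=\lceil c_0D^2/n\rceil$. Replace the single Chebyshev factor by one factor $C_i(x)=T_{d_i}(s_i(x))/T_{d_i}(s_i(0))$ for each dyadic threshold $a_i=2^im<n$. Here $s_i$ maps $[a_i,n]$ affinely onto $[-1,1]$, and $d_i=\lceil K\sqrt{nm}\,2^{-i/4}\rceil$. Then $C_i(0)=1$, $|C_i|\le1$ on $[0,n]$, and $|C_i|\le2e^{-d_i\sqrt{2a_i/n}}\le2e^{-\sqrt{2}\,Km2^{i/4}}$ on $[a_i,n]$. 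Hence for $k\in(2^im,2^{i+1}m]$ one gets $|p(k)|\le(e2^{i+1})^m\,2^{i+1}e^{-\sqrt{2}\,Km2^{i/4}}\le e^{-m}\le 2^{-c_0D^2/n}$ once $K$ is a suitable absolute constant, since $2^{i/4}$ outgrows $i$.

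The degrees sum to $m+O(\log n)+O(K\sqrt{nm})$. This is at most $C'K\sqrt{c_0+1/C_1}\,D\le D$ whenever $D^2\ge C_1n$, for an absolute $C'$, small $c_0$ and large $C_1$. For $D^2<C_1n$, your pure Chebyshev polynomial already suffices, since $T_D(1+\tfrac{2}{n-1})\ge1+\tfrac{2D^2}{n-1}$.

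Finally, Step~3 is vacuous. The conditions $p(0)=1$ and $|p(k)|\le2^{-cD^2/n}\le1$ for $k\ge1$ already give the third condition, so no rescaling is needed.
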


\begin{lemma}[Polynomial projector approximation]\label{lem:polynomial}
Let $\Pi_1,\ldots,\Pi_n$ be arbitrary orthogonal projections on copies
$B_i\otimes E_i$ of the same space $B\otimes E$. Put
$\Pi=\bigotimes_{i=1}^n\Pi_i$ and
$\kappa=\min\{(\dim B)^2,(\dim E)^2\}$.
For integers $3\le D\le n$, there is an operator ${\widetilde\Pi}_D$ satisfying
\begin{equation}\label{eq:polyK}
 {\widetilde\Pi}_D\Pi=\Pi{\widetilde\Pi}_D=\Pi,\qquad
 \op{{\widetilde\Pi}_D-\Pi}\le2^{-cD^2/n}.
\end{equation}
Moreover, ${\widetilde\Pi}_D$ is a sum of operators supported on at most
$D$ pairs $B_iE_i$. If $D\le n/2$, it also satisfies
\begin{equation}\label{eq:Gamma}
 \pn{{\widetilde\Pi}_D}\le\Gamma_{n,D}:=(2\kappa)^D2^{n h(D/n)}.
\end{equation}
Here $h(\cdot)$ is the binary entropy function, and the projective tensor norm is based on the bipartition $B^n:E^n$.
\end{lemma}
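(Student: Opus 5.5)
We set $P_i=\I-\Pi_i$, acting on $B_iE_i$. These projections commute, so we work in their joint spectral decomposition. Let $N=\sum_{i=1}^n P_i$. Its spectrum lies in $\{0,\ldots,n\}$, and the eigenspace for eigenvalue $0$ is exactly the range of $\Pi=\bigotimes_i\Pi_i$. We define $\widetilde\Pi_D=p_{n,D}(N)$, with $p_{n,D}$ from Fact~\ref{fact:polynomial}.

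Spectral calculus gives $\widetilde\Pi_D=\sum_k p_{n,D}(k)E_k$, where $E_k$ is the spectral projector of $N$ for eigenvalue $k$ and $E_0=\Pi$. Since $p_{n,D}(0)=1$, we get $\widetilde\Pi_D\Pi=\Pi\widetilde\Pi_D=\Pi$. The operator $\widetilde\Pi_D-\Pi=\sum_{k\ge1}p_{n,D}(k)E_k$ has norm at most $\max_{k\ge1}|p_{n,D}(k)|\le 2^{-cD^2/n}$. This proves \eqref{eq:polyK}.

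For locality and the norm bound, we use the Boolean-cube structure. Each $P_i$ is idempotent and the $P_i$ commute, so any polynomial of degree at most $D$ in $N$ is a multilinear polynomial in the $P_i$:
\begin{equation}
 p_{n,D}(N)=\sum_{S\subseteq[n],\,|S|\le D}a_S\prod_{i\in S}P_i .
\end{equation}
Each term is supported on at most $D$ pairs $B_iE_i$, which is the locality claim. The coefficients $a_S$ depend only on $|S|$ by symmetry, so we write $a_S=a_{|S|}$.

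A naive bound on $|a_s|$ via the monomial expansion could be exponentially large, so instead we re-expand in the spectral projectors. We write $\I=\Pi_i+P_i$ on every pair and expand $p(N)$ over all subsets $T\subseteq[n]$:
\begin{equation}
 p(N)=\sum_T p(|T|)\prod_{i\in T}P_i\prod_{i\notin T}\Pi_i .
\end{equation}
This form has bounded coefficients, since $|p(k)|\le1$, but it is not local. We therefore rewrite only the $\Pi_i$ factors with $i$ outside a small set. By the multilinear identity, the coefficient $a_S$ equals the finite difference
\begin{equation}
 a_S=\sum_{T\subseteq S}(-1)^{|S|-|T|}p(|T|).
\end{equation}
This gives $|a_s|\le 2^{s}$, using $\max_{0\le k\le n}|p(k)|\le1$.

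The projective norm of one term is handled by Lemma~\ref{lem:projective-dimension}. The operator $\prod_{i\in S}P_i$ is a product over $|S|$ pairs of local operators of norm at most $1$. Applying Lemma~\ref{lem:projective-dimension} factorwise and using that the projective norm is a cross norm (it is submultiplicative over tensor factors), we get $\pn{\prod_{i\in S}P_i}\le\kappa^{|S|}$. Summing over $S$ by the triangle inequality,
\begin{equation}
 \pn{\widetilde\Pi_D}\le\sum_{s\le D}\binom ns 2^s\kappa^s\le(2\kappa)^D\sum_{s\le D}\binom ns\le(2\kappa)^D2^{nh(D/n)} .
\end{equation}
The last step uses $D\le n/2$ and the standard estimate $\sum_{s\le D}\binom ns\le 2^{nh(D/n)}$.

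The main obstacle is the coefficient bound $|a_s|\le 2^s$, which must come from the finite-difference formula rather than from the monomial expansion. The rest is spectral calculus and bookkeeping with the projective norm.
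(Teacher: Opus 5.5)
Your proof is correct and follows the paper's argument. Your Möbius/finite-difference coefficients $a_{|S|}$ are exactly the paper's Newton coefficients in the binomial basis, since $\binom{N}{s}=\sum_{|S|=s}\prod_{i\in S}P_i$, so you get the same bound $|a_s|\le 2^s$ and the same counting. The only cosmetic difference is how you bound each term: you obtain $\pn{\prod_{i\in S}P_i}\le\kappa^{|S|}$ by applying Lemma~\ref{lem:projective-dimension} to each pair and tensoring the decompositions, whereas the paper applies the lemma once to the block $B_S:E_S$. Your submultiplicativity step holds because the operator norm is multiplicative under tensor products, not because of the cross-norm property as such.
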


The proof of this lemma is inspired by the low-degree polynomial construction of Kondra et al.~\cite[Lemma~S3]{Kondra}.

\begin{proof}
Regard each $\Pi_i$ as acting on $B_iE_i$, with identities on the other factors. Set $\Delta_i=\I-\Pi_i$ and $L=\sum_i\Delta_i$. These projections commute, so $L$ has spectrum in $\{0,\ldots,n\}$ and the projector onto the kernel of $L$ is $\Pi=\bigotimes_{i=1}^n\Pi_i$. Thus, choosing ${\widetilde\Pi}_D=p_{n,D}(L)$ and applying Fact~\ref{fact:polynomial} give \eqref{eq:polyK}. We next establish the support property and, when $D\le n/2$, the bound~\eqref{eq:Gamma}. 

To expand $p_{n,D}(L)$ into terms supported on small subsets of tensor
factors, write the polynomial in the binomial basis. Newton's interpolation formula gives
\begin{equation}\label{eq:newton-coefficients}
 p_{n,D}(x)=\sum_{s=0}^D a_s\binom xs,\qquad
 a_s=\sum_{k=0}^s(-1)^{s-k}\binom sk p_{n,D}(k).
\end{equation}
The coefficient $a_s$ is the $s$-th forward difference of $p_{n,D}(x)$ at zero.
Indeed, taking a forward difference, $f(x)\mapsto f(x+1)-f(x)$,
sends $\binom{x}{s}$ to $\binom{x}{s-1}$; taking $s$ differences
and evaluating at zero therefore isolates $a_s$. Expanding these successive
differences gives the formula for $a_s$ in \eqref{eq:newton-coefficients}.
Since $s\le D\le n$, Fact~\ref{fact:polynomial} gives 
$|p_{n,D}(k)|\leq 1$ for every integer $0\leq k\leq n$. Consequently,
\begin{equation}\label{eq:newton-coefficient-bound}
 |a_s|\le\sum_{k=0}^s\binom sk |p_{n,D}(k)|
 \le\sum_{k=0}^s\binom sk=2^s.
\end{equation}

The operators $\Delta_i$ commute and are projections, so they admit a
common eigenbasis with eigenvalues in $\{0,1\}$. On a common eigenvector
with exactly $k$ eigenvalues equal to one, $L$ acts as $k$, while
$\prod_{i\in S}\Delta_i$ acts as $1$ precisely when every index in $S$
is among these $k$ active indices. There are $\binom{k}{s}$ such subsets
of size $s$. Thus
$\binom{L}{s}=\sum_{|S|=s}\prod_{i\in S}\Delta_i$, and
\eqref{eq:newton-coefficients} yields
\begin{equation}\label{eq:newtonK}
 {\widetilde\Pi}_D=p_{n,D}(L)=\sum_{s=0}^D a_s\binom Ls=\sum_{s=0}^D a_s\sum_{\substack{S\subseteq\{1,\ldots,n\}\\|S|=s}}
 \prod_{i\in S}\Delta_i.
\end{equation}
The empty product, corresponding to $s=0$, is the identity.
Each term acts on at most $D$ pairs, proving the support assertion.
For the projective-norm bound, suppose now that $D\le n/2$.

For $|S|=s$, the operator $\prod_{i\in S}\Delta_i$ is a projection on
$B_SE_S$, tensored with identities on the remaining registers. Its operator
norm is at most one. Applying Lemma~\ref{lem:projective-dimension} on
$B_S:E_S$, whose local dimensions are $(\dim B)^s$ and $(\dim E)^s$,
gives a projective tensor norm of at most $\kappa^s$.
Adjoining the local identities on $B_{S^c}$ and $E_{S^c}$ does not increase
this bound, since these identities have operator norm one.
The triangle inequality, the $\binom ns$ subsets of size $s$, and
\eqref{eq:newton-coefficient-bound} now give
\begin{equation}\label{eq:projective-sum-bound}
 \pn{{\widetilde\Pi}_D}
 \le\sum_{s=0}^D |a_s|\binom ns\kappa^s
 \le\sum_{s=0}^D\binom ns(2\kappa)^s
 \le(2\kappa)^D\sum_{s=0}^D\binom ns,
\end{equation}
where the last inequality uses $2\kappa\ge1$ and $s\le D$.
Put $t=D/n\le1/2$. Then $t^s(1-t)^{n-s}\ge t^D(1-t)^{n-D}=2^{-nh(t)}$ for $s\le D$. The binomial theorem yields
\begin{equation}
 1\ge\sum_{s=0}^D\binom ns t^s(1-t)^{n-s}
 \ge2^{-nh(t)}\sum_{s=0}^D\binom ns.
\end{equation}
Combining the last two bounds proves \eqref{eq:Gamma}.
\end{proof}

\section{Strong converse for quantum capacity}\label{sec:strongconverse}
We give the proof of Theorem~\ref{thm:strongconverse}, using the
one-shot conversion and polynomial approximation from
Sections~\ref{sec:blowingup} and~\ref{sec:approximation}.
For $n$ uses of $\N$, an entanglement-transmission code consists of channels $\Enc_n:L_n\to A^n$ and $\Dec_n:B^n\to\widehat L_n$, where $\dim L_n=\dim\widehat L_n=\dim R_n=M_n$. The reference $R_n$ is untouched. Its entanglement fidelity $F_n$ is defined by \eqref{eq:codeF}, with $\N$, $\Enc$, and $\Dec$ replaced by $\N^{\otimes n}$, $\Enc_n$, and $\Dec_n$, and with maximally entangled input and target states on $R_nL_n$ and $R_n\widehat L_n$, respectively. Both maps may act jointly across all channel uses; in particular, the encoded inputs may be entangled across the $n$ copies. The rate is $(\log M_n)/n$ qubits per channel use.

The quantum capacity $Q(\N)$ is the supremum of rates $r$ for which a sequence of these codes has $\liminf_{n\to\infty}(\log M_n)/n\ge r$ and $F_n\to1$. We use the following weak converse bound,
which follows from the characterization of quantum capacity in terms of regularized coherent information.

\begin{fact}[Weak converse for quantum communication {\cite[Theorem~5 (converse) and Proposition~7]{Devetak}}]\label{fact:weakconverse}
For every $r_0>Q(\N)$, there are $0<\varepsilon<1$ and an integer $n_0$ such that, for every $n\ge n_0$, any code for $\N^{\otimes n}$ with fidelity at least $1-\varepsilon^2$ has dimension $M'$ satisfying $\log M'\le nr_0$.
\end{fact}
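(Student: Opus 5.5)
The plan is to derive the fact directly from the coherent-information characterization~\eqref{eq:capacityformula} with a continuity estimate, so that the threshold $\varepsilon$ depends only on the gap $r_0-Q(\N)$ and not on $n$. Fix a code $(\Enc_n,\Dec_n)$ for $\N^{\otimes n}$ with dimension $M'$ and fidelity $F\ge1-\varepsilon^2$. Let $\omega_{R\widehat L}=(\id_R\otimes\Dec_n\circ\N^{\otimes n}\circ\Enc_n)(\Phi_{RL})$. The goal is a bound of the form
\begin{equation*}
 \log M'\le Q^{(n)}(\N)+f(\varepsilon)\log M'+g(\varepsilon),
\end{equation*}
with $f(\varepsilon),g(\varepsilon)\to0$ as $\varepsilon\to0$.

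\emph{Lower bound on the output coherent information.} Since $F(\Phi,\omega)\ge1-\varepsilon^2$, the purified distance is at most $\varepsilon$. Hence the trace distance between $\omega_{R\widehat L}$ and $\Phi_{R\widehat L}$ is at most $\varepsilon$. The coherent information $I(R\rangle\widehat L)=H(\widehat L)-H(R\widehat L)$ equals $\log M'$ for $\Phi$. The Alicki--Fannes--Winter continuity bound for conditional entropy then gives
\begin{equation*}
 I(R\rangle\widehat L)_\omega\ge\log M'-2\varepsilon\log M'-(1+\varepsilon)h\bigl(\varepsilon/(1+\varepsilon)\bigr).
\end{equation*}
The constants are not important; only the form $(1-O(\varepsilon))\log M'-O(1)$ matters.

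\emph{Upper bound by $Q^{(n)}$.} Data processing for coherent information under the local channel $\Dec_n$ gives $I(R\rangle\widehat L)_\omega\le I(R\rangle B^n)_\sigma$, where $\sigma=(\id_R\otimes\N^{\otimes n}\circ\Enc_n)(\Phi)$.

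To handle a mixed encoding, take a Stinespring dilation of $\Enc_n$ with environment $G$. Adjoining $G$ to the reference can only increase the coherent information, since $I(R\rangle B^n)\le I(RG\rangle B^n)$ by data processing in reverse (discarding $G$ on the reference side; equivalently, strong subadditivity). The state on $RGA^n$ is pure, so $I(RG\rangle B^n)=\Ic(\rho_{A^n},\N^{\otimes n})\le Q^{(n)}(\N)$. By the definition in~\eqref{eq:capacityformula}, $Q^{(n)}(\N)\le nQ(\N)$.

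\emph{Choice of parameters.} Combining the two bounds,
\begin{equation*}
 (1-2\varepsilon)\log M'\le nQ(\N)+2.
\end{equation*}
Given $r_0>Q(\N)$, choose $\varepsilon$ small enough that $Q(\N)/(1-2\varepsilon)<r_0$. Then choose $n_0$ so that $\bigl(nQ(\N)+2\bigr)/(1-2\varepsilon)\le nr_0$ for all $n\ge n_0$.

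\emph{Main obstacle.} The only delicate step is the continuity bound. It must have the form ``$O(\varepsilon)\log M'$ plus a constant'' rather than involving the output dimension $\dim B^n$, which is exponential in $n$. This is why the estimate is applied on $R\widehat L$, both of dimension $M'$, after the decoder, and not on $RB^n$. A second point to check is that data processing for coherent information, used once for $\Dec_n$ and once (reversed) for the encoder environment, holds for arbitrary CPTP maps; this follows from strong subadditivity.
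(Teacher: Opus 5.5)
Your overall route is the standard Devetak-type converse that the paper cites without proof: Fuchs--van de Graaf plus Alicki--Fannes--Winter on the $M'$-dimensional registers $R\widehat L$, data processing through $\Dec_n$, and $Q^{(n)}(\N)\le nQ(\N)$. Those steps, including the constant $2$ and the choice of $\varepsilon$ and $n_0$, are fine.

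\textbf{The gap is the mixed-encoder step.} The inequality $I(R\rangle B^n)\le I(RG\rangle B^n)$ is false in general. Coherent information $I(R\rangle B)=-H(R|B)$ is monotone under channels acting on $B$. There is no ``reversed'' data processing on the reference side, because $I(RG\rangle B^n)-I(R\rangle B^n)=-H(G|RB^n)$ can have either sign. Strong subadditivity gives $H(R|B^n)\ge H(R|B^nG)$, which concerns the conditioning system, not the reference.

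The inequality fails even for perfect codes. Take $A=L'\otimes A_2$ with $L'\cong L$, and let $\N(X)=\Tr_{A_2}X$, so $B=L'$ and $E=A_2$. Let $\Enc$ append a maximally mixed state on a $d$-dimensional $A_2$, purified by $G$, and let $\Dec$ be the identity. Then $F=1$ and $I(R\rangle B)=\log M'$. But $I(RG\rangle B)=H(B)-H(RGB)=H(B)-H(E)=\log M'-\log d$.

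\textbf{The fix.} The bound you actually need, $I(R\rangle B^n)_\sigma\le Q^{(n)}(\N)$, is still true, but it comes from convexity rather than from enlarging the reference. Write $\Enc_n(X)=\sum_kE_kXE_k^\dagger$. Then $(\id_R\otimes\Enc_n)(\Phi_{RL})=\sum_kp_k\phi^{(k)}_{RA^n}$ with each $\phi^{(k)}$ pure. Set $\sigma^{(k)}=(\id_R\otimes\N^{\otimes n})(\phi^{(k)})$.

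Adjoin a \emph{classical} register $K$ holding $k$ to Bob's side. Strong subadditivity then gives $H(R|B^n)_\sigma\ge H(R|B^nK)=\sum_kp_kH(R|B^n)_{\sigma^{(k)}}$. Since each $\phi^{(k)}_{RA^n}$ is pure, $I(R\rangle B^n)_{\sigma^{(k)}}=\Ic(\phi^{(k)}_{A^n},\N^{\otimes n})$. Hence $I(R\rangle B^n)_\sigma\le\sum_kp_k\Ic(\phi^{(k)}_{A^n},\N^{\otimes n})\le Q^{(n)}(\N)$.

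Adjoining the quantum register $G$ itself to Bob's side would not work. If the encoder simply moves $L$ into $G$, then $I(R\rangle B^nG)=\log M'$ whatever $\N$ is, so this quantity is not bounded by $Q^{(n)}(\N)$.

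With this replacement the rest of your argument goes through unchanged.
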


\begin{proof}[Proof of Theorem~\ref{thm:strongconverse}]
The code conversion in Theorem~\ref{thm:upgrade} also applies to the arbitrary
encoded states allowed in Theorem~\ref{thm:strongconverse}. Indeed, its proof uses the initial state only to obtain a normalized
output $\rho$ supported on the Stinespring image with $\Tr(\rho Q)=F$.
For an arbitrary state on $RA^n$ and an $M$-dimensional decoder target, these
properties and $\Tr_RQ=\I_{B^n}/M$ still hold. The same proof therefore produces
an entanglement-transmission code, to which Fact~\ref{fact:weakconverse} applies.

Set $r=Q(\N)+\gamma$. Choose $Q(\N)<r_0<r$, and fix $\varepsilon$ and $n_0$ from Fact~\ref{fact:weakconverse}. Fix a Stinespring isometry $U$ of $\N$ with image projector $\Pi = UU^\dagger$, and set $\kappa=\min\{(\dim B)^2,(\dim E)^2\}$. Choose $0<\theta<1/2$ small enough that $2h(\theta)+2\theta\log(2\kappa)<r-r_0$, and set
\begin{equation}\label{eq:exponentchoice}
 \alpha_\gamma=\min\left\{\frac{c\theta^2}{8},\ \frac{r-r_0}4-\frac12h(\theta)-\frac\theta2\log(2\kappa)\right\}>0,
\end{equation}
where $c$ is the constant from Fact~\ref{fact:polynomial}. All these choices are independent of $n$ and of the code.

For sufficiently large $n$, put $D=\lfloor\theta n\rfloor$. Then $3\le D\le n/2$ and $D\ge\theta n/2$. Lemma~\ref{lem:polynomial} gives an operator ${\widetilde\Pi}_D$ satisfying ${\widetilde\Pi}_D\Pi^{\otimes n}=\Pi^{\otimes n}{\widetilde\Pi}_D=\Pi^{\otimes n}$, with
\begin{align}\label{eq:scapprox}
& \op{{\widetilde\Pi}_D-\Pi^{\otimes n}}\le 2^{-cD^2/n}
 \le2^{-nc\theta^2/4}\le2^{-2\alpha_\gamma n},\\
& \log\Gamma_{n,D}=n h(D/n)+D\log(2\kappa)
 \le n\bigl[h(\theta)+\theta\log(2\kappa)\bigr].
\end{align}
The last inequality uses the monotonicity of $h(\cdot)$  on $(0,1/2)$.

Suppose a code of dimension $M\ge2^{nr}$ had fidelity $F>2^{-2\alpha_\gamma n}$. Increasing $n_0$ so that $2^{-\alpha_\gamma n}\le\varepsilon/2$ for all $n\ge n_0$, we obtain
\begin{equation}
 \op{{\widetilde\Pi}_D-\Pi^{\otimes n}}\le2^{-2\alpha_\gamma n}
 \le\frac\varepsilon2\,2^{-\alpha_\gamma n}<\frac{\varepsilon\sqrt F}{2}.
\end{equation}
Theorem~\ref{thm:upgrade}, applied to the channel $\N^{\otimes n}$, therefore produces a code with fidelity at least $1-\varepsilon^2$ and dimension $M'$ satisfying
\begin{align}
 \log M'&\ge\log M+\log F-2\log\Gamma_{n,D}-4\log\frac2\varepsilon-1\\
 &>n\bigl[r-2\alpha_\gamma-2h(\theta)-2\theta\log(2\kappa)\bigr]-4\log\frac2\varepsilon-1.
\end{align}
By \eqref{eq:exponentchoice}, $r-2\alpha_\gamma-2h(\theta)-2\theta\log(2\kappa)\ge r_0+2\alpha_\gamma$. Increasing $n_0$ once more so that $2n\alpha_\gamma>4\log(2/\varepsilon)+1$, we conclude $\log M'>nr_0$, contradicting Fact~\ref{fact:weakconverse}. Hence every such code satisfies $F\le2^{-2\alpha_\gamma n}$. Since $\alpha_\gamma>0$, this gives $F_n<2^{-\alpha_\gamma n}$ for all sufficiently large $n$, as claimed.
\end{proof}

\section{Strong converse for classical capacity}\label{sec:classical-capacity}
We now apply the channel-image approximation to classical communication and prove an exponential strong converse bound for the classical capacity of all quantum channels.

The strong converse for classical--quantum (c-q) channels has been established in~\cite{OgawaNagaoka,WinterCoding}. In particular, it can be recovered from the following c-q blowing-up lemma. 

Let $B_1,\ldots,B_n$ be copies of the same finite-dimensional space, and let $P$ be an orthogonal projection acting on $B_1,\ldots,B_n$.
For $S\subseteq[n]$, write $B_S=\bigotimes_{i\in S}B_i$.
For an integer $0\le D\le n$, define $P^{[D]}$ to be the orthogonal projection onto the span of
\begin{align}
\bigcup_{X_S: |S|\leq D} \ran\big((X_S\otimes\I_{S^c}) P\big),
\end{align}
where $X_S$ runs over all operators acting on $B_S$ with $|S|\leq D$.
Thus $P^{[0]}=P$, and $P^{[D]}$ permits arbitrary changes on at most $D$ tensor factors, followed by taking the linear span. 

\begin{theorem}[c-q blowing-up lemma]\label{thm:geometric}
There is a universal constant $c>0$ with the following property.
Let $\rho=\rho_1\otimes\cdots\otimes\rho_n$ be any product state and let
$p=\Tr(\rho P)>0$. Then, for every integer $0\le D\le n$,
\begin{equation}\label{eq:starting}
 \boxed{\quad
  \Tr \left[\rho\big(\I-P^{[D]}\big)\right]
  \le \frac{1}{p}\,2^{-cD^2/n}.
 \quad}
\end{equation}
\end{theorem}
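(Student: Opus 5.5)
The plan is a purification argument: apply the polynomial of Fact~\ref{fact:polynomial} to purified product projectors rather than to $\rho$ directly, then argue with vectors.

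\emph{Setup.} For each $i$, purify $\rho_i$ to a unit vector $\ket{\phi_i}$ on $B_i\otimes B_i'$, where $B_i'$ is a copy of $B_i$. Put $\ket\phi=\bigotimes_i\ket{\phi_i}$, $\Pi_i=\ketbra{\phi_i}{\phi_i}$ and $\Pi=\bigotimes_i\Pi_i=\ketbra{\phi}{\phi}$. Then $\Tr(\rho X)=\bra\phi (X\otimes\I_{B'})\ket\phi$ for every operator $X$ on $B_1\cdots B_n$. We apply Lemma~\ref{lem:polynomial}, or more precisely its construction, to the projections $\Pi_i$ on the pairs $B_iB_i'$. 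This gives $\widetilde\Pi_D=p_{n,D}(L)$ with $L=\sum_i(\I-\Pi_i)$. It satisfies $\widetilde\Pi_D\Pi=\Pi\widetilde\Pi_D=\Pi$ and $\op{\widetilde\Pi_D-\Pi}\le 2^{-cD^2/n}$. By \eqref{eq:newtonK} it is a linear combination of operators $\prod_{i\in S}\Delta_i$ acting on $B_SB_S'$ with $|S|\le D$.

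\emph{Key geometric step.} Let $\ket w=(P\otimes\I_{B'})\ket\phi$, so that $\norm{w}^2=p$ and $\braket{\phi}{w}=p$. Expand each term $\prod_{i\in S}\Delta_i$ as a finite sum $\sum_j X_{S,j}\otimes Y_{S,j}$, with $X_{S,j}$ on $B_S$ and $Y_{S,j}$ on $B'$. Then every vector $(X_{S,j}\otimes Y_{S,j})\ket w$ lies in $\ran((X_{S,j}\otimes\I_{S^c})P)\otimes B'\subseteq\ran(P^{[D]})\otimes B'$. Hence $(P^{[D]}\otimes\I)\widetilde\Pi_D\ket w=\widetilde\Pi_D\ket w$. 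On the other hand,
\begin{align}
\widetilde\Pi_D\ket w=\Pi\ket w+(\widetilde\Pi_D-\Pi)\ket w=p\ket\phi+\ket e,\qquad \norm{e}\le 2^{-cD^2/n}\sqrt p .
\end{align}
Applying $(\I-P^{[D]})\otimes\I$ kills the left-hand side, so $p\,((\I-P^{[D]})\otimes\I)\ket\phi=-((\I-P^{[D]})\otimes\I)\ket e$. Therefore
\begin{align}
\Tr\big[\rho(\I-P^{[D]})\big]=\norm{((\I-P^{[D]})\otimes\I)\ket\phi}^2\le\frac{\norm e^2}{p^2}\le \frac1p\,2^{-2cD^2/n},
\end{align}
which is even stronger than \eqref{eq:starting}.

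\emph{Small $D$.} The only delicate point is the range $D\le2$, where Fact~\ref{fact:polynomial} is unavailable. For $D=0$ the bound reads $1-p\le1/p$, which is trivial. For $D\in\{1,2\}$ we run the same argument with the degree-one polynomial $1-x/n$. It equals $1$ at $0$, and on $\{1,\ldots,n\}$ it is bounded in absolute value by $1-1/n\le 2^{-(\log e)/n}\le 2^{-(\log e)D^2/(4n)}$.

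\emph{Choice of constant.} Taking the universal constant to be $\min\{c,(\log e)/4\}$ handles all $D$ simultaneously, so a single $c$ works in \eqref{eq:starting}. I expect no real obstacle beyond checking carefully that the terms of $\widetilde\Pi_D$, which act jointly on $B_SB_S'$, map $\ran(P)\otimes B'$ into $\ran(P^{[D]})\otimes B'$. This holds because the inclusion is linear and $B'$ is left unrestricted.
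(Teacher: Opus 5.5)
Your proof is correct and follows essentially the same route as the paper. You purify each $\rho_i$, apply the polynomial construction of Lemma~\ref{lem:polynomial} to the rank-one projectors $\ketbra{\phi_i}{\phi_i}$, use the support property to place $\widetilde\Pi_D(P\otimes\I)\ket\phi$ in $\ran(P^{[D]})\otimes B'$, and compare it with $p\ket\phi$. The differences are cosmetic: the paper writes the error as $(\widetilde\Pi_D-\Pi)(\I-\Pi)\ket w$ to get the slightly sharper factor $(1-p)/p$, and it handles $D\le2$ with the trivial bound $1-p\le(4p)^{-1}$ after shrinking $c\le1$, whereas you use the degree-one polynomial $1-x/n$.
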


We note that when $\rho_i$'s are all diagonal, and $P$ is diagonal in the corresponding product basis, this theorem reduces to a classical blowing-up inequality. Indeed, this theorem is more
reminiscent of the classical blowing-up lemma: a subspace of nonzero
probability acquires probability close to one after enlargement on a
small number of tensor factors.

This theorem first appeared in Osborne's 2009 blog post, reporting
joint work with Winter~\cite{OW}. We present a different proof in
Appendix~\ref{app:product-blowing}, using the same ideas as our construction in Lemma~\ref{lem:polynomial}. This shows the broad applicability of our blowing-up argument.

\subsection{Quantum blowing-up lemma for classical communication}
We now turn to classical channel codes for all quantum channels $\mathcal N$. 
As before, we denote by $\Pi=UU^\dagger$ the Stinespring image projector of $\N$.
An $M$-message code for $\N$ consists of input states $\rho_1,\ldots,\rho_M$ and a decoding POVM $\{\Lambda_m\}_{m=1}^M$ on $B$.
For uniformly distributed messages, the average success probability is
\begin{equation}\label{eq:success}
 p=\frac1M\sum_{m=1}^M\Tr\bigl[\Lambda_m\N(\rho_m)\bigr].
\end{equation}
The following achievability result plays the role of Fact~\ref{fact:oneshot} for classical capacity.

\begin{fact}[One-shot classical achievability {\cite[Theorem~1]{RenesRenner}}]
\label{fact:classical-oneshot}
Let $\{q_x,\rho_x\}$ be any finite ensemble of channel inputs and put
$\omega_{XB}=\sum_xq_x\ketbra xx\otimes\N(\rho_x)$.
For every $0<\delta<1$, there is a classical code with
average decoding error at most $\delta$ and message size $M'$ satisfying
\begin{equation}\label{eq:classical-achievability}
 \log M'\ge H_{\min}(X)_\omega
       -H_{\max}^{\delta/8}(X|B)_\omega
       -4\log\frac1\delta-16.
\end{equation}
\end{fact}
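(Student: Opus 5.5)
Since this is a cited result of Renes and Renner, the plan is to reconstruct it by combining two hashing primitives: \emph{data compression with quantum side information} and \emph{randomness extraction}. Conceptually, a classical code is an extractor output read off the part of $X$ that is left over once a syndrome of $X$ has been fixed. Concretely, draw a two-universal hash $x\mapsto(g(x),f(x))$ with $g(x)\in\{0,1\}^{\ell}$ and $f(x)\in[M']$. Fix a syndrome value $s$, to be made known to the decoder. For each message $m$, let the encoder send the ensemble $\{q_x,\rho_x\}$ conditioned on $g(x)=s$ and $f(x)=m$. The decoder knows $s$ and must recover $x$, and hence $m=f(x)$, from $B$.

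The first step is reliability. I would take $\ell\approx H_{\max}^{\delta/8}(X|B)_\omega+2\log\frac1\delta+O(1)$. The one-shot Slepian--Wolf-type compression theorem with quantum side information (in the Renes--Renner form) then says that, on average over the hash and over $s$ distributed as $g(X)$, a pretty-good-measurement-type decoder recovers $X$ from $(g(X),B)$ with error $O(\delta)$. This is where smoothing enters. One replaces $\omega_{XB}$ by a nearby $\sigma_{XB}$ attaining the smooth max-entropy and pays the purified distance $\delta/8$ in the error. Monotonicity of fidelity converts this into a trace-distance penalty, which is where the specific constants come from.

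The second step is uniformity of the message. Applying the leftover hash lemma to the classical variable $X$ with min-entropy $H_{\min}(X)_\omega$ shows that the joint output $(g(X),f(X))$ is within $O(\sqrt{2^{\ell}M'2^{-H_{\min}(X)}})$ of uniform. Choosing $\log M'=H_{\min}(X)-\ell-2\log\frac1\delta-O(1)$ makes this at most $O(\delta)$. Two consequences follow. For most syndromes $s$, the induced message distribution is close to uniform on $[M']$, and each message class is nonempty with the prescribed weight. Hence the decoding error, averaged with the true distribution of $m$, differs by $O(\delta)$ from the error for uniformly distributed messages.

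Finally I would derandomize. Since the average over the hash, $s$ and $m$ of (decoding error $+$ distance from uniformity) is $O(\delta)$, some fixed hash and syndrome achieve this bound. Rescaling $\delta$ by a constant gives error at most $\delta$, with the constant losses collected into $4\log\frac1\delta+16$. I expect the main obstacle to be the first step: establishing the side-information compression bound in terms of the \emph{smooth} $H_{\max}$ of the \emph{unconditioned} ensemble, while the code actually uses the conditioned ensemble for each message. I would handle this by analyzing the averaged error over the full random construction, so that conditioning on $(s,m)$ merely reweights a quantity already controlled on $\omega_{XB}$.
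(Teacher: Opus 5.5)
Your outline is correct and is essentially the Renes--Renner construction: a hashed syndrome $g(X)$ handles information reconciliation against $B$, leftover hashing makes $(g(X),f(X))$ jointly near uniform, and then the syndrome is fixed, $f(X)$ is used as the message, and the construction is derandomized; the paper does not reprove this but imports it from their Theorem~1, noting only that the cited bound, with the larger $H_{\min}^{\delta/8}(X)$ and control of the maximal error, implies the stated weaker form. Two details need tightening: you cannot fix the constants by rescaling $\delta$ at the end, because that also shrinks the smoothing radius below $\delta/8$ and weakens the $H_{\max}$ term, so split the error budget directly among smoothing, reconciliation and extraction, as Renes and Renner do; and draw $g$ and $f$ independently (or as a random linear map), since two-universality of the pair $(g,f)$ alone does not make $g$ two-universal, which the compression step needs.
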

The cited result has the larger term $H_{\min}^{\delta/8}(X)$ and even
controls the maximal decoding error. The weaker form above suffices here.

\begin{theorem}[Quantum blowing-up lemma for classical communication]\label{thm:oneshot}
Consider an $M$-message classical code with average success probability
$p>0$, and fix $0<\delta<1$. Suppose an operator $\widetilde\Pi$ on
$B\otimes E$ satisfies, for some $\varepsilon\ge0$ and $\Gamma>0$,
\begin{equation}\label{eq:oneshot-hyp}
 \widetilde\Pi\Pi=\Pi\widetilde\Pi=\Pi,\qquad
 \op{\widetilde\Pi-\Pi}\le\varepsilon,\qquad
 \pn{\widetilde\Pi}\le\Gamma.
\end{equation}
If $\varepsilon\le\delta\sqrt p/8$, then there is a classical code for the same channel with average success
probability at least $1-\delta$ and message size $M'$ satisfying
\begin{equation}\label{eq:classical-upgrade}
 \boxed{\quad
 \log M'\ge\log M+2\log p-2\log\Gamma
                    -4\log\frac1\delta-16.
 \quad}
\end{equation}
\end{theorem}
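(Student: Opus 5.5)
The plan is to mimic the proof of Theorem~\ref{thm:upgrade}, with the entanglement fidelity test replaced by a classical--quantum test, and then to feed the resulting entropy bounds into Fact~\ref{fact:classical-oneshot}.

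First, reduce to pure inputs. Purify each $\rho_m$, or decompose it, so that the code outputs are vectors in the Stinespring image. Introduce a message register $X$, coherified into a register $R$ with basis $\ket m$. Set
\[
Q=\sum_m \Lambda_m\otimes\ketbra mm_R ,
\]
so that $\Tr_R Q=\I_B$ and $0\le Q\le \I$. Let $\rho_{BER}=\frac1M\sum_m U\rho_mU^\dagger\otimes\ketbra mm$, which gives $\Tr(\rho Q)=p$. By averaging (Markov), select a subset of messages, or an ensemble, on which the success probability is at least $p/2$; this contributes one factor of $p$ to the loss in message size. For each message $m$ in the good set, take a unit vector $\ket{\psi_m}$ in the image of $U$ with $\bra{\psi_m}\Lambda_m\ket{\psi_m}$ large. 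Then form the smoothed vectors $\ket{\widetilde\psi_m}\propto\widetilde\Pi\Lambda_m\ket{\psi_m}$.

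Second, control the distance. The orthogonality argument of Step~3 gives $P(\widetilde\psi_m,\psi_m)\le \varepsilon/\sqrt{\lambda_m}$. Requiring $\lambda_m\gtrsim p$ together with $\varepsilon\le\delta\sqrt p/8$ makes the averaged ensemble $\delta/8$-close to the true one in purified distance. This is where the condition $\varepsilon\le\delta\sqrt p/8$ enters.

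Third, bound $H_{\max}^{\delta/8}(X|B)_\omega$. I would work in the dual form $H_{\max}(X|B)=-\Hmin(X|ER')$ for the purified ensemble $\sum_m\sqrt{q_m}\ket m_X\ket{m}_{X'}\ket{\widetilde\psi_m}_{BE}$. Applying Lemma~\ref{lem:test-marginal}(ii) with $c=\op{\Tr_R Q}=1$, the projective decomposition of $\widetilde\Pi$, and parts (iii) and (i) exactly as in Step~4, I expect to dominate the marginal on $E X X'$ by $\Gamma^2/(\text{norm}^2)$ times an operator of controlled trace tensored with identity. This should yield
\[
H_{\max}(X|B)_{\widetilde\omega}\le 2\log\Gamma+\log\tfrac1p+O(1).
\]
Together with $H_{\min}(X)\ge\log M+\log p-1$ for the restricted uniform distribution, Fact~\ref{fact:classical-oneshot} then gives \eqref{eq:classical-upgrade}, with the constants absorbed into the $-16$.

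The main obstacle is the third step. The quantity $H_{\max}(X|B)$ conditions on the receiver rather than the environment, so the duality must be arranged so that the complementary system carries $E$ together with a copy of $X$, and the trace accounting must yield only one extra $\log(1/p)$. If this fails, I would instead smooth the whole classical--quantum state with a single maximal-eigenvector construction over $\Pi Q\Pi$, as in Step~2.
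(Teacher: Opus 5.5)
Your route differs from the paper's in three ways: you keep the POVM instead of first making it projective, you select good messages by Markov instead of reweighting, and you smooth each message separately. The step you flag as the main obstacle does go through with the tools you name. Put $\widehat Q=\sum_m\Lambda_m\otimes\ketbra mm_X$ and $\ket\phi=\sum_m\frac{\sqrt{q_m}}{N_m}\ket m_X\ket m_{X'}\ket{\psi_m}$, where $N_m=\norm{\widetilde\Pi\Lambda_m\ket{\psi_m}}\ge\lambda_m$, so that your smoothed vector is $\widetilde\Pi\widehat Q\ket\phi$. Lemma~\ref{lem:test-marginal}(ii), with $X'E$ in the role of $E$, $X$ in the role of $R$, and $c=\op{\sum_m\Lambda_m}=1$, gives $\Tr_B(\widehat Q\ketbra\phi\phi\widehat Q)\le\tau_{X'E}\otimes\I_X$. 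Here $\Tr\tau_{X'E}=\bra\phi\widehat Q\ket\phi=\sum_mq_m\lambda_m/N_m^2\le\sum_mq_m/\lambda_m$. Parts (iii) and (i) then cost only $\Gamma^2$. The coherent copy $X'$ is what puts $\tau$ on the conditioning side. This is the POVM alternative mentioned in the paper's footnote. The paper instead dilates the decoder with the isometry $J$, so the $Q_m$ are orthogonal projections and $\theta_{XX'E}\le\I_X\otimes\theta_{X'E}$ holds by block diagonality. Your fallback would fail: $\Pi Q\Pi$ is block diagonal in the message register, so its top eigenvector sits on a single message. Moreover $c=\op{\Tr_RQ}=1$ rather than $1/M$, so no $\log M$ is gained.

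The genuine gap is the Markov selection, which does not yield the theorem with its stated hypothesis and constants. Keeping only messages with success at least $p/2$ gives you only $\lambda_m\ge p/2$, so your smoothing radius is $\sqrt2\,\varepsilon/\sqrt p$. The hypothesis $\varepsilon\le\delta\sqrt p/8$ bounds this by $\sqrt2\,\delta/8$, not by $\delta/8$. Hence Fact~\ref{fact:classical-oneshot} cannot be invoked with parameter $\delta$: you would get success $1-\sqrt2\,\delta$, or you would need $\varepsilon\le\delta\sqrt p/(8\sqrt2)$. Any threshold $tp$ with $t<1$ has the same defect, and $t=1$ guarantees no surviving messages. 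The lost bits also cannot be ``absorbed into the $-16$'': that constant is exactly the Fact's, and \eqref{eq:classical-upgrade} has no slack.

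The missing idea is the paper's reweighting. Keep every message, set $\lambda_m=\op{\Pi\Lambda_m\Pi}$, and take $q_m=\lambda_m/Z$ with $Z=\sum_m\lambda_m\ge Mp$. Since $\lambda_m\le1$, this gives $\Hmin(X)\ge\log M+\log p$ and $\sum_mq_m/\lambda_m=M/Z\le1/p$ at the same time. The distance is then at most $\varepsilon/\sqrt p$. The paper gets this from one global vector $\ket\theta=\sum_m\frac{\sqrt{q_m}}{\lambda_m}\ket m\ket m\,Q_m\ket{\psi_m}$, which satisfies $\Pi\ket\theta=\ket\psi$ and $\norm{\ket\theta}^2=M/Z$. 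With your per-message normalization, the same bound follows from $\lambda_m/N_m\ge(1+\varepsilon^2/\lambda_m)^{-1/2}$ and convexity of $x\mapsto(1+x)^{-1/2}$. With this change your argument gives \eqref{eq:classical-upgrade} exactly.
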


\begin{proof}[Proof of Theorem~\ref{thm:oneshot}]
\textit{Step 1: make the decoder projective.}
Define the isometry
$J:B\to\widehat B=B\otimes\mathbb C^M$ by
$J\ket b=\sum_m\sqrt{\Lambda_m}\ket b\otimes\ket m$.
Let $Q_m=\I_B\otimes\ketbra mm$. Then
$J^\dagger Q_mJ=\Lambda_m$, and $Q_m$'s are mutually orthogonal projections. Thus, replacing $U$ by $(J\otimes\I_E)U$ and the operators $\Pi$ and $\widetilde{\Pi}$ by $(J\otimes\I_E)\Pi(J^\dagger\otimes\I_E)$ and 
$(J\otimes\I_E)\widetilde{\Pi}(J^\dagger\otimes\I_E)$ respectively, we can ensure that the decoding POVM is projective. On the other hand, application of the isometry $J$ on the channel output preserves the assumptions in~\eqref{eq:oneshot-hyp}. In particular, it does not increase the projective norm. Moreover, any decoder on the enlarged receiver can be composed with $J$, giving a decoder for the original channel.
We can therefore assume the decoding measurement $\{Q_m:\, m \}$ is projective. From now on, for simplicity, we write $B$ for the enlarged receiver.\footnote{This reduction to a projective decoding measurement can be omitted: the argument also works directly with the original POVM, using $\Lambda_m^2\le\Lambda_m$ and a POVM Cauchy--Schwarz bound in place of orthogonality. We retain the reduction because it gives a common presentation with the private-capacity argument, where the projective representation simplifies the privacy-test construction and the marginal estimates.}

\medskip

\noindent
\textit{Step 2: choose eigenvectors and reweight the messages.}
For each $m$ let $\lambda_m=\op{\Pi Q_m\Pi}$ and note that $\Tr(\N(\rho_m) Q_m)\leq\lambda_m$. Discarding zero-success messages can only strengthen the claimed bound,
so we may assume $\lambda_m>0$ for every $m$.

For each $m$, choose a normalized eigenvector $\ket{\psi_m}$ of $\Pi Q_m\Pi$ with eigenvalue $\lambda_m$. Since $\lambda_m>0$, this vector lies in the channel image. Using also $Q_m^2=Q_m$, we obtain

\begin{equation}\label{eq:eigenvectors}
 \Pi\ket{\psi_m}=\ket{\psi_m},
 \qquad \Pi Q_m\ket{\psi_m}=\lambda_m\ket{\psi_m},
 \qquad \norm{Q_m\ket{\psi_m}}^2=\lambda_m.
\end{equation}
Because $\lambda_m$ bounds the success of the original $m$th input,
\begin{equation}\label{eq:reweighting}
 Z:=\sum_m\lambda_m\ge Mp,
 \qquad q_m:=\frac{\lambda_m}{Z}.
\end{equation}
We note that $\lambda_m=\op{\Pi Q_m\Pi}$ satisfies $0<\lambda_m\le1$. Therefore, $q_m\le1/Z$ implying that
\begin{equation}\label{eq:classical-prior-min}
 H_{\min}(X)_q\ge\log Z\ge\log M+\log p.
\end{equation}
Here the min-entropy is computed with respect to the random variable $X$ taking value $m$ with probability $q_m$.

\medskip

\noindent
\textit{Step 3: construct the two vectors.}
Define a unit vector and an unnormalized vector on $XX'BE$ by
\begin{align}
 \ket\psi
 &=\sum_{m}\sqrt{q_m}\ket m_X\ket m_{X'}\ket{\psi_m}_{BE},\label{eq:omega}\\
 \ket\theta
 &=\sum_{m}\frac{\sqrt{q_m}}{\lambda_m}
             \ket m_X\ket m_{X'}Q_m\ket{\psi_m}_{BE}.\label{eq:theta}
\end{align}
Then
\begin{equation}\label{eq:theta-identities}
 \Pi\ket\theta=\ket\psi,
 \qquad \norm{\ket\theta}^2
       =\sum_{m}\frac{q_m}{\lambda_m}
       =\frac MZ\le\frac1p.
\end{equation}
Write
\begin{align}
 \ket v=(\widetilde{\Pi}-\Pi)\ket\theta,
 \qquad \widetilde{\Pi}\ket\theta=\ket\psi+\ket v,
 \qquad 
 \ket{\widetilde\psi}=\frac{1}{\norm{\widetilde{\Pi}\ket\theta}}\widetilde{\Pi}\ket\theta.
\end{align}
The identity $\Pi \widetilde{\Pi}=\Pi$ gives $\Pi\ket v=0$ and hence $\braket\psi v=0$.
Thus $\norm{\widetilde{\Pi}\ket\theta}^2=1+\norm{\ket v}^2\ge1$ and
\begin{equation}\label{eq:close}
 P(\ket\psi,\ket{\widetilde\psi})
 =\frac{\norm{\ket v}}{\sqrt{1+\norm{\ket v}^2}} \leq \norm{(\widetilde{\Pi}-\Pi)\ket\theta}
 \le\varepsilon\norm{\ket\theta}
 \le\frac{\varepsilon}{\sqrt p}.
\end{equation}

\medskip

\noindent
\textit{Step 4: dominate the complementary marginal.}
Let 
$\theta_{XX'E}=\Tr_B\ketbra\theta\theta$.
Orthogonality of $Q_m$'s and cyclicity of trace imply
\begin{equation}\label{eq:theta-domination}
 \theta_{XX'E}
 =\sum_{m}\frac{q_m}{\lambda_m^2}
     \ketbra mm_X\otimes\ketbra mm_{X'}\otimes
          \Tr_B(Q_m\ketbra{\psi_m}{\psi_m}Q_m)
 \le\I_X\otimes\theta_{X'E},
\end{equation}
where $\theta_{X'E}=\Tr_X\theta_{XX'E}$ and
$\Tr\theta_{X'E}=\norm{\ket\theta}^2\le1/p$.

Use a decomposition of $\widetilde\Pi$ as
in~\eqref{eq:Kdecomposition}. 
Applying Lemma~\ref{lem:test-marginal}\textup{(iii), then (i)}, followed by~\eqref{eq:theta-domination}, gives
\begin{align}
 \widetilde\psi_{XX'E}
 &:=\Tr_B\ketbra{\widetilde\psi}{\widetilde\psi}\\
 &\le\frac{\Gamma^2}{\norm{\widetilde{\Pi}\ket\theta}^2}\sum_i\beta_i
   (\I_{XX'}\otimes Y_i)\theta_{XX'E}
                   (\I_{XX'}\otimes Y_i^\dagger)\\
 &\le\I_X\otimes \mu_{X'E},
 \end{align}
where
\begin{align}
 \mu_{X'E}:=\frac{\Gamma^2}{\norm{\widetilde{\Pi}\ket\theta}^2}\sum_i\beta_i(\I_{X'}\otimes Y_i)\theta_{X'E}(\I_{X'}\otimes Y_i^\dagger),
 \qquad \Tr \mu_{X'E}\le\frac{\Gamma^2}{p}.\label{eq:smoothed-domination}
\end{align}
The trace estimate uses $\norm{\widetilde{\Pi}\ket\theta}\ge1$ and $ Y_i^\dagger Y_i\le\I_{E}$.
By the definition of conditional min-entropy, this gives
$H_{\min}(X|X'E)_{\widetilde\psi}\ge-2\log\Gamma-\log(1/p)$.
Since $\ket{\widetilde\psi}_{XX'BE}$ is pure, min/max-entropy duality yields
\begin{equation}\label{eq:entropy-small}
 H_{\max}(X|B)_{\widetilde\psi}
 =-H_{\min}(X|X'E)_{\widetilde\psi}
 \le2\log\Gamma+\log\frac1p.
\end{equation}

\medskip

\noindent
\textit{Step 5: apply one-shot classical achievability.}
Since $\Pi\ket{\psi_m}=\ket{\psi_m}$, the vector $U^\dagger\ket{\psi_m}$ is a normalized channel input whose image under $U$ is $\ket{\psi_m}$. Thus the marginal of~\eqref{eq:omega} on $XB$ is the genuine channel ensemble
\begin{equation}\label{eq:genuine-ensemble}
 \psi_{XB}=\sum_mq_m\ketbra mm_X\otimes
                    \N\!\left(U^\dagger\ketbra{\psi_m}{\psi_m}U\right).
\end{equation}
By contractivity of purified distance under partial trace,
\eqref{eq:close} gives $P(\psi_{XB},\widetilde\psi_{XB})\le\frac{\varepsilon}{\sqrt p}$.
Thus~\eqref{eq:entropy-small} implies
\begin{equation}\label{eq:classical-max-upgrade}
 H_{\max}^{\frac{\varepsilon}{\sqrt p}}(X|B)_\psi
 \le2\log\Gamma+\log\frac1p.
\end{equation}
The hypothesis gives $\varepsilon/\sqrt{p}\le\delta/8$, so the same upper bound holds
for $H_{\max}^{\delta/8}(X|B)_\psi$.
Combining this with~\eqref{eq:classical-prior-min} in
Fact~\ref{fact:classical-oneshot} proves~\eqref{eq:classical-upgrade}.
If the stated lower bound is negative, a one-message code suffices.
\end{proof}

\subsection{Exponential strong converse bound}
The Holevo information of the channel $\N$ is
\begin{equation}\label{eq:holevo}
 \chi(\N)=\sup_{\{q_x,\sigma_x\}}
 \left[H\!\left(\sum_xq_x\N(\sigma_x)\right)
          -\sum_xq_xH\bigl(\N(\sigma_x)\bigr)\right]
 =\sup_{\{q_x,\sigma_x\}} I(X:B),
\end{equation}
and its classical capacity is given by~\cite{Holevo,SW}
\begin{equation}\label{eq:capacity}
 C(\N)=\sup_{n\ge1}\frac1n\chi(\N^{\otimes n}).
\end{equation}
In particular,
$\chi(\N^{\otimes n})\le nC(\N)$ for every $n$.
We now follow the same route as in Section~\ref{sec:strongconverse}:
a slowly decaying success probability would yield a reliable code above
capacity. The required weak converse is the following.

\begin{fact}[Weak converse for classical communication {\cite{Holevo,SW}}]
\label{fact:classical-weak}
For every $r_0>C(\N)$, there are $0<\delta<1/2$ and an integer $n_0$
such that, for every $n\ge n_0$, a classical code for $\N^{\otimes n}$
with average success probability at least $1-\delta$ has message size
$M'$ satisfying $\log M'\le nr_0$.
\end{fact}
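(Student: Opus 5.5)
The plan is to derive this from the standard Fano-plus-Holevo argument, applied directly to $\N^{\otimes n}$. Nothing in the argument needs the codewords to be product states.

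\emph{Setup.} Fix $n$ and a classical code for $\N^{\otimes n}$ with input states $\rho_1,\ldots,\rho_{M'}$ on $A^n$, which may be entangled across the $n$ channel uses, and a decoding POVM $\{\Lambda_m\}$ on $B^n$. Assume its average success probability is at least $1-\delta$. Let $X$ be uniform on $\{1,\ldots,M'\}$ and form the classical--quantum state
\begin{align}
\omega_{XB^n}=\frac{1}{M'}\sum_m\ketbra mm_X\otimes\N^{\otimes n}(\rho_m).
\end{align}
Measuring $B^n$ with $\{\Lambda_m\}$ produces a guess $\widehat X$ with $\Pr[\widehat X\neq X]\le\delta$.

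\emph{Two inequalities.} First, by data processing for the quantum mutual information under the measurement channel, and since $\{1/M',\rho_m\}$ is an admissible ensemble in \eqref{eq:holevo} for the channel $\N^{\otimes n}$,
\begin{align}
I(X:\widehat X)\le I(X:B^n)_\omega\le\chi(\N^{\otimes n})\le nC(\N),
\end{align}
where the last step is \eqref{eq:capacity}. Second, Fano's inequality gives
\begin{align}
H(X|\widehat X)\le h(\delta)+\delta\log M'\le 1+\delta\log M'.
\end{align}
Since $H(X)=\log M'$, combining the two yields
\begin{align}
(1-\delta)\log M'\le nC(\N)+1.
\end{align}

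\emph{Choice of parameters.} Given $r_0>C(\N)$, pick $0<\delta<1/2$ small enough that $C(\N)/(1-\delta)<r_0$; any $\delta<\min\{1/2,\,1-C(\N)/r_0\}$ works, and if $C(\N)=0$ any $\delta<1/2$ does. Then
\begin{align}
\log M'\le\frac{nC(\N)+1}{1-\delta}\le nr_0
\end{align}
as soon as $n\bigl(r_0-C(\N)/(1-\delta)\bigr)\ge 1/(1-\delta)$. This condition defines $n_0$.

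\emph{Possible gaps.} There is no substantive obstacle. The only points to check are that $\chi(\N^{\otimes n})$ is a supremum over arbitrary, possibly entangled, input ensembles, so the code's own ensemble is admissible, and that the Holevo--Schumacher--Westmoreland formula \eqref{eq:capacity} is what gives $\chi(\N^{\otimes n})\le nC(\N)$. This is precisely where regularization enters the argument.
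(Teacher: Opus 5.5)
Your argument is correct: the Holevo bound for the code's own, possibly entangled, ensemble, combined with $\chi(\N^{\otimes n})\le nC(\N)$ and Fano's inequality, gives $(1-\delta)\log M'\le nC(\N)+1$, and your choice $\delta<\min\{1/2,\,1-C(\N)/r_0\}$ with a suitable $n_0$ finishes the proof. The paper gives no proof of this fact and simply cites Holevo and Schumacher--Westmoreland, whose converse is exactly this Fano-plus-Holevo-bound argument; the paper's remark just before the fact that $\chi(\N^{\otimes n})\le nC(\N)$ for every $n$ is the ingredient you use, so your route is essentially the intended one.
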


\begin{proof}[Proof of Theorem~\ref{thm:classical-exponential}]
Set $r=C(\N)+\gamma$. Choose $C(\N)<r_0<r$ and fix $\delta$ and $n_0$
from Fact~\ref{fact:classical-weak}. Set
$\kappa=\min\{(\dim B)^2,(\dim E)^2\}$ 
and 
choose $0<\theta<1/2$ so that $2h(\theta)+2\theta\log(2\kappa)<r-r_0$, and set
\begin{equation}\label{eq:classical-exponent}
  \alpha_\gamma=\min\left\{\frac{c\theta^2}{4},
               \frac{r-r_0-2h(\theta)-2\theta\log(2\kappa)}4\right\}>0.
\end{equation}
For sufficiently large $n$, $D=\lfloor\theta n\rfloor$ satisfies
$3\le D\le n/2$ and $D\ge\theta n/2$.
Lemma~\ref{lem:polynomial}, with identical factors, gives an approximation
of $\Pi^{\otimes n}$ with error at most $2^{-c\theta^2n/4}$ and
$\log\Gamma_{n,D}\le n\bigl[h(\theta)+\theta\log(2\kappa)\bigr]$.
Suppose a code of rate at least $r$ has $p_{{\rm succ},n}\ge2^{-\alpha_\gamma n}$.
Then, uniformly over such codes,
\begin{equation}\label{eq:classical-uniform-eta}
 \frac{2^{-cD^2/n}}{\sqrt{p_{{\rm succ},n}}}
 \le2^{-c\theta^2n/4+\alpha_\gamma n/2}
 \le2^{-c\theta^2n/8}\le\frac\delta8
\end{equation}
for all sufficiently large $n$.
Applying Theorem~\ref{thm:oneshot} directly to $\N^{\otimes n}$ therefore
produces a code with average success at least $1-\delta$ and
\begin{align}
 \log M'
 &\ge\log M_n+2\log p_{{\rm succ},n}-2\log\Gamma_{n,D}
                          -4\log\frac1\delta-16\\
 &\ge n\bigl[r-2\alpha_\gamma-2h(\theta)-2\theta\log(2\kappa)\bigr]
                          -4\log\frac1\delta-16\\
 &\ge n(r_0+2\alpha_\gamma)-4\log\frac1\delta-16.
\end{align}
For sufficiently large $n$ this exceeds $nr_0$, contradicting
Fact~\ref{fact:classical-weak}. Hence $p_{{\rm succ},n}<2^{-\alpha_\gamma n}$.
\end{proof}


\section{Strong converse for private capacity}\label{sec:private-capacity}\label{sec:private-setup}
We now consider private classical communication over a general quantum channel $\N$. An $M$-message code for private communication consists of input states
$\rho_m$ and a decoding POVM $\{\Lambda_{m}: m=1, \dots, M\}$.
We use $K_A$ for the message register and $K_B$ for
Bob's decoded label, and denote their joint state with Eve by
\begin{equation}\label{eq:measured-private-code}
 \omega_{K_AK_BE}
 =\frac1M\sum_{m,m'=1}^M\ketbra{m,m'}{m,m'}\otimes
 \Tr_B\!\left[\big(\sqrt{\Lambda_{m'}}\otimes\I_E\big)U\rho_mU^\dagger
                    \big(\sqrt{\Lambda_{m'}}\otimes\I_E\big)\right],
\end{equation}
where, as before, $U:A\to BE$ is the Stinespring isometry of the channel. 
The decoding success probability and the secrecy error are defined by
\begin{equation}\label{eq:operational-errors}
 p_{\rm succ}=\Pr\{K_A=K_B\},\qquad
 \delta_{\rm sec}
 =\frac12\min_{\sigma_E}\Big\|
     \omega_{K_AE}-\frac{\I_{K_A}}M\otimes\sigma_E\Big\|_1.
\end{equation}

We can also measure the success probability and secrecy jointly by the secret-key fidelity, namely the fidelity to an ideal shared secret key.
Let $\tau_M=M^{-1}\sum_m\ketbra{m,m}{m,m}$ be the ideal shared key. Then, we define
\begin{equation}\label{eq:private-fidelity}
 F_{\rm key}
 =\max_{\sigma_E}
       F(\omega_{K_AK_BE},\tau_M\otimes\sigma_E) 
\end{equation}
where the minimum and maximum range over normalized states $\sigma_E$ on $E$.
This quantity measures closeness to a uniform shared key independent of Eve. To connect the errors in~\eqref{eq:operational-errors} to~\eqref{eq:private-fidelity}, let $\omega'_{K_AK_BE}$ be obtained from $\omega_{K_AK_BE}$ by replacing Bob's key label by Alice's in every classical block, leaving Eve's conditional state unchanged: 
\begin{equation}\label{eq:corrected-private-code}
 \omega'_{K_AK_BE}
 =\frac1M\sum_{m,m'=1}^M\ketbra{m,m}{m,m}\otimes
 \Tr_B\!\left[(\sqrt{\Lambda_{m'}}\otimes\I_E)U\rho_mU^\dagger
                    (\sqrt{\Lambda_{m'}}\otimes\I_E)\right] .
\end{equation}
Comparing the blocks of the block-diagonal states $\omega$ and $\omega'$ gives
\begin{align}
 \tfrac12\norm{\omega-\omega'}_1=1-p_{\rm succ},\qquad
 \tfrac12\min_{\sigma_E}\norm{\omega'-\tau_M\otimes\sigma_E}_1
 =\delta_{\rm sec}.
\end{align}
The triangle inequality and $1-\sqrt{F(\rho,\sigma)}\le\tfrac12\norm{\rho-\sigma}_1$
therefore imply
\begin{equation}\label{eq:private-fidelity-lower}
 \boxed{\quad F_{\rm key}\ge(p_{\rm succ}-\delta_{\rm sec})_+^2,\quad}
 \qquad x_+=\max\{x,0\}.
\end{equation}

For $n$ uses of the memoryless channel, an $(n,M_n)$ code is a code
for $\N^{\otimes n}$.
Its rate is $\frac1n\log M_n$ bits per channel use. We write
$p_{{\rm succ},n}$ and $\delta_{{\rm sec},n}$ for the quantities
in~\eqref{eq:operational-errors} for this code.
A rate $r$ is achievable if there is a sequence of such codes with
$\liminf_{n\to\infty}\frac1n\log M_n\ge r$,
$p_{{\rm succ},n}\to1$, and $\delta_{{\rm sec},n}\to0$ as $n$ goes to infinity. The private capacity $P(\N)$ is the supremum of achievable rates.

The one-shot private information is
\begin{equation}\label{eq:private-information}
 P^{(1)}(\N)
 =\sup_{\{q_x,\rho_x\}}
       \bigl[I(X:B)-I(X:E)\bigr]
 =\sup_{\{q_x,\rho_x\}}
       \bigl[H(X|E)-H(X|B)\bigr],
\end{equation}
computed with respect to the output state
$\sum_xq_x\ketbra xx_X\otimes U\rho_xU^\dagger$. The private capacity is~\cite{Devetak}
\begin{equation}\label{eq:private-capacity}
 P(\N)=\sup_{n\ge1}\frac1nP^{(1)}(\N^{\otimes n}).
\end{equation}

\subsection{A coherent representation of the code}

This section is devoted to a coherent representation of the code, following the notation of~\cite{WTB}. 
For each input $\rho_m$, choose a purification
$\ket{\varrho_m}_{S_AA}$ and prepare
\begin{equation}\label{eq:coherent-input}
 \ket{\psi_0}_{K_AS_AA}
 =\frac1{\sqrt M}\sum_{m=1}^M\ket m_{K_A}\ket{\varrho_m}_{S_AA}.
\end{equation}
The purifying register $S_A$ is called Alice's shield. 
We assume throughout that $S_A$ in $\ket{\varrho_m}_{S_AA}$ includes a copy of the key $m$.\footnote{Simply replace $\ket{\varrho_m}$ with $\ket{\varrho_m}\otimes \ket m$.}

As in Section~\ref{sec:classical-capacity}, it is more convenient to
assume that Bob's decoding POVM is projective. To this end, for the original POVM
$\{\Lambda_{m}:\, m=1, \dots, M\}$, define the isometry
\begin{equation}\label{eq:private-decoder}
 J:B\longrightarrow K_BS_B=:\widehat B,\qquad
 J\ket b=\sum_{m=1}^M\ket{m}_{K_B} \otimes \big(\ket{m}\otimes \sqrt{\Lambda_{m}}\ket b\big)_{S_B}.
\end{equation}
Here, $S_B$ is called Bob's shield register and by construction it contains a copy of Bob's key.
The projections $\ketbra{m}{m}_{K_B}\otimes\I_{S_B}$ reproduce the
original POVM when pulled back through the isometry $J$. We may therefore replace
$\N$ by its composition with $J$, exactly as in the classical-capacity
argument. Nevertheless, we keep in mind that the channel's output is bipartite $\widehat B = K_BS_B$ and $K_B$ is Bob's key register containing his decoded message $m'$. See Figure~\ref{fig:classical-private-code}.

Recall that~\eqref{eq:coherent-input} encodes a purification of the channel input. 
Then, the resulting output state is
\begin{align}
 \ket{ \psi}_{K_AS_AK_BS_BE}
 &=\big(\I_{K_AS_A}\otimes(J\otimes\I_E)U\big)\ket{\psi_0}_{K_AS_AA}\\
 &=\frac1{\sqrt M}\sum_{m,m'=1}^M
      \ket{m,m'}_{K_AK_B}\otimes \ket{v_{mm'}}_{S_AS_BE}. \label{eq:coherent-private-output}
\end{align}
Here $\norm{v_{mm'}}^2$ is Bob's conditional probability of outcome $m'$
when message $m$ was sent. Since $S_A$ and $S_B$ contain copies of $K_A$ and $K_B$, respectively,  tracing out $S_A,S_B$ and measuring the two
keys recovers the original protocol discussed above. Indeed,~\eqref{eq:coherent-private-output} is a purification of~\eqref{eq:measured-private-code}. 

We emphasize that in the privacy-test construction and the blowing-up argument below,
we use $U, \Pi, \widetilde\Pi$ for their embeddings through $J$, with
receiver system $\widehat B$. 
This receiver isometry leaves
$P^{(1)}(\N)$ invariant, preserves the operator approximation error,
and does not increase $\|\widetilde\Pi\|_\pi$ across $\widehat B:E$.
Any decoder for the enlarged channel can be composed with $J$ to give
a decoder for the original channel.

\begin{figure}[tbp]
\centering
\begin{tikzpicture}[x=1cm,y=0.8cm,>=stealth,thick]
 \draw[->] (0,2.4)--(9,2.4);
 \node[above] at (0.6,2.4) {$K_A$};
 \node[above] at (8.3,2.4) {$K_A$};
 \draw[->] (0,1.3)--(9,1.3);
 \node[above] at (0.6,1.3) {$S_A$};
 \node[above] at (8.3,1.3) {$S_A$};
 \draw[->] (0,0)--(2.3,0);
 \node[above] at (0.6,0) {$A$};
 \draw (2.3,-1.45) rectangle (3.6,0.45);
 \node at (2.95,-0.5) {$U$};
 \draw[->] (3.6,0)--(5.4,0);
 \node[above] at (4.5,0) {$B$};
 \draw (5.4,-0.65) rectangle (6.7,0.6);
 \node at (6.05,-0.025) {$J$};
 \draw[->] (6.7,0.35)--(9,0.35);
 \node[above] at (8.3,0.35) {$K_B$};
 \draw[->] (6.7,-0.4)--(9,-0.4);
 \node[below] at (8.3,-0.4) {$S_B$};
 \draw[->] (3.6,-1.15)--(9,-1.15);
 \node[below] at (8.3,-1.15) {$E$};
 \draw[decorate,decoration={brace,mirror,amplitude=5pt}]
 (-0.15,2.4)--(-0.15,0);
 \node[left] at (-0.4,1.2) {$\ket{\psi_0}_{K_AS_AA}$};
 \draw[decorate,decoration={brace,amplitude=5pt}]
 (9.15,0.35)--(9.15,-0.4);
 \node[right] at (9.4,-0.025) {$\widehat B=K_BS_B$};
\end{tikzpicture}
\caption{Coherent representation of classical communication through $\N$,
including the registers used for privacy. Alice prepares $\ket{\psi_0}_{K_AS_AA}$ where $K_A$ is Alice's key register, $S_A$ is her shield and $A$ is the input register. The Stinespring isometry $U$ outputs $B,E$.
Bob's isometry $J$ produces the key register $K_B$ and the shield $S_B$.
Measuring $K_A,K_B$ gives the sent and decoded messages. Eve holds $E$.}
\label{fig:classical-private-code}
\end{figure}

\subsection{Privacy-test projectors}\label{sec:choose-test}

Perfect secrecy, corresponding to $\delta_{\rm sec}=0$ in~\eqref{eq:operational-errors}, requires Eve's state to be independent of the key,
but the purifying shield systems $S_A, S_B$ may still be correlated with $E$. In a coherent
representation of an ideal private code, the conditional shield--environment
vectors are therefore different purifications of the same state of Eve.
By Uhlmann's theorem, these purifications are related by unitaries acting on
$S_AS_B$. This motivates the definition of a privacy-test projector as follows.

Since the shields contain a copy of the keys, tracing out the shields in
\eqref{eq:coherent-private-output} gives~\eqref{eq:measured-private-code}, i.e., $\psi_{K_AK_BE}=\Tr_{S_AS_B} \ketbra\psi\psi = \omega_{K_AK_BE}$. We therefore have
\begin{align}
 F_{\rm key} = \max_{\sigma_E}F(\psi_{K_AK_BE}, \tau_M\otimes\sigma_E) =\max_{\sigma_E}\left(\frac1M\sum_m
      \Big\|\sqrt{\sigma_E^{mm}}\sqrt{\sigma_E}\Big\|_1\right)^2,
\end{align}
where $\sigma_E^{mm'}=\Tr_{S_AS_B}\ketbra{v_{mm'}}{v_{mm'}}$ and we use the fact that both $\psi_{K_AK_BE}$ and $\tau_M\otimes\sigma_E$ are block diagonal.

Fix a normalized state $\sigma_E$ attaining the maximum in the definition of $F_{\rm key}$ and choose a purification $\ket\zeta_{S_AS_BE}$, adjoining a pure local ancilla to the shield if necessary.
By Uhlmann's theorem, there is a unitary $V_m$ acting on $S_AS_B$ such that
\begin{align}
 \braket{v_{mm}}{(V_m\otimes\I_E)|\zeta}
 =\Big\|\sqrt{\sigma_E^{mm}}\sqrt{\sigma_E}\Big\|_1.
\end{align}
Fixing these unitaries, define a privacy-test projector $Q$ by
\begin{equation}\label{eq:privacy-test}
 \boxed{\quad
 Q =\frac1M\sum_{m,m'=1}^M
       \ketbra{m,m}{m',m'}\otimes V_mV_{m'}^\dagger.
 \quad}
\end{equation}
Equivalently, $Q=LL^\dagger$ where $L$ is the isometry
\begin{align}\label{eq:Q-range-formula}
 L:S_AS_B\longrightarrow K_AK_BS_AS_B,\qquad
 L\ket z=\frac1{\sqrt M}\sum_m\ket{m,m}\otimes V_m\ket z.
\end{align}
Then $L^\dagger L=\I$ and $Q=LL^\dagger$. In particular,
$\operatorname{rank}Q=\dim(S_AS_B)$.

The unit vector
$\ket{\gamma_\sigma}:= (L\otimes \I_E)\ket{\zeta}=M^{-1/2}\sum_m\ket{m,m}\otimes (V_m\otimes\I_E)\ket\zeta$
belongs to $\ran(Q\otimes\I_E)$. Therefore
\begin{align}
 \bra\psi (Q\otimes \I_E)\ket\psi
 \ge |\braket{\gamma_\sigma}{\psi}|^2=\left(\frac1M\sum_m
      \Big\|\sqrt{\sigma_E^{mm}}\sqrt{\sigma_E}\Big\|_1\right)^2.
 \label{eq:uhlmann-private}
\end{align}
For the maximizing $\sigma_E$ fixed above, the resulting privacy-test projector $Q$ therefore satisfies  
\begin{equation}\label{eq:test-vs-fidelity}
 f_Q:=\bra\psi(Q\otimes\I_E)\ket\psi\ge F_{\rm key}.
\end{equation}
Inequality~\eqref{eq:test-vs-fidelity} is the privacy-test principle of \cite{WTB}.

\subsection{Quantum blowing-up lemma for private communication}\label{sec:private-oneshot}

Our quantum blowing-up lemma for private capacity is based on the following one-shot achievability result. 

\begin{fact}[One-shot private achievability {\cite[Theorem~1, Eq.~(6)]{RenesRenner}}]\label{fact:private-oneshot}
For any finite input ensemble $\{q_x,\rho_x\}$, let
$\omega_{XBE}=\sum_xq_x\ketbra xx\otimes U\rho_xU^\dagger$.
For $0<\delta<1$, there is an unassisted private code with
$1-p_{\rm succ}\le\delta$, $\delta_{\rm sec}\le\delta$, and message size
$M'$ satisfying
\begin{equation}\label{eq:private-achievability}
 \log M'\ge H_{\min}^{\delta/64}(X|E)_\omega
       -H_{\max}^{\delta/64}(X|B)_\omega
       -4\log\frac8\delta-16.
\end{equation}
\end{fact}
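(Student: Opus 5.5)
The statement is the one-shot private coding theorem of Renes and Renner. I would prove it by following their construction: first correct Bob's errors, then make the key secret from Eve by privacy amplification, and finally fix all randomness in the construction by averaging. The intended key length is $\log M'\approx H_{\min}(X|E)-H_{\max}(X|B)$. The $H_{\min}(X|E)$ term is what two-universal hashing can extract against Eve. The $H_{\max}(X|B)$ term is what Bob would need in addition to his output to recover $X$. Smoothing with parameter $\delta/64$ and the additive constants $4\log(8/\delta)+16$ would absorb the losses from derandomization and from converting between error criteria.

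\textbf{Step 1: random codebook and binning.} I would draw a random codebook $\{x_{k}\}_{k=1}^{N}$ i.i.d.\ from $q$ and take a random two-universal hash $f$ from codewords to labels $\{1,\dots,M'\}$. I would set $\log N\approx H_{\min}^{\delta/64}(X|E)-O(\log\tfrac1\delta)$ and $\log M'\approx \log N-H_{\max}^{\delta/64}(X|B)-O(\log\tfrac1\delta)$. To send message $m$, Alice uses local randomness to pick a uniformly random index $k$ with $f(x_k)=m$ and transmits $\rho_{x_k}$. Bob runs a pretty-good (or sequential) measurement for the index within the bin, or equivalently uses $f$ as syndrome side information. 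Decoding within a bin of size about $2^{H_{\max}(X|B)}$ has error $O(\delta)$ on average over codebooks. This is the same mechanism behind Fact~\ref{fact:classical-oneshot}, there with roles of binning and hashing played by the one-shot decoding bound.

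\textbf{Step 2: secrecy via the leftover hash lemma.} Let $K$ denote a uniform random codeword index. Averaged over random codebooks, the state of the selected codeword together with Eve's system $E$ is close to $\omega_{XE}$. Its smooth conditional min-entropy is therefore at least about $\min\{\log N, H_{\min}^{\delta/64}(X|E)\}$. The leftover hash lemma, in its quantum form with smooth min-entropy, then bounds the trace distance between $(f(x_K),E)$ and a uniform label independent of $E$ by $O(\delta)$, since $\log M'$ lies below that entropy by $O(\log\tfrac1\delta)$. Conditioning on the message value rather than on a uniformly random index needs care. The bins are nearly equal in size with high probability, and a uniform prior on messages matches the hashed distribution up to $O(\delta)$ in trace distance.

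\textbf{Step 3: derandomization.} Both the decoding error and $\delta_{\rm sec}$ are averages over the codebook and the hash. Markov's inequality applied to their sum then gives a single choice for which $1-p_{\rm succ}\le\delta$ and $\delta_{\rm sec}\le\delta$ simultaneously. The constants $64$, $8$ and $16$ come from splitting $\delta$ among smoothing, Markov and the conversion from purified distance to trace distance.

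\textbf{Main obstacle.} I expect the hardest step to be the min-entropy of the random codebook given $E$ in Step 2, that is, showing that a finite sample of size $N$ from $q$ retains $H_{\min}(X|E)$ when $\log N\gtrsim H_{\min}(X|E)$. I would first try to treat the codebook as a classical channel $K\mapsto x_K$ and use an operator-Chernoff or collision-probability argument. The expected collision entropy of $(x_K,E)$ can be bounded by that of $\omega_{XE}$ plus an additive $1/N$-term, and the collision-entropy version of the leftover hash lemma then applies directly.
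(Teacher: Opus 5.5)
The paper gives no proof of this Fact; it imports it from Renes and Renner. Your reconstruction does not follow their construction. The gap is a missing \emph{fixed syndrome known to Bob}, which is what produces the form $H_{\min}(X|E)-H_{\max}(X|B)$.

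The cited construction samples no codebook. A two-universal pair $(f,g)$ is applied to $X\sim q$ itself. The map $g$ takes values in a set $S$ with $\log|S|\approx H_{\max}^{\varepsilon}(X|B)$, so that $X$ is recoverable from $(g(X),B)$ (information reconciliation). One also takes $\log M'+\log|S|\approx H_{\min}^{\varepsilon}(X|E)$, so that $(f(X),g(X))$ is nearly uniform and independent of $E$ (privacy amplification). Then one fixes a value $s$ of $g$ and encodes $m$ by sampling $x$ from $q$ conditioned on $(f(x),g(x))=(m,s)$. This is a stochastic encoder, allowed since the $\rho_m$ may be mixed. Bob reconstructs $x$ from $s$ and $B$ and outputs $f(x)$, and one derandomizes over $f,g,s$. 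The syndrome bits are sacrificed, which is why $H_{\max}(X|B)$ is subtracted.

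In your scheme nothing gives Bob these bits. The bin is the message and the index inside it is Alice's private randomness, so ``decoding within a bin'' presupposes what Bob wants to learn. Moreover, $2^{H_{\max}(X|B)}$ counts the inputs still confusable given $B$, not inputs Bob can resolve. For example, take $n$ uses of a binary symmetric channel whose environment receives only the flip bits, with uniform input. Then $H_{\min}(X|E)=n$, so your codebook has rate essentially one, above $I(X\!:\!B)=1-h(p)$. If moreover $h(p)>1/2$, each bin holds more than $2^{n(1-h(p))}$ random codewords. By soft covering all bins then give nearly the same output at Bob, who learns almost nothing about $m$, although $\log M'\approx n(1-h(p))$.

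Your Step~2, and the claim in your ``main obstacle'', also fail: an i.i.d.\ sample of size $N\approx 2^{H_{\min}(X|E)}$ does not retain $H_{\min}(X|E)$. Let $X$ be uniform on $2n$ bits, let Bob receive $X$, and let Eve receive its first $n$ bits, so $H_{\min}(X|E)=n$ and $H_{\max}(X|B)=0$. Your $N\le 2^n$ codewords then have mostly distinct first halves. So Eve guesses $K$, and hence $m=f(x_K)$, with probability at least about $1/2$. Thus $H_{\min}(x_K|E)=O(1)$ even after smoothing, and $\delta_{\rm sec}$ is at least about $1/2$ when $\log M'\approx n$, whereas the stated bound promises about $n$ secret bits.

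In general a sampled codeword keeps only about $\log N-I(X\!:\!E)$ bits against $E$. A working codebook argument would put $\log N$ at the packing limit for $B$ and the bin size at the covering limit for $E$. It naturally yields mutual-information-type one-shot quantities, which are in general weaker than the stated bound; for instance, $H_{\min}(X)-I_{\max}(X\!:\!E)\le H_{\min}(X|E)$ for the max-mutual information. The exact form matters here. Theorem~\ref{thm:private-oneshot} controls precisely $H_{\min}(X|E)$ and $H_{\max}(X|\widehat B)$, for a state whose key marginal need not be uniform.
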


The cited result controls both errors for every message, hence also the average errors used here.

\begin{theorem}[Quantum blowing-up lemma for private communication]\label{thm:private-oneshot}
Suppose $\widetilde\Pi$ satisfies~\eqref{eq:oneshot-hyp}.
After the receiver isometry described above, let $Q$ be a privacy-test
projector, meaning a projector of the form~\eqref{eq:privacy-test}, with $M$-dimensional key registers, and let $\rho$ be a normalized
state on $K_AS_A\widehat BE$ with $\Pi\rho\Pi=\rho$.
Put $f=\Tr(\rho (Q\otimes \I_E))>0$ and fix $0<\delta<1/2$.
If $\varepsilon\le\delta\sqrt f/64$, there is a private code for the
same channel with decoding and secrecy errors both at most $\delta$ and
\begin{equation}\label{eq:private-oneshot}
 \boxed{\quad
 \log M'\ge\log M+2\log f-4\log\Gamma
                         -4\log\frac8\delta-16.
 \quad}
\end{equation}
Starting with an original private code, $f$ may be replaced throughout by $F_{\rm key}>0$.
\end{theorem}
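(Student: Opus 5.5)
The plan is to combine the arguments of Theorems~\ref{thm:upgrade} and~\ref{thm:oneshot}: build one smoothed vector that certifies both $H_{\min}$-secrecy against $E$ and $H_{\max}$-reliability at $\widehat B$, then invoke Fact~\ref{fact:private-oneshot}. The two entropy bounds each lose $2\log\Gamma$ and one $\log f$, which accounts for the $4\log\Gamma$ and $2\log f$ in~\eqref{eq:private-oneshot}.

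\emph{Step 1: eigenvector and smoothed vector.} Put $\lambda=\op{\Pi(Q\otimes\I_E)\Pi}$. Since $\Pi\rho\Pi=\rho$, we get $\lambda\ge f$, exactly as in~\eqref{eq:lambdaF}. Take a top eigenvector $\ket\psi$ of $\Pi(Q\otimes\I_E)\Pi$, so that $\Pi\ket\psi=\ket\psi$ and $\norm{Q\ket\psi}^2=\lambda$. Set $\ket{\widetilde\psi}\propto\widetilde\Pi Q\ket\psi$. The orthogonality computation~\eqref{eq:orthog}--\eqref{eq:phiP} gives two facts: $\norm{\widetilde\Pi Q\ket\psi}^2\ge\lambda^2$, and $P(\widetilde\psi,\psi)\le\varepsilon/\sqrt\lambda\le\varepsilon/\sqrt f\le\delta/64$. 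This is the smoothing radius needed in Fact~\ref{fact:private-oneshot}.

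\emph{Step 2: structure of $Q\ket\psi$.} Since $Q=LL^\dagger$, write
\begin{align}
Q\ket\psi=M^{-1/2}\sum_m\ket{m,m}_{K_AK_B}\otimes(V_m\otimes\I_E)\ket w,\qquad \ket w=(L^\dagger\otimes\I_E)\ket\psi .
\end{align}
Two consequences follow. First, tracing out $S_AS_B$ and $K_B$ shows that the $K_AE$ marginal of $Q\ketbra\psi\psi Q$ equals $\frac1M\I_{K_A}\otimes w_E$, with $\Tr w_E=\lambda$: the unitaries $V_m$ act only on the shields, so Eve's conditional state does not depend on the key. Second, the keys are perfectly correlated, $K_A=K_B$. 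Because $K_B\subseteq\widehat B$, tracing out $\widehat B$ dephases $K_A$. Hence the $K_AS_AE$ marginal is block diagonal and is bounded by $\I_{K_A}\otimes\theta_{S_AE}$, where $\Tr\theta_{S_AE}=\lambda$ (this is the analogue of~\eqref{eq:theta-domination}).

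\emph{Step 3: the two entropy bounds.} Decompose $\widetilde\Pi$ as in~\eqref{eq:Kdecomposition}. Apply Lemma~\ref{lem:test-marginal}(iii) and then (i), with $K=X_i$ acting on $\widehat B$; the Eve-side factors $Y_i$ do not increase trace. Dividing by $\norm{\widetilde\Pi Q\ket\psi}^2\ge\lambda^2$, this yields
\begin{align}
\widetilde\psi_{K_AE}\le\frac{\Gamma^2}{M\lambda}\,\I_{K_A}\otimes\widetilde\sigma_E
\quad\text{and}\quad
\widetilde\psi_{K_AS_AE}\le\frac{\Gamma^2}{\lambda}\,\I_{K_A}\otimes\widetilde\theta_{S_AE},
\end{align}
where $\widetilde\sigma_E$ and $\widetilde\theta_{S_AE}$ have unit trace. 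The first bound gives $H_{\min}(K_A|E)_{\widetilde\psi}\ge\log M+\log\lambda-2\log\Gamma$. The second, via min/max duality for the pure state $\widetilde\psi$, gives $H_{\max}(K_A|\widehat B)_{\widetilde\psi}\le2\log\Gamma+\log(1/\lambda)$. Subtracting the two bounds and using $\lambda\ge f$ gives~\eqref{eq:private-oneshot}, once the $-4\log(8/\delta)-16$ term of Fact~\ref{fact:private-oneshot} is added.

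\emph{Main obstacle.} Fact~\ref{fact:private-oneshot} requires a genuine classical--quantum ensemble $\omega_{XBE}=\sum_xq_x\ketbra xx\otimes U\rho_xU^\dagger$. By contrast, $\psi$ is an arbitrary vector in the channel image, and $K_A$ need not be classical in it. I would first try to make $K_A$ classical through the shield: since $\ket\psi$ lies in the image of $\Pi$, which acts on $\widehat BE$ only, we can condition on $K_A=m$ (coherently copying $m$ into $S_A$, as in the classical section). Each conditional vector is then $(\I\otimes U)$ applied to a vector on $S_AA$. This produces inputs $\rho_m$ and weights $q_m$, and writes $\psi_{K_A\widehat BE}$ after dephasing $K_A$ as such an ensemble. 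The step I am least sure of is whether dephasing $K_A$ preserves both smoothed bounds. For $H_{\min}(K_A|E)$, the pinching is unital on $K_A$ and commutes with the domination by $\I_{K_A}\otimes\cdot$, so the bound survives. For $H_{\max}(K_A|\widehat B)$, I would argue that in the dual picture the copy of $K_A$ is already held in the conditioning system $S_AE$ by Step~2, so dephasing does not change the conditional min-entropy there. If this fails, the fallback is to insert the copy register $X'$ explicitly before smoothing, exactly as in~\eqref{eq:omega}. The final clause, replacing $f$ by $F_{\rm key}$, follows from~\eqref{eq:test-vs-fidelity}.
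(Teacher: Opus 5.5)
Your overall route is the paper's. You use the same top eigenvector of $\Pi(Q\otimes\I_E)\Pi$ and the same smoothed vector $\widetilde\Pi Q\ket\psi$. You derive the same two dominations from the perfect $K_A$--$K_B$ correlation in $Q\ket\psi$, and the same bookkeeping yields $2\log f-4\log\Gamma$. Your pinching argument for $H_{\min}(K_A|E)$ is fine.

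The step that fails as you primarily argue it is the $H_{\max}$ bound after dephasing $K_A$. The copy of $K_A$ supplied by your Step~2 sits in $K_B\subseteq\widehat B$, which is the conditioning side of $H_{\max}(K_A|\widehat B)$, not in $S_AE$. Moreover, once $\widetilde\Pi$ acts on $\widehat B\supseteq K_B$ that correlation is disturbed. So $\widetilde\psi_{K_AS_AE}$ is in general not block diagonal in $K_A$. A general privacy test's $V_m$ may also scramble any key copy originally held in $S_A$. Dually, dephasing $K_A$ amounts to adjoining a copy $C$ of $K_A$ to the conditioning system $S_AE$, and this can lower the conditional min-entropy by up to $\log M$. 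For example, a pure superposition on $K_A$, product with $\widehat B$, has $H_{\max}(K_A|\widehat B)=0$, while its dephased version has $\log M$. Hence $H_{\max}(K_A|\widehat B)_{\widetilde\psi}\le\log(\Gamma^2/\lambda)$ does not by itself transfer to the dephased, genuine ensemble required by Fact~\ref{fact:private-oneshot}.

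Your fallback is the correct fix, and it is exactly the paper's Step~1. Apply $T\ket m_{K_A}=\ket m_{K_A}\ket m_C$ at the outset and absorb $C$ into $S_A$. Since $T$ commutes with $\Pi$ and $\widetilde\Pi$, this only replaces $\ket\psi$, $Q\ket\psi$, $\ket{\widetilde\psi}$ by their images under $T$, with the same $\lambda$ and the same purified-distance bound.

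The domination in your Step~2 must now be established with $C$ on the conditioning side:
$\Tr_{\widehat B}\bigl(TQ\ketbra{\psi}{\psi}QT^\dagger\bigr)\le\I_{K_A}\otimes\theta_{CS_AE}$, with $\Tr\theta_{CS_AE}=\lambda$.
This still holds because in $Q\ket\psi$ the register $K_B$ mirrors $K_A$. Lemma~\ref{lem:test-marginal}(iii),(i) then transfers the bound to $T\ket{\widetilde\psi}$. Its $K_A\widehat B$ marginal is the dephased $\widetilde\psi_{K_A\widehat B}$, which lies within $\delta/64$ of the dephased $\psi_{K_A\widehat B}$, the genuine channel ensemble.

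If you make the copy first rather than as an after-the-fact repair, your proof coincides with the paper's.
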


\begin{proof}
\textit{Step 1: copy Alice's key into her shield.} As discussed above, it is desirable to assume that Alice's shield $S_A$ contains a copy of the key in $K_A$. To this end, let $C$ be a fresh
register and let $T\ket m_{K_A}=\ket m_{K_A}\ket m_C$. Extending operators with identities
on other registers, replace $\rho, Q$ by $T\rho T^\dagger, TQT^\dagger$ respectively.
The operator $TQT^\dagger$ is still a projection, and since $\Pi$ acts on registers disjoint from those on which $T$ acts, we have
\begin{align}
 &\Tr\big[(T\rho T^\dagger)(TQT^\dagger)\big]= \Tr[\rho Q]=f,\\
 &\Pi(T\rho T^\dagger)\Pi=T\rho T^\dagger,\\
 &\Pi(TQT^\dagger)\Pi=T(\Pi Q\Pi)T^\dagger.\label{eq:Pi-copy-commute}
\end{align}
In the following, for simplicity of presentation, we absorb $C$ into $S_A$ and keep the notation $\rho,Q$
for the modified operators. We also write $L$ for $TL$, so $Q=LL^\dagger$ as before.

The range formula for $Q$ remains valid
with the unitaries $V_m$ replaced by isometries from $S_A^0S_B$ to $S_AS_B$, where $S_A^0$ is the original shield space. Explicitly, each original $V_m$ is replaced by the map
$\ket z\mapsto\ket m_C\otimes V_m\ket z$. We denote these new isometries by $V_m$ as well; they satisfy $V_m^\dagger V_m=\I_{S_A^0S_B}$. Every vector in the range
of the modified $Q\otimes \I_E$ takes the form
\begin{align}
 \ket\theta_{K_AS_AK_BS_BE}=\frac1{\sqrt M}\sum_m\ket{m,m}_{K_AK_B}
                 (V_m\otimes\I_E)\ket\zeta_{S^0_AS_B E},
\end{align}
for some vector $\ket\zeta_{S^0_AS_B E}$ and contains the key copy in $S_A$.

\medskip
\noindent
\textit{Step 2: choose a vector in the channel image.}
Let $\lambda=\op{\Pi Q\Pi}$. The support assumption gives
$\lambda\ge\Tr(\rho Q)=f$.
Choose a unit eigenvector $\ket\psi$ of $\Pi Q\Pi$ with eigenvalue $\lambda$ and put
$\ket\theta=Q\ket\psi$. Since $Q$ is a projection,
\begin{equation}\label{eq:private-eigenvector}
 \Pi\ket\psi=\ket\psi,\qquad
 \Pi\ket\theta=\lambda\ket\psi,\qquad
 \norm{\ket\theta}^2=\lambda.
\end{equation}
Define
\begin{align}
 \ket{\widetilde\psi}=\frac{1}{\norm{\widetilde{\Pi}\ket\theta}}\widetilde{\Pi}\ket\theta.
\end{align}
We note that by~\eqref{eq:Pi-copy-commute}, all the vectors $\ket\psi, \ket{\widetilde\psi}$ and $\ket\theta$ lie in the copying
subspace and retain the key copy in $S_A$.

As in the vector construction in Theorem~\ref{thm:upgrade},
$\widetilde{\Pi}\ket\theta=\lambda\ket\psi+\ket v$ with
$\Pi\ket v=0$ and $\norm v\le\varepsilon\sqrt\lambda$.
Consequently
\begin{equation}\label{eq:private-distance}
 \norm{\widetilde{\Pi}\ket\theta}\ge\lambda,\qquad
 P(\ket\psi,\ket{\widetilde\psi})
 =\frac{\norm v}{\norm{\widetilde{\Pi}\ket\theta}}
 \le\frac{\varepsilon}{\sqrt\lambda}
 \le\frac{\varepsilon}{\sqrt f}.
\end{equation}

\medskip

\noindent
\textit{Step 3: establish the two marginal bounds.}
Since $\ket\theta$ is in the range of $Q=LL^\dagger$, we can write
\begin{align}
 \ket\theta=\frac1{\sqrt M}\sum_m\ket{m,m}_{K_AK_B}
                 (V_m\otimes\I_E)\ket\zeta_{S^0_AS_B E},
 \qquad \norm\zeta^2=\lambda.
\end{align}
Put $\sigma_E=\Tr_{S_A^0S_B}\ketbra\zeta\zeta$, so
$\Tr\sigma_E=\lambda$, and set
$\theta_{K_AS_AK_BS_BE}=\ketbra\theta\theta$. Then
\begin{equation}\label{eq:private-two-marginals}
 \theta_{K_AE}=\frac{\I_{K_A}}M\otimes\sigma_E,
 \qquad
 \theta_{K_AS_AE}\le\I_{K_A}\otimes\theta_{S_AE},
 \qquad \Tr\sigma_E=\Tr\theta_{S_AE}=\lambda.
\end{equation}
For the first identity, tracing out $K_B$ makes $K_A$ classical, and
tracing out the shields removes the isometries $V_m$. For the second inequality, tracing out
$\widehat B=K_BS_B$ likewise leaves a state classical on $K_A$.
Each diagonal block is positive and bounded by their sum $\theta_{S_AE}$,
which proves the inequality.

\medskip

\noindent
\textit{Step 4: lower-bound secrecy entropy.}
Let $\widetilde\psi=\ketbra{\widetilde\psi}
                                    {\widetilde\psi}$.
Use the decomposition~\eqref{eq:Kdecomposition} of $\widetilde{\Pi}$.
Applying Lemma~\ref{lem:test-marginal}\textup{(iii), then (i)} under the partial trace over
$\widehat B$, and then tracing out $S_A$, gives
\begin{equation}\label{eq:private-secrecy-domination}
 \widetilde\psi_{K_AE}
 \le\frac{\I_{K_A}}M\otimes \mu_E,
 \qquad
 \mu_E=\frac{\Gamma^2}{\norm{\widetilde\Pi\ket\theta}^2}\sum_i\beta_iY_i\sigma_EY_i^\dagger,
 \qquad \Tr \mu_E\le\frac{\Gamma^2\lambda}{\norm{\widetilde\Pi\ket\theta}^2}
                      \le\frac{\Gamma^2}{\lambda}.
\end{equation}
The marginal is classical on $K_A$ because the fresh key copy in
$S_A$ is traced out. 
Then the definition of min-entropy and~\eqref{eq:private-secrecy-domination} give
\begin{equation}\label{eq:private-secrecy-entropy}
 H_{\min}(K_A|E)_{\widetilde\psi}\ge\log M-\log\frac{\Gamma^2}{\lambda}.
\end{equation}

\medskip

\noindent
\textit{Step 5: upper-bound Bob's remaining uncertainty.}
Apply the same two parts of Lemma~\ref{lem:test-marginal}, this time retaining
$S_AE$.
The second marginal estimate in~\eqref{eq:private-two-marginals} gives
 \begin{align}
& \widetilde\psi_{K_AS_AE}\le\I_{K_A}\otimes \mu_{S_AE},
 \qquad \Tr \mu_{S_AE}\le\frac{\Gamma^2}{\lambda},\\
 &\mu_{S_AE}=\frac{\Gamma^2}{\norm{\widetilde\Pi\ket\theta}^2}
 \sum_i\beta_i(\I_{S_A}\otimes Y_i)\theta_{S_AE}(\I_{S_A}\otimes Y_i)^\dagger.
 \end{align}
Hence $H_{\min}(K_A|S_AE)_{\widetilde\psi}\ge-\log\frac{\Gamma^2}{\lambda}$.
Since $\widetilde \psi$ is pure, min/max-entropy duality gives
\begin{equation}\label{eq:private-reliability-entropy}
 H_{\max}(K_A|\widehat B)_{\widetilde\psi}
 =-H_{\min}(K_A|S_AE)_{\widetilde\psi}\le \log\frac{\Gamma^2}{\lambda}.
\end{equation}
We note that the marginal on $K_A\widehat B$ is also classical on $K_A$.

\medskip

\noindent
\textit{Step 6: apply one-shot private achievability.}
Let $\psi=\ketbra{\psi}{\psi}$ and identify the
measured key $K_A$ with a classical variable $X$.
Because $\Pi\ket\psi=\ket\psi$, the pullback
$\ket\xi=(\I_{K_AS_A}\otimes U^\dagger)\ket\psi$ is a unit input
vector. Measuring $K_A$ and tracing out $S_A$ gives an input ensemble
$\{q_x,\rho_x\}$.
Its corresponding channel output is precisely $\psi_{X\widehat B E}$.
By~\eqref{eq:private-distance} and contractivity, each corresponding
pair of marginals has purified distance at most $\varepsilon/\sqrt{f}\le\delta/64$.
Consequently, increasing the smoothing radius, the two bounds above give
\begin{align}\label{eq:private-smooth-entropies}
 H_{\min}^{\delta/64}(X|E)_\psi&\ge\log M-\log\frac{\Gamma^2}{\lambda},\\
 H_{\max}^{\delta/64}(X|\widehat B)_\psi&\le\log\frac{\Gamma^2}{\lambda}.
\end{align}
Thus Fact~\ref{fact:private-oneshot} yields
\begin{align}
 \log M'\ge\log M-2\log\frac{\Gamma^2}{\lambda}-4\log\frac8\delta-16
 \ge\log M+2\log f-4\log\Gamma-4\log\frac8\delta-16,
\end{align}
where the last inequality uses $\lambda\ge f$.
The receiver isometry can be absorbed into the new decoder, so this is
a code for the original channel. If the lower bound is negative, a
one-message code suffices. Finally, choosing the test
in~\eqref{eq:test-vs-fidelity} proves the assertion for $F_{\rm key}$.
\end{proof}

\subsection{Exponential strong converse bound}
We use Theorem~\ref{thm:private-oneshot} together with the following weak
converse, in the same way as for classical communication.

\begin{fact}[Weak converse for private communication {\cite[Sections~II--III]{Devetak}}]
\label{fact:private-weak}
For every $r_0>P(\N)$, there are $0<\delta<1/2$ and an integer $n_0$
such that every private code for $\N^{\otimes n}$, $n\ge n_0$, with
decoding and secrecy errors both at most $\delta$ has message size $M'$
satisfying $\log M'\le nr_0$.
\end{fact}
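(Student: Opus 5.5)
The plan is a Fano-plus-continuity argument on a single block of $n$ channel uses. It compares the code's secrecy and reliability with the multi-letter private information and then uses $P^{(1)}(\N^{\otimes n})\le nP(\N)$. This last inequality is immediate from~\eqref{eq:private-capacity}, because $P(\N)$ is defined there as a supremum over block lengths.

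Fix a private code for $\N^{\otimes n}$ with inputs $\rho_m$ on $A^n$, message size $M'$, and both errors at most $\delta$. Let $X$ be uniform on the $M'$ messages, and consider the ensemble state $\omega_{XB^nE^n}=\frac1{M'}\sum_m\ketbra mm\otimes U^{\otimes n}\rho_mU^{\dagger\otimes n}$. Since $X$ is uniform, I start from the exact identity
\begin{align}
 \log M' = H(X)=\bigl[I(X:B^n)-I(X:E^n)\bigr]+H(X|B^n)+I(X:E^n).
\end{align}
The bracket is at most $P^{(1)}(\N^{\otimes n})\le nP(\N)$ by~\eqref{eq:private-information}. It remains to bound the last two terms by $o(1)\log M'+O(1)$, uniformly in $n$.

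\emph{Reliability term.} Bob's decoding POVM maps $B^n$ to the classical label $K_B$, so data processing gives $H(X|B^n)\le H(X|K_B)$. Fano's inequality with average error $1-p_{\rm succ}\le\delta$ then gives $H(X|K_B)\le h(\delta)+\delta\log M'$.

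\emph{Secrecy term.} Write $I(X:E^n)=\log M'-H(X|E^n)_\omega$. By~\eqref{eq:operational-errors} there is a state $\sigma_E$ with $\frac12\|\omega_{XE^n}-\frac{\I}{M'}\otimes\sigma_E\|_1\le\delta$, and the ideal state has $H(X|E^n)=\log M'$. The Alicki--Fannes--Winter continuity bound for conditional entropy with a classical conditioned system $X$ gives
\begin{align}
 \log M'-H(X|E^n)_\omega\le\delta\log M'+(1+\delta)h\!\Bigl(\tfrac{\delta}{1+\delta}\Bigr).
\end{align}
The key point is that this bound depends on $\dim X=M'$ and not on $\dim E^n$, which grows exponentially.

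Combining the three bounds yields $(1-2\delta)\log M'\le nP(\N)+g(\delta)$, where $g(\delta)=h(\delta)+(1+\delta)h(\delta/(1+\delta))$ is a constant independent of $n$. Given $r_0>P(\N)$, I choose $\delta<1/2$ small enough that $P(\N)/(1-2\delta)<(P(\N)+r_0)/2$. I then choose $n_0$ so that $g(\delta)/(1-2\delta)\le n(r_0-P(\N))/2$ for all $n\ge n_0$. This gives $\log M'\le nr_0$.

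The only delicate step is the secrecy continuity estimate, which must avoid any dependence on the environment dimension. That is exactly why I use the classical-$X$ version of the continuity bound.
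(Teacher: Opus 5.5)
Your argument is correct and is the standard Fano-plus-continuity converse that the paper imports from Devetak without reproducing it. Your explicit bound $(1-2\delta)\log M'\le nP(\N)+g(\delta)$ is exactly what converts that asymptotic converse into the fixed-$\delta$ form stated here. One small overstatement: avoiding dependence on $\dim E^n$ is convenient but not essential. A Fannes-type estimate scaling with $\log\dim E^n=n\log\dim E$ would only cost a rate penalty of order $\delta\log\dim E$, and a fixed small $\delta$ absorbs that as well.
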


\begin{proof}[Proof of Theorem~\ref{thm:private-exponential}]
Set $r=P(\N)+\gamma$, choose $P(\N)<r_0<r$, and fix $\delta$ from
Fact~\ref{fact:private-weak}. Put
$\kappa=\min\{(\dim B)^2,(\dim E)^2\}$ for the physical channel.
Choose $0<\theta<1/2$ such that
$4h(\theta)+4\theta\log(2\kappa)<r-r_0$, and set
\begin{align}
  \alpha_\gamma=\min\left\{\frac{c\theta^2}{4},
 \frac{r-r_0-4h(\theta)-4\theta\log(2\kappa)}4\right\}>0.
\end{align}
For sufficiently large $n$, $D=\lfloor\theta n\rfloor$ satisfies
$3\le D\le n/2$ and $D\ge\theta n/2$.
Lemma~\ref{lem:polynomial} approximates the physical channel-image
projector $\Pi^{\otimes n}$ with error at most $2^{-c\theta^2n/4}$ and
$\log\Gamma_{n,D}\le n[h(\theta)+\theta\log(2\kappa)]$.
These bounds persist after the receiver isometry.
If $f_n\ge2^{-\alpha_\gamma n}$, then uniformly over the codes and tests,
\begin{align}
 \frac{2^{-cD^2/n}}{\sqrt{f_n}}
 \le2^{-c\theta^2n/4+\alpha_\gamma n/2}
 \le2^{-c\theta^2n/8}\le\frac\delta{64}
\end{align}
for all sufficiently large $n$.
Applying Theorem~\ref{thm:private-oneshot} to $\N^{\otimes n}$ gives a
private code with both errors at most $\delta$ and
\begin{align}
 \log M'
 &\ge\log M_n+2\log f_n-4\log\Gamma_{n,D}-4\log\frac8\delta-16\\
 &\ge n[r-2\alpha_\gamma-4h(\theta)-4\theta\log(2\kappa)]
                                  -4\log\frac8\delta-16\\
 &\ge n(r_0+2\alpha_\gamma)-4\log\frac8\delta-16.
\end{align}
For sufficiently large $n$ this exceeds $nr_0$, contradicting
Fact~\ref{fact:private-weak}. Therefore $f_n<2^{-\alpha_\gamma n}$.
Equation~\eqref{eq:test-vs-fidelity} gives the assertion for $F_{{\rm key},n}$.
\end{proof}

\begin{corollary}[Reliability--secrecy tradeoff]\label{cor:tradeoff}
At a rate at least $P(\N)+\gamma$, every sufficiently long private code
satisfies
\begin{equation}\label{eq:tradeoff}
 \boxed{\quad
 p_{{\rm succ},n}
 \le\delta_{{\rm sec},n}+2^{-\alpha_\gamma n/2}.
 \quad}
\end{equation}
Equivalently, with decoding error $\varepsilon_{{\rm dec},n}=1-p_{{\rm succ},n}$,
\begin{align}
 \varepsilon_{{\rm dec},n}+\delta_{{\rm sec},n}
 \ge1-2^{-\alpha_\gamma n/2}.
\end{align}
In particular, vanishing secrecy error forces decoding success to vanish.
\end{corollary}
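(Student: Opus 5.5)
The plan is to combine Theorem~\ref{thm:private-exponential} with the fidelity lower bound~\eqref{eq:private-fidelity-lower}. Fix $\gamma>0$ and let $\alpha_\gamma$ and $n_0$ be as in Theorem~\ref{thm:private-exponential}. Then every $(n,M_n)$ private code with $n\ge n_0$ and rate at least $P(\N)+\gamma$ satisfies $F_{{\rm key},n}<2^{-\alpha_\gamma n}$.

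Next apply~\eqref{eq:private-fidelity-lower} to this code. It gives
\begin{align}
 (p_{{\rm succ},n}-\delta_{{\rm sec},n})_+^2\le F_{{\rm key},n}<2^{-\alpha_\gamma n}.
\end{align}
Taking square roots, which are monotone on nonnegative reals, yields $(p_{{\rm succ},n}-\delta_{{\rm sec},n})_+<2^{-\alpha_\gamma n/2}$. Since $x\le x_+$ for every real $x$, we get $p_{{\rm succ},n}-\delta_{{\rm sec},n}<2^{-\alpha_\gamma n/2}$, which is~\eqref{eq:tradeoff}. The case where the difference is negative is trivial, so no case split is really needed.

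The equivalent form follows by substituting $p_{{\rm succ},n}=1-\varepsilon_{{\rm dec},n}$ and rearranging. For the final remark, if $\delta_{{\rm sec},n}\to0$ then the right-hand side of~\eqref{eq:tradeoff} tends to zero, so $p_{{\rm succ},n}\to0$.

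There is no substantive obstacle here, since all the work sits in Theorem~\ref{thm:private-exponential} and in~\eqref{eq:private-fidelity-lower}. The one point to check is that~\eqref{eq:private-fidelity-lower} holds for the $n$-use code with the same definitions, i.e., with $\N$ replaced by $\N^{\otimes n}$. Its derivation is one-shot and channel-independent, so this holds.
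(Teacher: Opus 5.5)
Your proposal is correct and follows the paper's own proof, which likewise combines \eqref{eq:private-fidelity-lower} with the bound $F_{{\rm key},n}<2^{-\alpha_\gamma n}$ from Theorem~\ref{thm:private-exponential}. Your square-root step, $x\le x_+$, and your remark that \eqref{eq:private-fidelity-lower} holds for $\N^{\otimes n}$ fill in exactly the small details the paper leaves implicit.
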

\begin{proof}
Combine~\eqref{eq:private-fidelity-lower} with
Theorem~\ref{thm:private-exponential}.
\end{proof}

The exponential statement concerns secret-key fidelity. Equation~\eqref{eq:tradeoff}
also gives exponential decay of decoding success whenever the secrecy
error itself decays exponentially. If secrecy error merely tends to
zero, the conclusion for decoding success is convergence to zero.

\section{Conclusion}\label{sec:conclusion}
The strong converses in this work follow from a fully quantum version of the blowing-up strategy. A code whose fidelity is sufficiently large can be converted into a high-fidelity code. At a rate above capacity, a code with fidelity larger than a suitable exponential threshold would therefore yield a high-fidelity code that violates the weak converse.

The argument separates two ingredients. The one-shot amplification theorem depends only on the accuracy and projective norm of the projector approximation. The memoryless channel structure supplies such an approximation through a low-degree polynomial, but is not required by the conversion itself. This separates the coding argument from the approximation problem and suggests a way to investigate other channel families by studying their Stinespring image projectors. Understanding when approximations exist, and whether their projective norm has a more direct operational interpretation, are natural directions for further work.

The resulting exponents are not optimized. Sharper polynomial approximations or channel-specific bounds on the projective tensor norm could improve the quantitative tradeoff.

\paragraph*{Acknowledgements.}
M.T. acknowledges support from the National Research Foundation Investigatorship Award (NRF-NRFI10-2024-0006) and the National Research Foundation, Singapore, through the National Quantum Office, hosted by A*STAR, under its Centre for Quantum Technologies Funding Initiative (S24Q2d0009).

\paragraph*{Use of artificial intelligence.} OpenAI Codex with ChatGPT~6 Astra assisted with exploring and checking proof strategies, and with preparing the manuscript. The current exposition of the proofs is due to the authors. The authors are responsible for the mathematical claims, the final presentation, and any remaining errors.


\appendix
\section{Alternative proof of the c-q blowing-up lemma}\label{app:product-blowing}

\begin{proof}[Proof of Theorem~\ref{thm:geometric}]
Our proof is based on the low-degree polynomial construction from Section~\ref{sec:approximation}.  First suppose $3\le D\le n$. Choose a purification $\ket{\omega_i}_{B_iR_i}$ of $\rho_i$, with all $R_i$'s being copies of the
same space $R$, and set
\begin{align}
 \ket\Omega=\bigotimes_{i=1}^n\ket{\omega_i},\qquad
 \Pi_i=\ketbra{\omega_i}{\omega_i},\qquad
 \Pi_\Omega=\ketbra\Omega\Omega= \Pi_1\cdots \Pi_n.
\end{align}
Applying Lemma~\ref{lem:polynomial} directly to the projections $\Pi_i$
on the pairs $B_iR_i$ gives an operator $\widetilde{\Pi}$ satisfying
\begin{align}
 \widetilde{\Pi}\Pi_\Omega=\Pi_\Omega \widetilde{\Pi}=\Pi_\Omega,\qquad
 \op{\widetilde{\Pi}-\Pi_\Omega}\le 2^{-cD^2/n},
\end{align}
and $\widetilde{\Pi}$ is a sum of operators supported on at most $D$ pairs $B_iR_i$.

Put $\ket v=(P\otimes\I_{R^n})\ket\Omega$. Then
\begin{align}
 \norm{\ket v}^2=p,\qquad
 \Pi_\Omega\ket v=p\ket\Omega,\qquad
 \norm{(\I-\Pi_\Omega)\ket v}^2=p(1-p).
\end{align}
Every operator on pairs indexed by $S$ is a sum of terms
$X_{B_S}\otimes Y_{R_S}$. Since $\ket v\in\ran (P)\otimes R^n$, the
support property of $\widetilde{\Pi}$ implies
$\widetilde{\Pi}\ket v\in\ran \big(P^{[D]}\big)\otimes R^n$.
Writing $Q=P^{[D]}\otimes\I_{R^n}$, we therefore obtain
\begin{align}
 p\norm{(\I-Q)\ket\Omega}
 &=\norm{(\I-Q)(\Pi_\Omega-\widetilde{\Pi})\ket v}\\
 &\le \op{\widetilde{\Pi}-\Pi_\Omega}\,
          \norm{(\I-\Pi_\Omega)\ket v}
 \le 2^{-cD^2/n}\sqrt{p(1-p)}.
\end{align}
Squaring gives the stronger estimate
\begin{equation}\label{eq:stronger-geometric}
 \Tr\left[\rho\big(\I-P^{[D]}\big)\right]
 \le\frac{1-p}{p}\,2^{-2cD^2/n},
 \qquad 3\le D\le n,
\end{equation}
and hence~\eqref{eq:starting}.
For $D\le2$, decrease $c$ so that $c\le1$ and use
$\Tr[\rho(\I-P^{[D]})]\le1-p\le(4p)^{-1}$ and
$D^2/n\le2$. This proves~\eqref{eq:starting} in all cases.
\end{proof}

\end{document}